 \newcommand{\Rmnum}[1]{\expandafter\@slowromancap\romannumeral #1@}
\newtheorem{theorem}{Theorem}[section]
\newtheorem{definition}{Definition}[section]
\newtheorem{proposition}[theorem]{Proposition}
\newtheorem{lemma}[theorem]{Lemma}
\newtheorem{remark}[theorem]{Remark}
\newcommand{\R}{{\mathbb R}}
\newcommand{\C}{{\mathbb C}}
\newcommand{\be}{\begin{equation}}
\newcommand{\ee}{\end{equation}}
\newcommand{\bea}{\begin{eqnarray}}
\newcommand{\eea}{\end{eqnarray}}
\newcommand{\ba}{\begin{array}}
\newcommand{\ea}{\end{array}}
\newcommand{\ol}{\overline}
\newcommand{\ra}{\rightarrow}
\newcommand{\id}{\mathbb{I}}
\newcommand{\re}{\mathrm{Re}}
\newcommand{\im}{\mathrm{Im}}
\newcommand{\eps}{\varepsilon}
\newcommand{\sig}{\sigma}
\newcommand{\Sig}{\Sigma}
\newcommand{\lam}{\lambda}
\newcommand{\gam}{\gamma}
\newcommand{\Gam}{\Gamma}
\newcommand{\om}{\omega}
\newcommand{\Om}{\Omega}
\newcommand{\dta}{\delta}
\newcommand{\Dta}{\Delta}
\newcommand{\tha}{\theta}
\numberwithin{equation}{section}
\begin{document}
\title[Long-time asymptotics for the FL equation]{Leading-order temporal asymptotics of the Fokas-Lenells Equation without solitons}

\author[J.Xu]{Jian Xu}
\address{School of Mathematical Sciences\\
Fudan University\\
Shanghai 200433\\
People's  Republic of China}
\email{11110180024@fudan.edu.cn}

\author[E.Fan]{Engui Fan*}
\address{School of Mathematical Sciences, Institute of Mathematics and Key Laboratory of Mathematics for Nonlinear Science\\
Fudan University\\
Shanghai 200433\\
People's  Republic of China}
\email{correspondence author:faneg@fudan.edu.cn}

\keywords{Riemann-Hilbert problem, Fokas-Lenells equation, Initial value problem, Deift-Zhou method}

\date{\today}

\begin{abstract}
We use the Deift-Zhou method to obtain, in the solitonless sector, the leading order asymptotic of the solution
to the Cauchy problem of the Fokas-Lenells equation as $t\ra+\infty$ on the full-line .

\end{abstract}

\maketitle

\section{Introduction}
The Fokas-Lenells equation (FL equation shortly) is a completely integrable nonlinear partial differential equation which has been
derived as an integrable generalization of the nonlinear Schr\"odinger equation (NLS equation) using bi-Hamiltonian methods \cite{f}.
In the context of nonlinear optics, the FL equation models the propagation of nonlinear light pulses in monomode optical fibers when
certain higher-order nonlinear effects are taken into account \cite{l}. The FL equation is related to the NLS equation in the same way
as the Camassa-Holm equation associated with the KdV equation. The soliton solutions of the FL equation have been constructed via the
Riemann-Hilbert method in \cite{lf}. And The initial-boundary value problem for the FL equation on the half-line was studied in \cite{lf2}.
A simple N-bright-soliton solution was given by Lenells \cite{l2} and the N-dark soliton solution was obtained by means of B\"acklund
transformation \cite{v}. And Matsuno get the bright and dark soliton solutions for the FL equation in \cite{ym} and \cite{ym2} by a direct method.
\par
In this paper, we use the Riemann-Hilbert problem showed in
\cite{lf} to get the long-time asymptotics behavior of the solution
of the FL equation (\ref{FLe}) by the nonlinear steepest descent
method or Deift-Zhou method. The nonlinear steepest descent method
is introduced by Deift and Zhou in \cite{dz} in 1993, the history of
the long-time asymptotics problem also canbe found in \cite{dz}, and
it is the first time to obtain the long-time asymptotics behavior of
the solution rigorously, for the MKdV equation. Then it becomes a
most power tool for the long-time asymptotics of the nonlinear
evolution equations in complete integrable system, for example, the
non-focusing NLS equation \cite{diz}, the Sine-Gordon equation
\cite{cvz}, the KdV equation \cite{gt}, the Cammasa-Holm equation
\cite{amkst}, and so on. Deift and his collaborators extend this
method to analyse the small-dispersion problem for the KdV equation
and the semiclassical problem of the focusing NLS equation. And this
is also a very usefull tool in the asymptotics problem in orthogonal
polynomials and large $n$ limit problem in random matrix theory.
\par
For several soliton-bearing equations, for example, KdV,
Landau-Lifshitz, and NLS, and the reduced Maxwell-Bloch system, it
is well known that the dominant $O(1)$ asymptotic $t\rightarrow
\infty$ effect of the continuous spectra on the multisoliton
solutions is a shift in phase and position of their constituent
solitons \cite{s}. The purpose of our studies is to derive an
explicit functional form for the next-to-leading-order
$O(t^{-\frac{1}{2}})$ term of the effect of this interaction for the
Fokas-Lenells equation. An asymptotic investigation of the solution
can be divided into two stages: (i) the investigation of the
continuum (solitonless) component of the solution \cite{mz}; and
(ii) the inclusion of the soliton component via the application of a
'dressing' procedure \cite{zs} to the continuum background. The
purpose of this paper is to carry out, systematically, stage (i) of
the abovementioned asymptotic paradigm (since this phase of the
asymptotic procedure is rather technical and long in itself, the
completed results for stage (ii) are the subject of a forthcoming
article ). The results obtained in this paper are formulated as
theorems \ref{mainresult}.
\par
The outline of this paper is as follows: In Section 2 we recall some classic definition of Riemann-Hilbert problem and then, we write down the Riemann-Hilbert problem of the Fokas-Lenells equation. In Section 3 we analyse the leading order asymptotics of the solution of the Fokas-Lenells equation as $t\rightarrow +\infty$ via the Deift-Zhou method.

\section{The Riemann-Hilbert problem for the Fokas-Lenells equation}

\subsection{What a Riemann-Hilbert problem is}
In this subsection, we first explain what a Riemann-Hilbert problem is
\begin{definition}
Let the contour $\Gam$ be the union of a finite number of smooth and oriented curves (orientation means that each arc of $\Gam$ has a positive side and a negative side: the positive (respectively, negative) side lies to the left (respectively, right) as one traverses the contour in the direction of the arrow) on the Riemann sphere $\bar \C$ (i.e. the complex plane with the point at infinity) such that $\bar \C\backslash \Gam$ has only a finite number of connected components. Let $V(k)$ be an $2\times 2$ matrix defined on the contour $\Gam$. The Riemann-Hilbert problem $(\Gam,V)$  is the problem of finding an $2\times 2$ matrix-valued function $M(k)$ that satisfies
\begin{enumerate}
\item $M(k)$ is analytic for $k\in \bar \C\backslash \Gam$ and extends continuously to the contour $\Gam$.
\item $M_+(k)=M_-(k)V(k),\quad k\in\Gam$.
\item $M(k)\rightarrow \id,\quad as \quad k\rightarrow \infty.$
\end{enumerate}
\end{definition}
\par
The Riemann-Hilbert problem can be solved as follows (see, \cite{bc}). Assume that $V(k)$ admits some factorization
\be\label{BCRHPfac}
V(k)=b_-^{-1}(k)b_+(k),
\ee
where
\be\label{BCRHPbpm}
b_+(k)=\om_+(k)-\id,\quad b_-(k)=\id-\om_-(k).
\ee
And define
\be\label{BCRHPom}
\om(k)=\om_+(k)+\om_-(k).
\ee
\par
Let
\be\label{BCRHPcauchy}
(C_{\pm}f)(k)=\int_{\Gam}\frac{f(\xi)}{\xi-k_{\pm}}\frac{d\xi}{2\pi i},\quad k\in\Gam,f\in L^2(\Gam),
\ee
denote the Cauchy operator on $\Gam$. As is well known, the operator $C_{\pm}$ are bounded from $L^2(\Gam)$ to $L^2(\Gam)$, and $C_+-C_-=\Rmnum{1}$, here $\Rmnum{1}$ denote the identify operator.
\par
Define
\be\label{BCRHPCom}
C_{\om}f=C_+(f\om_-)+C_-(f\om_+)
\ee
for $2\times 2$ matrix-valued functions $f$. Let $\mu$ be the solution of the basic inverse equation
\be\label{BCRHPinverseeq}
\mu=\id+C_{\om}\mu.
\ee
Then
\be\label{BCRHPsol}
M(k)=\id+\int_{\Gam}\frac{\mu(\xi)\om(\xi)}{\xi-k}\frac{d\xi}{2\pi i},\quad k\in \bar \C\backslash \Gam,
\ee
is the solution of the Riemann-Hilbert problem. (See \cite{dz},P.322).

\subsection{Riemann-Hilbert problem for FL equation}
The Fokas-Lenells equation is
\be\label{FLe1}
iu_t-\nu u_{tx}+\gamma u_{xx}+\sig |u|^2(u+i\nu u_x)=0,\quad \sig=\pm 1.
\ee
where $\nu$ and $\gamma$ are constants.
\par
If we replaced $u(x,t)$ by $u(-x,t)$, we can see the sign of $\nu$ is the same as the $\gamma$'s. Hence, we can assume that
$\alpha=\frac{\gamma}{\nu}>0$ and $\beta=\frac{1}{\nu}$. Then we change the variable as follows:
\[
u\rightarrow \beta\sqrt{\alpha}e^{i\beta x}u,\qquad \sig\rightarrow -\sig
\]
the equation (\ref{FLe1}) can be changed into the desired form:
\be\label{FLe}
u_{tx}+\alpha\beta^2u-2i\alpha\beta u_x-\alpha u_{xx}+\sig i\alpha\beta^2|u|^2u_x=0,\quad \sig=\pm 1.
\ee
This equation admits Lax pair
\be\label{FLelax}
\left\{
\ba{l}
\Phi_x+ik^2\sig_3\Phi=kU_x\Phi\\
\Phi_t+i\eta^2\sig_3\Phi=[\alpha kU_x+\frac{i\alpha\beta^2}{2}\sig_3(\frac{1}{k}U-U^2)]\Phi.
\ea
\right.
\ee
where $U=\left(\ba{cc}0&u\\v&0\ea\right)$, $\eta=\sqrt{\alpha}(k-\frac{\beta}{2k})$  with $v=\sig \bar u$. And in the following of the paper
we just consider $\sig=1$.
\par
According to the paper \cite{lf}, we can get the Riemann-Hilbert problem of the Fokas-Lenells equation (\ref{FLe}) as follows:
\be\label{RHP}
\left\{
\ba{l}
M_+(x,t,k)=M_-(x,t,k)J(x,t,k),\quad k\in \R\cup i\R,\\
M(x,t,k)\rightarrow \id,\qquad k\rightarrow \infty.
\ea
\right.
\ee
\begin{figure}[th]
\centering
\includegraphics{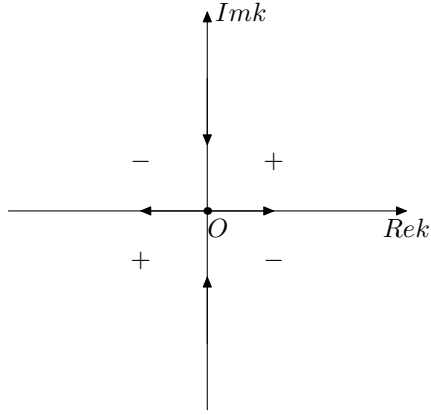}
\caption{The jump contour in the complex $k-$plane.}
\end{figure}
where the function $M(x,t,k)$ is defined by (4.24) in \cite{lf} and the jump matrix $J(x,t,k)$ is defined by
\be\label{Jump}
J(x,t,k)=e^{(-ik^2x-i\eta^2t)\hat \sig_3}\left(\ba{cc}\frac{1}{a(k)\ol{a(\bar k)}}&\frac{b(k)}{\ol{a(\bar k)}}\\
-\frac{\ol{b(\bar k)}}{a(k)}&1\ea\right)
\ee
with $\eta=\sqrt{\alpha}(k-\frac{\beta}{2k})$ and $a(k),b(k)$ are defined by (4.26) in \cite{lf}.
And $e^{\hat \sig_3}A=e^{\sig_3}Ae^{-\sig_3}$, here $A$ is a $2\times 2$ matrix.
We can also know that $a(k)\ol{a(\bar k)}-b(k)\ol{b(\bar k)}=1$ and $a(k)=\ol{a(\bar k)}$, $b(k)=-\ol{b(\bar k)}$ from \cite{lf}.
\par
We introduce $r(k)=\frac{\ol{b(\bar k)}}{a(k)}$, then the jump matrix $J(x,t,k)$ can be transformed into the following form:
\be\label{useJump}
J(x,t,k)=e^{(-ik^2x-i\eta^2t)\hat \sig_3}\left(\ba{cc}1-r(k)\ol{r(\bar k)}&\ol{r(\bar k)}\\
-r(k)&1\ea\right)
\ee
\par
The solution of Fokas-Lenells equation (\ref{FLe}) can be expressed by
\be\label{solfromRHP}
u_x(x,t)=2im(x,t)e^{4i\int_{-\infty}^{x}|m|^2(x',t)dx'}
\ee
where
\be\label{m}
m(x,t)=\lim_{k\rightarrow\infty}(kM(x,t,k))_{12}
\ee
with $M(x,t,k)$ is the unique solution of the Riemann-Hilbert problem (\ref{RHP}).

\begin{remark}
In this paper, we consider the case when $a(k)$ has no zeros, that is without solitons, the unique solvability of the Riemann-Hilbert problem in (\ref{RHP}) is a consequence of a vanishing lemma 4.2 in \cite{lf}.
\end{remark}

\section{The Long-time asymptotics for the Fokas-Lenells equation}
In this section, we get the asymptotics behavior of the solution of the Fokas-Lenells equation (\ref{FLe}) as $t\rightarrow \infty$ by
the Deift-Zhou method \cite{dz}.
\par
Let $F(x,t,k)=k^2x+\eta^2t$ and $\tha(k)=k^2\frac{x}{t}+\eta^2$, then $F=t\tha$.
\subsection{Case 1: $\frac{x}{t}+\alpha<0$}

In this case, the real part of $i\tha(k)$ has the signature
\be\label{case1Reitha}
\re{i\tha(k)}\left\{
\ba{l}
>0,\quad if \quad \im{k^2}>0,\\
<0,\quad if \quad \im k^2<0.
\ea
\right.
\ee
showed in Figure 2.
\begin{figure}[th]
\centering
\includegraphics{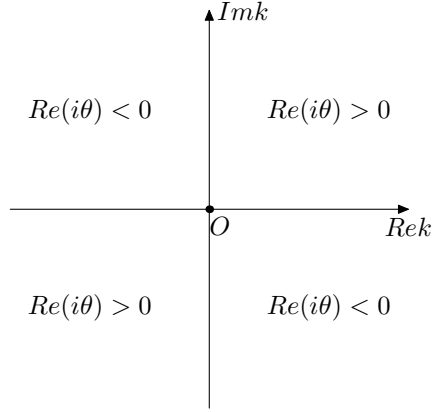}
\caption{The signature table of $\re(i\tha)$ in the case 1.}
\end{figure}
\par
The jump matrix $J(x,t,k)$ ,i.e. (\ref{useJump}), has an factorization
\be\label{jumpfaccase1}
J(x,t,k)=
\left(\ba{cc}1&0\\\frac{-r(k)}{1-r(k)\ol{r(\bar k)}}e^{2it\tha(k)}&1\ea\right)\left(\ba{cc}1-r(k)\ol{r(\bar k)}&0\\0&\frac{1}{1-r(k)\ol{r(\bar k)}}\ea\right)\left(\ba{cc}1&\frac{\ol{r(\bar k)}}{1-r(k)\ol{r(\bar k)}}e^{-2it\tha(k)}\\0&1\ea\right),
\ee
We find that the transformation
\be\label{transformcase1}
\tilde M(x,t,k)=M(x,t,k)\tilde \dta^{-\sig_3},
\ee
leads to the Riemann-Hilbert problem
\be\label{transRHPcase1}
\left\{
\ba{l}
\tilde M_+(x,t,k)=\tilde M_-(x,t,k)\tilde J(x,t,k),\qquad \im k^2=0,\\
\tilde M\rightarrow \id,\qquad k\rightarrow \infty.
\ea
\right.
\ee
with jump matrix $\tilde J(x,t,k)$ that admits the lower/upper factorization
\be\label{transjumpcase1}
\tilde J(x,t,k)=\left(\ba{cc}1&0\\\frac{-r(k)}{1-r(k)\ol{r(\bar k)}}\frac{1}{\tilde \dta^2(k)_-}e^{2it\tha(k)}&1\ea\right)
\left(\ba{cc}1&\frac{\ol{r(\bar k)}}{1-r(k)\ol{r(\bar k)}}\tilde \dta^2(k)_+e^{-2it\tha(k)}\\0&1\ea\right)=\tilde J_1^{-1}\tilde J_2
\ee
if the function $\tilde \dta(k)$ solves the scalar Riemann-Hilbert problem
\be\label{case1tdtaRHP}
\left\{
\ba{ll}
\tilde \dta_+(k)=\tilde \dta_-(k)(1-r(k)\ol{r(\bar k)}),&\im k^2=0,\\
\tilde \dta(k)\rightarrow 1,&k\rightarrow \infty.
\ea
\right.
\ee
The solution for the Riemann-Hilbert problem for $\tilde \dta$ has the explicit form
\be\label{case1tdtasol}
\tilde \dta(k)=e^{\frac{1}{2\pi i}\int_{\R\cup i\R}\frac{\log{(1-r(k')\ol{r(\bar k')})}}{k'-k}dk'}.
\ee
\par
Without loss of generality, we may assume that the left factor of (\ref{transjumpcase1}) extends analytically to the region $\im k^2<0$ and continuous in the closure of the region. Then the right factor extends the region $\im k^2>0$.
\par
Our Riemann-Hilbert problem on $\R\cup i\R$ is equivalent to a new Riemann-Hilbert problem on the contour
\be\label{case1newcontour}
\tilde \Sig=e^{i\frac{\pi}{6}}\R\cup e^{-i\frac{\pi}{6}}\R\cup e^{i\frac{\pi}{3}}\R\cup e^{-i\frac{\pi}{3}}\R,
\ee
where the orientation of the contour $\tilde \Sig$ and the new function $\hat M(x,t,k)$ are given in the following
\be\label{case1hatM}
\hat M(x,t,k)=\left\{
\ba{ll}
\tilde M(x,t,k),&k\in \hat D_2\cup\hat D_5\cup \hat D_8\cup\hat D_{11},\\
\tilde M(x,t,k)\tilde J_2^{-1},&k\in \hat D_1\cup\hat D_3\cup\hat D_7\cup \hat D_9,\\
\tilde M(x,t,k)\tilde J_1^{-1},&k\in \hat D_4\cup\hat D_6\cup\hat D_{10}\cup\hat D_{12}.
\ea
\right.
\ee
where the domains $\{\hat D_j\}_1^{12}$ are showed in Figure 3.
\begin{figure}[th]
\centering
\includegraphics{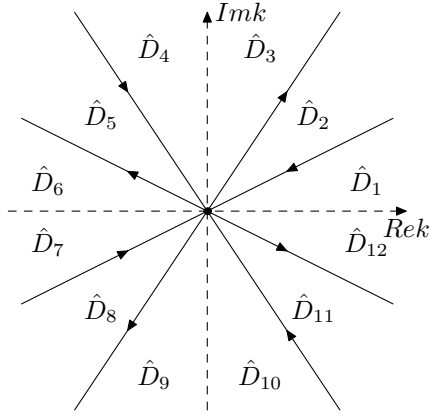}
\caption{The contour $\tilde \Sig$ and regions in case 1.}
\end{figure}
Then one can verify
\be\label{case1hatMRHP}
\left\{
\ba{ll}
\mbox{$\hat M$ is analytic off $\tilde \Sig$ ($\hat M$ is analytic across $\R\cup i\R$),}&\\
\hat M_+(x,t,k)=\hat M_-(x,t,k)\hat J(x,t,k),&k\in\tilde \Sig,\\
\hat M(x,t,k)\rightarrow \id,&k\rightarrow\infty.
\ea
\right.
\ee
where
\be\label{casehatjump}
\hat J(x,t,k)=\left\{
                  \ba{ll}
                    \left(\ba{cc}1&\frac{\ol{r(\bar k)}}{1-r(k)\ol{r(\bar k)}}\tilde \dta^2(k)_+e^{-2it\tha(k)}\\0&1\ea\right)^{-1},&k\in \tilde \Sig\cap\{\im k^2>0\},\\
                    \left(\ba{cc}1&0\\\frac{-r(k)}{1-r(k)\ol{r(\bar k)}}\frac{1}{\tilde \dta^2(k)_-}e^{2it\tha(k)}&1\ea\right),&k\in \tilde \Sig\cap\{\im k^2<0\}.
                   \ea
                \right.
\ee
\begin{theorem}
As $t\rightarrow \infty$,
\be\label{case1asy}
||\hat M_{\pm}(x,t,k)-\id||_{L^2(\tilde \Sig)}\rightarrow 0,\quad rapidly.
\ee
$u_x(x,t)$ and therefore $u$ decay rapidly as $t\rightarrow\infty$.
\end{theorem}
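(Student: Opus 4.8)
The plan is to read (\ref{case1hatMRHP}) as a \emph{small-norm} Riemann--Hilbert problem. Once I show that the jump $\hat J$ of (\ref{casehatjump}) differs from $\id$ by a quantity that is exponentially small in $t$, uniformly on $\tilde\Sig$, the Beals--Coifman machinery recalled in Section~2.1 solves the problem by a convergent Neumann series and delivers both (\ref{case1asy}) and the rapid decay of $u$. No local parametrix is needed, precisely because Case~1 has no stationary phase point on the axes.

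First I would pin down the sign and the size of $\re(i\tha)$ on the deformed contour. Using $\eta^2=\al(k-\tfrac{\beta}{2k})^2$ one has $\tha(k)=(\tfrac{x}{t}+\al)k^2-\al\beta+\tfrac{\al\beta^2}{4k^2}$, so on a ray $k=se^{i\phi}$ a direct computation gives
\[
\re(i\tha(k))=\sin(2\phi)\Big(-(\tfrac{x}{t}+\al)s^2+\tfrac{\al\beta^2}{4s^2}\Big).
\]
Since $\tfrac{x}{t}+\al<0$ in this case the bracket is strictly positive for all $s>0$, with minimum $\beta\sqrt{-\al(\tfrac{x}{t}+\al)}$ attained at $s^2=\tfrac{\beta}{2}\sqrt{\al/(-(\tfrac{x}{t}+\al))}$, and it blows up as $s\ra0$ and $s\ra\infty$. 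Thus $\re(i\tha)$ has exactly the signature (\ref{case1Reitha}): positive on the part of $\tilde\Sig$ in $\{\im k^2>0\}$ and negative on the part in $\{\im k^2<0\}$. As all four rays stay away from the diagonals $\arg k\in\{\pm\tfrac\pi4,\pm\tfrac{3\pi}4\}$ where $\tha'$ vanishes, $|\sin2\phi|$ is bounded below on $\tilde\Sig$, so there is a constant $\kappa=\kappa(x/t)>0$ with $|\re(i\tha(k))|\ge\kappa$ for every $k\in\tilde\Sig$.

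Next I would turn this into decay of the jump. The off-diagonal entries of $\hat J$ in (\ref{casehatjump}) carry, on $\{\im k^2>0\}$, the factor $\tilde\dta^2(k)_+e^{-2it\tha}$ and, on $\{\im k^2<0\}$, the factor $\tilde\dta^{-2}(k)_-e^{2it\tha}$, whose moduli are $|\tilde\dta|^{\pm2}e^{-2t\re(i\tha)}$ and $|\tilde\dta|^{\mp2}e^{2t\re(i\tha)}$. The determinant relation $a(k)\ol{a(\bar k)}-b(k)\ol{b(\bar k)}=1$ gives $1-r(k)\ol{r(\bar k)}=1/(a(k)\ol{a(\bar k)})$, which is bounded above and away from $0$ because $a$ has no zeros; hence (\ref{case1tdtasol}) shows $\tilde\dta^{\pm1}$ is bounded above and below near the axes, and the prefactors $r/(1-r\ol{r(\bar k)})$ and $\ol{r(\bar k)}/(1-r\ol{r(\bar k)})$ are bounded, with $r$ decaying at the ends of the contour. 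Combining these with $|\re(i\tha)|\ge\kappa$ and the growth of $|\re(i\tha)|$ as $s\ra0,\infty$ yields, for some $c>0$, the bound $\|\hat J-\id\|_{L^1(\tilde\Sig)\cap L^2(\tilde\Sig)\cap L^\infty(\tilde\Sig)}\le Ce^{-ct}$.

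Finally I would run the small-norm argument and reconstruct $u$. With $\om=\hat J-\id$ the operator $C_\om$ of (\ref{BCRHPCom}) has $\|C_\om\|_{L^2\ra L^2}\le C\|\om\|_{L^\infty}\le Ce^{-ct}$, so for large $t$ the equation (\ref{BCRHPinverseeq}) is solvable with $\|\mu-\id\|_{L^2(\tilde\Sig)}\le Ce^{-ct}$; since $\hat M_\pm-\id=C_\pm(\mu\om)$ this gives $\|\hat M_\pm-\id\|_{L^2(\tilde\Sig)}\le Ce^{-ct}$, which is (\ref{case1asy}). Undoing the explicit bounded transformations $\hat M\mapsto\tilde M\mapsto M$ (note $\tilde\dta\ra1$ by (\ref{case1tdtaRHP}), and the triangular factors are exponentially close to $\id$ at the ends) and expanding (\ref{BCRHPsol}) as $k\ra\infty$ gives, by (\ref{m}), $m(x,t)=-\tfrac1{2\pi i}\int_{\tilde\Sig}(\mu\om)_{12}\,dk=O(e^{-ct})$; since the phase factor in (\ref{solfromRHP}) has modulus one, $|u_x|=2|m|=O(e^{-ct})$ and $u$ decays rapidly. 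I expect the one delicate point to be the uniform smallness of $\hat J-\id$: one must control the prefactors and $\tilde\dta$ simultaneously near $k=0$ (where the $1/k^2$ term in $\tha$ makes $\eta$ singular) and near $k=\infty$, and verify that $\kappa$ degrades gracefully as $x/t\uparrow-\al$; everything after that is the routine bookkeeping of Section~2.1.
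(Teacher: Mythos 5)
Your proposal is correct and follows essentially the same route as the paper: a trivial Beals--Coifman factorization of $\hat J$ on the deformed contour $\tilde\Sig$, exponential smallness of $\hat J-\id$ coming from the definite sign of $\re(i\tha)$ off the axes, the resulting small-norm/Neumann-series solvability giving (\ref{case1asy}), and reconstruction of $u_x$ via (\ref{m}) and (\ref{solfromRHP}). The only difference is that you actually carry out the estimate $\|\hat J-\id\|_{L^1\cap L^2\cap L^\infty(\tilde\Sig)}\le Ce^{-ct}$ (lower bound on $|\sin 2\phi|$ and on the bracket in $\re(i\tha)$, boundedness of $\tilde\dta^{\pm 1}$ and of the reflection-coefficient prefactors), which the paper simply asserts.
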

\begin{proof}
Since the Riemann-Hilbert problem for $M$ and the Riemann-Hilbert problem for $\hat M$ are equivalent, the existence of the solution of $M$ implies the existence
of the solution of $\hat M$.
\par
We make the trivial factorization
\[
\hat J(x,t,k)=b_-^{-1}b_+,\quad b_-=\id,b_+=\hat J.
\]
and define $\hat \om$ as (\ref{BCRHPom}).
Then as section 2 (also see, \cite{bc} or \cite{dz}) , we obtain the solution of the Riemann-Hilbert problem for $\hat M$,
\be\label{case1RHPsol}
\hat M(x,t,k)=\id+\int_{\hat \Sig}\frac{\hat \mu(x,t,\xi)\hat \om(x,t,\xi)}{\xi-k}\frac{d\xi}{2\pi i},\quad k\in \C\backslash\hat \Sig.
\ee
where $\hat \mu$ is the solution of the singular integral equation $\hat \mu=\id +C_{\hat \om}\hat \mu$, where $C_{\hat \om}$ defined as (\ref{BCRHPCom}) with $\om$ replaced by $\hat \om$.
Since
\[
||\hat J(x,t,k)-\id||_{L^2(\hat\Sig)\cap L^{\infty}(\hat \Sig)}\rightarrow 0,\quad exponentially,\quad as\quad t\rightarrow \infty,
\]
by (\ref{case1RHPsol}),
\[
||\hat M-\id||_{L^2(\hat \Sig)}\rightarrow 0,\quad rapidly,\quad as\quad t\rightarrow \infty.
\]
Then, by (\ref{solfromRHP}) we get $u_x$ decays rapidly , and then $u$ decays rapidly, as $t\rightarrow \infty$.
\end{proof}
\subsection{Case 2: $\frac{x}{t}+\alpha>0$}
In this case, the real part of $i\tha(k)$ has the signature as the Figure 3. And we set $k_0=(\frac{\alpha \beta^2}{4(\frac{x}{t}+\alpha)})^{\frac{1}{4}}$.
\begin{figure}[th]
\centering
\includegraphics{LT--FL.4}
\caption{The signature table of $\re(i\tha)$ in the case 2.}
\end{figure}
\par
The jump matrix $J(x,t,k)$ has the following factorization
\be\label{jumpfac}
J(x,t,k)=\left\{
\ba{l}
\left(\ba{cc}1&\ol{r(\bar k)}e^{-2it\tha(k)}\\0&1\ea\right)\left(\ba{cc}1&0\\-r(k)e^{2it\tha(k)}&1\ea\right),\\
\left(\ba{cc}1&0\\\frac{-r(k)}{1-r(k)\ol{r(\bar k)}}e^{2it\tha(k)}&1\ea\right)\left(\ba{cc}1-r(k)\ol{r(\bar k)}&0\\0&\frac{1}{1-r(k)\ol{r(\bar k)}}\ea\right)\left(\ba{cc}1&\frac{\ol{r(\bar k)}}{1-r(k)\ol{r(\bar k)}}e^{-2it\tha(k)}\\0&1\ea\right).
\ea
\right.
\ee
\subsubsection{The conjugate transform}

Introducing a scalar function $\dta(k)$ which solves the Riemann-Hilbert problem
\be\label{scalRHP}
\left\{
\ba{rll}
\dta(k)_+&=\dta(k)_-(1-r(k)\ol{r(\bar k)})&k\in\Sig=(-k_0,k_0)\cup i(-k_0,k_0),\\
&=\dta(k)_-=\dta(k)&k\in \{\im k^2=0\}\backslash \Sig.\\
&\dta(k)\rightarrow 1&k\rightarrow\infty.
\ea
\right.
\ee
The solution of this Riemann-Hilbert problem is given by
\be\label{dtadef}
\dta(k)=\left((\frac{k-k_0}{k})(\frac{k+k_0}{k})\right)^{i\vartheta}e^{\chi_+(k)}e^{\chi_-(k)}
\left((\frac{k}{k-ik_0})(\frac{k}{k+ik_0})\right)^{i\tilde\vartheta}e^{\tilde \chi_+(k)}e^{\tilde \chi_-(k)},
\ee
where
\begin{subequations}
\be\label{vartheta}
\vartheta=-\frac{1}{2\pi}\ln{(1-|r(k_0)|^2)},
\ee
\be\label{tildevartheta}
\tilde\vartheta=-\frac{1}{2\pi}\ln{(1+|r(ik_0)|^2)},
\ee
\be\label{chipm}
\chi_{\pm}(k)=\frac{1}{2\pi i}\int_{0}^{\pm k_0}\ln{\left(\frac{1-|r(k')|^2}{1-|r(k_0)|^2}\right)}\frac{dk'}{k'-k},
\ee
\be\label{tildechipm}
\tilde\chi_{\pm}(k)=\frac{1}{2\pi i}\int_{\pm ik_0}^{i0}\ln{\left(\frac{1-r(k')\ol{r(\bar k')}}{1+|r(ik_0)|^2}\right)}\frac{dk'}{k'-k}.
\ee
\end{subequations}
Moreover, for all $k\in \C$, $|\dta|$ and $|\dta^{-1}|$ are bounded.
\par
The conjugate transform is that
\be\label{firsttrans}
M^{(1)}(x,t,k)=M(x,t,k)\dta(k)^{-\sig_3}.
\ee
\begin{figure}[th]
\centering
\includegraphics{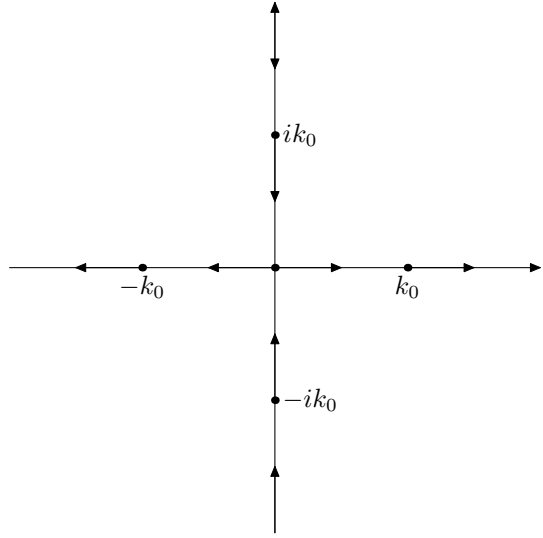}
\caption{The jump contour $\Sig^{(1)}$ for $M^{(1)}(x,t,k)$.}
\end{figure}
Then we can get the Riemann-Hilbert problem of $M^{(1)}(x,t,k)$
\be\label{M1RHP}
\left\{
\ba{ll}
M^{(1)}(x,t,k)_+=M^{(1)}(x,t,k)_-J^{(1)}(x,t,k),&k\in \R\cup i\R\\
M^{(1)}(x,t,k)\rightarrow \id,&k\rightarrow\infty.
\ea
\right.
\ee
where
\be\label{J1def}
J^{(1)}(x,t,k)=\left\{
\ba{l}
\left(\ba{cc}1&0\\\frac{-r(k)}{1-r(k)\ol{r(\bar k)}}\frac{1}{\dta^2(k)_-}e^{2it\tha(k)}&1\ea\right)\left(\ba{cc}1&\frac{\ol{r(\bar k)}}{1-r(k)\ol{r(\bar k)}}\dta^2(k)_+e^{-2it\tha(k)}\\0&1\ea\right),k\in\Sig^{(1)}=\Sig,\\
\left(\ba{cc}1&\ol{r(\bar k)}\dta^2(k)e^{-2it\tha(k)}\\0&1\ea\right)\left(\ba{cc}1&0\\-r(k)\frac{1}{\dta^2(k)}e^{2it\tha(k)}&1\ea\right),\quad k\in\{\im{k^2=0}\}\backslash\Sig^{(1)}.
\ea
\right.
\ee
\begin{figure}[th]\label{fig6}
\centering
\includegraphics{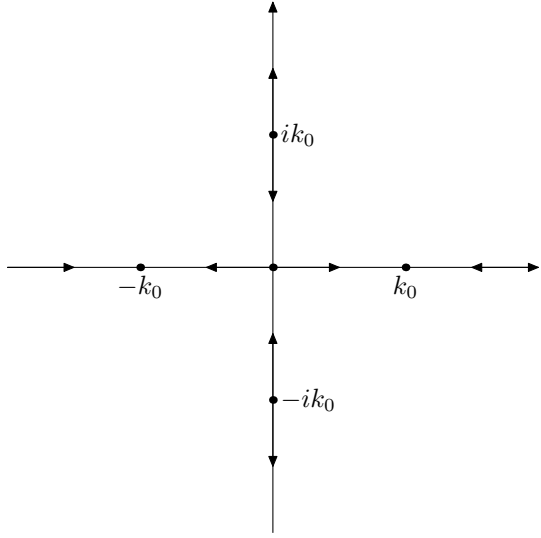}
\caption{The jump contour $\tilde \Sig^{(1)}$ for $\tilde M^{(1)}(x,t,k)$.}
\end{figure}
Then we reverse the direction of the part of $\{\im{k^2=0}\}\backslash\Sig^{(1)}$, we have
\be\label{M1RHPtilde}
\left\{
\ba{ll}
M^{(1)}(x,t,k)_+=M^{(1)}(x,t,k)_-\tilde J^{(1)}(x,t,k),&k\in \R\cup i\R\\
M^{(1)}(x,t,k)\rightarrow \id,&k\rightarrow\infty.
\ea
\right.
\ee
\small{
\be\label{J1defrealuse}
\tilde J^{(1)}(x,t,k)=\left\{
\ba{l}
\left(\ba{cc}1&0\\\frac{-r(k)}{1-r(k)\ol{r(\bar k)}}\frac{1}{\dta^2(k)_-}e^{2it\tha(k)}&1\ea\right)\left(\ba{cc}1&\frac{\ol{r(\bar k)}}{1-r(k)\ol{r(\bar k)}}\dta^2(k)_+e^{-2it\tha(k)}\\0&1\ea\right),k\in\tilde \Sig^{(1)}=\Sig\\
\left(\ba{cc}1&0\\r(k)\frac{1}{\dta^2(k)}e^{2it\tha(k)}&1\ea\right)\left(\ba{cc}1&-\ol{r(\bar k)}\dta^2(k)e^{-2it\tha(k)}\\0&1\ea\right),\quad k\in\{\im{k^2=0}\}\backslash\tilde \Sig^{(1)}.
\ea
\right.
\ee
}
\subsubsection{The second transform}
The main purpose of this section is to reformulate the original Riemann-Hilbert problem (\ref{M1RHPtilde}) as an equivalent Riemann-Hilbert problem on the augmented contour $\Sig^{(2)}$ (see Figure 7),
\be\label{Sig2}
\Sig^{(2)}=L\cup  L_0\cup\bar L\cup \bar L_0\cup\R\cup i\R.
\ee
where $L=L_1\cup\tilde L_1\cup L_2\cup\tilde L_2$,
\par
Denote the contour
\be\label{case2L}
\ba{ll}
L_1=\{k=k_0+uk_0e^{i\frac{3\pi}{4}},\quad u\in (-\infty,\frac{1}{\sqrt{2}}]\},
&\tilde L_1=\{k=ik_0+uk_0e^{-i\frac{\pi}{4}},\quad u\in (-\infty,\frac{1}{\sqrt{2}}]\}\\
L_2=\{k=-k_0+uk_0e^{-i\frac{\pi}{4}},\quad u\in (-\infty,\frac{1}{\sqrt{2}}]\},
&\tilde L_2=\{k=ik_0+uk_0e^{i\frac{\pi}{4}},\quad u\in (-\infty,\frac{1}{\sqrt{2}}]\}
\ea
\ee
Denote the contour
\be\label{case2L0}
L_0=\{uk_0e^{i\frac{\pi}{4}},\quad u\in [-\frac{1}{\sqrt{2}},\frac{1}{\sqrt{2}}]\}.
\ee
Denote the contour
\be\label{case2Leps}
\ba{rrl}
L_{\eps}&=&L_{1\eps}\cup\tilde L_{1\eps}\cup L_{2\eps}\cup \tilde L_{2\eps}\\
&=&\{k=k_0+uk_0e^{i\frac{3\pi}{4}},\quad \eps<u\le\frac{1}{\sqrt{2}}\}\cup \{k=ik_0+uk_0e^{-i\frac{\pi}{4}},\quad u\in (\eps,\frac{1}{\sqrt{2}}]\}\\
&&\cup\{k=-k_0+uk_0e^{-i\frac{\pi}{4}},\quad u\in (\eps,\frac{1}{\sqrt{2}}]\}\cup \{k=ik_0+uk_0e^{i\frac{\pi}{4}},\quad u\in (\eps,\frac{1}{\sqrt{2}}]\}
\ea
\ee
\par
Following the method in \cite{dz}, we can have
\begin{proposition}
Let
\be\label{rhodef}
\rho(k)=\left\{
\ba{ll}
\rho_1(k)=\frac{\ol{r(\bar k)}}{1-r(k)\ol{r(\bar k)}},&k\in \Sig\\
\rho_2(k)=-\ol{r(\bar k)},&k\in\{\im{k^2=0}\}\backslash\Sig.
\ea
\right.
\ee
Then $\rho$ has a decomposition
\be\label{rhoanalic}
\rho(k)=h_{\Rmnum{1}}(k)+(h_{\Rmnum{2}}(k)+R(k)),
\ee
where $h_{\Rmnum{1}}(k)$ is small and $h_{\Rmnum{2}}(k)$ has an analytic continuation to $L$ and $L_0$. For example, if $\rho(k)=r(k)$ as $k>k_0$, $h_{\Rmnum{2}}(k)$ of this function $\rho(k)$ has an analytic continuation to the first quadrant. And $R(k)$ is piecewise rational ($R(k)=0$, if $k\in L_0$) function.
\par
And $R,h_{\Rmnum{1}}, h_{\Rmnum{2}}$ satisfy
\begin{subequations}
\be\label{case2h1}
|e^{-2it\tha(k)}h_{\Rmnum{1}}(k)|\le\frac{c}{(1+|k|^2)t^l}, for \quad z\in \R\cup i\R,
\ee

\be\label{case2h2l}
|e^{-2it\tha(k)}h_{\Rmnum{2}}(k)|\le \frac{c}{(1+|k|^2)t^l},\quad k\in L,\quad k_0<M.
\ee
\be\label{case2h2l0}
|e^{-2it\tha(k)}h_{\Rmnum{2}}(k)| \le ce^{-t\frac{\alpha\beta^2}{4k_0^2}},\quad k\in L_0,\quad k_0<M.
\ee
and
\be\label{case2R}
|e^{-2it\tha(k)}R(k)|\le ce^{-\frac{\eps^2\alpha\beta^2}{M^2}t},\quad k\in L_{\eps}.
\ee
\end{subequations}
for arbitrary natural number $l$, for sufficiently large constants $c$, for some fixed positive constant $M$.
\end{proposition}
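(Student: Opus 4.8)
The plan is to follow the analytic-approximation scheme of Deift and Zhou \cite{dz}: I would split the reflection data $\rho$ near the stationary points into a rational ``model'' part $R$ that survives in the limit, an analytically continuable part $h_{\Rmnum{2}}$ that is controlled by a high power of the distance to the stationary points, and a genuinely small remainder $h_{\Rmnum{1}}$ left on the axes. First I would record the signature of $\re(i\tha)$ along the deformed contour. Writing $\tha(k)=(\frac{x}{t}+\alpha)k^2+\frac{\alpha\beta^2}{4k^2}-\alpha\beta$, the equation $\tha'(k)=0$ gives the four stationary points $\pm k_0,\pm ik_0$; on the diagonal segment $L_0$ one computes, for $k=uk_0e^{i\pi/4}$, that $\re(i\tha)=\frac{\alpha\beta^2}{4k_0^2}(u^{-2}-u^2)$, which is decreasing in $|u|$ and hence bounded below by a positive multiple of $\alpha\beta^2/k_0^2$ on $u\in[-\tfrac{1}{\sqrt2},\tfrac{1}{\sqrt2}]$ (worst point $u=\pm\tfrac{1}{\sqrt2}$); this is exactly the rate in \eqref{case2h2l0}. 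Near a stationary point $\re(i\tha)\sim c\,|k-k_0|^2$ with $c\asymp(\frac{x}{t}+\alpha)$, while on $L_\eps$ the bound $|k-k_0|\ge c\eps k_0$ forces $\re(i\tha)\ge c\eps^2\alpha\beta^2/M^2$, which yields \eqref{case2R}.

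By the symmetries of $a(k),b(k)$ noted after \eqref{Jump} it suffices to build the decomposition on one ray and continue it by reflection; as in the example in the statement I would take $\rho=r$ on $k>k_0$ and continue into the first quadrant, the other rays and the imaginary-axis points $\pm ik_0$ following identically. Since $a$ has no zeros in the no-soliton sector considered here and the scattering data are smooth with decay, $\rho$ is smooth and $(1+|k|^2)\rho$ is bounded on $\R\cup i\R$. I would Taylor expand $\rho$ at $k_0$ to order $N=N(l)$; multiplying the Taylor polynomial by a rational factor such as $\big(\frac{k_0+i}{k+i}\big)^{N+1}$ to restore decay at $k=\infty$ produces the piecewise-rational $R$, which extends analytically to $L$ and is placed only on the outgoing rays, so that $R\equiv0$ on $L_0$ as claimed. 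The remainder $\psi=\rho-R$ then vanishes to order $N$ at $k_0$ and is smooth with decay.

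For the analytic piece I would factor $\psi=(k-k_0)^N\phi$ with $\phi$ smooth, and split $\phi$ through its Fourier transform at a frequency proportional to $t$ (the Deift--Zhou device): the part whose exponentials decay in the sector continues to an analytic function $\phi_a$, and the rapid Fourier decay coming from the smoothness of $\phi$ makes the complementary part $\phi_r$, which stays on $\R\cup i\R$, satisfy $\phi_r=O(t^{-l})$ pointwise. Setting $h_{\Rmnum{2}}=(k-k_0)^N\phi_a$ and $h_{\Rmnum{1}}=(k-k_0)^N\phi_r$, the decay of $\rho$ supplies the weight and gives \eqref{case2h1} with the factor $(1+|k|^2)^{-1}$. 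Because $h_{\Rmnum{2}}$ carries the explicit factor $(k-k_0)^N$, on $L$ one has $|h_{\Rmnum{2}}(k)|\le c|k-k_0|^N$; combining this with $|e^{-2it\tha}|\le e^{-ct|k-k_0|^2}$ and the elementary bound $\max_{s\ge0}s^Ne^{-cts^2}\le c_Nt^{-N/2}$ produces \eqref{case2h2l} once $N\ge2l$, while on $L_0$ the lower bound on $\re(i\tha)$ already established directly yields \eqref{case2h2l0}.

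The main obstacle I anticipate is twofold and lies in uniformity and patching rather than in any single estimate. First, one must verify the full signature of $\re(i\tha)$ along all of $L,L_0,\bar L,\bar L_0,\R,i\R$ and secure every constant uniformly for $k_0\in(0,M)$, i.e. uniformly as $\frac{x}{t}$ stays bounded away from the transition $\frac{x}{t}=-\alpha$; the pole of $\tha$ at $k=0$ is handled only because $L_0$ keeps a fixed distance from the origin, its worst point being $u=\pm\tfrac{1}{\sqrt2}$. Second, the analytic continuation must be carried out simultaneously at all four stationary points and matched consistently at the junctions $k_0,ik_0$ and at the crossings on $L_0$, so that a single decomposition delivers \eqref{case2h1}--\eqref{case2h2l0} at once; reconciling the $t$-dependent Fourier split (needed for the pointwise $t^{-l}$ bound \eqref{case2h1}) with the fixed order-$N$ vanishing (needed for \eqref{case2h2l}) is the delicate technical point.
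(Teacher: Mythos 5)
Your proposal follows essentially the same route as the paper's appendix: Taylor expansion of $\rho$ at the stationary points (with a rational denominator factor to restore decay at infinity) to produce the piecewise-rational $R$, followed by the Deift--Zhou Fourier splitting in the variable $\tha$ at frequency $s=t$ to separate the small remainder $h_{\Rmnum{1}}$ on the axes from the analytically continuable $h_{\Rmnum{2}}$, with the same $\max_{s\ge 0}s^{N}e^{-cts^{2}}\lesssim t^{-N/2}$ and $\re(i\tha)$ signature estimates on $L$, $L_0$ and $L_\eps$. One small inaccuracy in your closing discussion: $L_0=\{uk_0e^{i\pi/4},\,|u|\le \tfrac{1}{\sqrt2}\}$ passes through the origin rather than staying away from it; the pole of $\tha$ at $k=0$ is harmless there because your own formula $\re(i\tha)=\frac{\alpha\beta^2}{4k_0^2}(u^{-2}-u^2)$ shows it drives $\re(i\tha)\to+\infty$, the minimum correctly being at $u=\pm\tfrac{1}{\sqrt2}$.
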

\begin{proof}
See appendix.
\end{proof}
\begin{remark}
Taking conjugate $\ol{\rho(k)}=\ol{h_{\Rmnum{1}}(k)}+\ol{h_{\Rmnum{2}}(\bar k)}+\ol{R(\bar k)}$ leads to the same estimates for $e^{2it\tha(k)}\ol{h_{\Rmnum{1}}(k)},e^{2it\tha(k)}\ol{h_{\Rmnum{2}}(\bar k)}$ and $e^{2it\tha(k)}\ol{R(\bar k)}$ on $\R\cup i \R\cup \bar L\cup \bar L_0$.
\end{remark}
\par
From the Riemann-Hilbert problem (\ref{M1RHPtilde}) and formula (\ref{J1defrealuse}), the Riemann-Hilbert problem across $\R\cup i\R$ oriented as Figure 6 is given by
\be\label{M1RHPBCtype}
\left\{
\ba{ll}
M^{(1)}(x,t,k)_+=M^{(1)}(x,t,k)_-(b_-)^{-1}b_+,&k\in \R\cup i\R\\
M^{(1)}(x,t,k)\rightarrow \id,&k\rightarrow\infty.
\ea
\right.
\ee
where
\be\label{bp}
b_{+}=\id+\om_+=\dta_{+}^{\hat \sig_3}e^{-it\tha(k)\hat \sig_3}\left(\ba{cc}1&\rho(k) \\ 0&1\ea\right),
\ee
\be\label{bm}
b_{-}=\id-\om_-=\dta_{-}^{\hat \sig_3}e^{-it\tha(k)\hat \sig_3}\left(\ba{cc}1&0 \\ \ol{\rho(\bar k)}&1\ea\right),
\ee
and $\rho$ is given by (\ref{rhodef}).
\par
We write
\begin{subequations}
\be\label{bpfac}
b_+=b^o_+b^a_+=(\id+\om^o_+)(\id+\om^a_+)=\left(\ba{cc}1&h_{\Rmnum{1}}(k)\dta_+^2e^{-2it\tha}\\0&1\ea\right)\left(\ba{cc}1&(h_{\Rmnum{2}}(k)+R(k))\dta_+^2e^{-2it\tha}\\0&1\ea\right),
\ee
\be\label{bmfac}
b_-=b^o_-b^a_-=(\id-\om^o_-)(\id-\om^a_-)=\left(\ba{cc}1&0\\\ol{h_{\Rmnum{1}}(\bar k)}\frac{1}{\dta_-^2}e^{2it\tha}&1\ea\right)\left(\ba{cc}1&0\\(\ol{h_{\Rmnum{2}}(\bar k)+R(\bar k)})\frac{1}{\dta_-^2}e^{2it\tha}&1\ea\right),
\ee
\end{subequations}
Now we can use the signature table of $\re{i\tha}$ showed in Figure 3 to open the jump contour for the Riemann-Hilbert problem of $M^{(1)}$ to the contours in Figure 7.
\begin{figure}[th]
\centering
\includegraphics{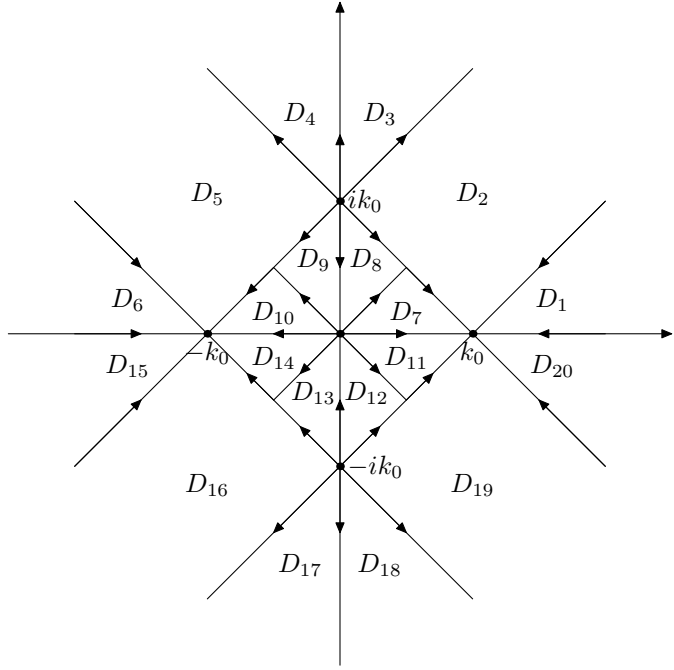}
\caption{The different regions $\{D_j\}_1^{20}$ of the complex $k-$plane.}
\end{figure}
Introducing $M^{(2)}(x,t,k)=M^{(1)}(x,t,k)\phi$, where $\phi$ is defined as follows:
\be\label{phidef}
\phi=\left\{
\ba{ll}
\id,&k\in D_2,D_5,D_{16},D_{19}\\
(b^a_-)^{-1},&k\in D_1,D_3,D_{15},D_{17},D_9,D_{10},D_{11},D_{12}\\
(b^a_+)^{-1},&k\in D_4,D_6,D_{18},D_{20},D_7,D_8,D_{13},D_{14}
\ea
\right.
\ee
where the regions $\{D_j\}_1^{20}$ are showed in Figure 7.
\par
Then the Riemann-Hilbert problem of $M^{(2)}(x,t,k)$ is defined
\be\label{M2RHP}
M^{(2)}_+(x,t,k)=M_-^{(2)}(x,t,k)J^{(2)}(x,t,k)
\ee
with
\be\label{J2def}
J^{(2)}(x,t,k)=\left\{
\ba{ll}
(b^o_-)^{-1}(b^o_+),&k\in R\cup i\R\\
\id^{-1}(b^a_+),&k\in L\cup {\bf L_0}\\
(b^a_-)^{-1}\id,&k\in\bar L\cup {\bf \bar L_0}
\ea
\right.
\ee

\par
Using the symbol $J^{(2)}(x,t,k)=b^{-1}_-(x,t,k)b_+(x,t,k)$, and set $\om_{\pm}(x,t,k)=\pm(b_{\pm}(x,t,k)-\id)$ , $\om(x,t,k)=\om_+(x,t,k)+\om_-(x,t,k)$. From section 2, we have
\be\label{M2sol}
M^{(2)}(x,t,k)=\id+\int_{\Sig^{(2)}}\frac{\mu(x,t,\xi)\om(x,t,\xi)}{\xi-k}\frac{d\xi}{2\pi i},\quad k\in\C\backslash\Sig^{(2)}.
\ee
And substituting (\ref{M2sol}) into (\ref{m}), we learn that
\be\label{m2}
\ba{rl}
m(x,t)=&\frac{1}{2}\lim_{k\rightarrow \infty}(k[\sig_3,M^{(2)}(x,t,k)])_{12},\\
=&-\frac{1}{2}([\sig_3,\int_{\Sig^{(2)}}\mu(x,t,\xi)\om(x,t,\xi)]\frac{d\xi}{2\pi i})_{12},\\
=&-\frac{1}{2}([\sig_3,\int_{\Sig^{(2)}}((\id-C_{\om})^{-1}\id)(\xi)\om(x,t,\xi)]\frac{d\xi}{2\pi i})_{12}.
\ea
\ee

\subsubsection{Transform to the Riemann-Hilbert problem of $M^{(3)}(x,t,k)$}
\par
{Follow the method of \cite{dz} P.323-329, we can reduce the Riemann-Hilbert problem of $M^{(2)}(x,t,k)$ to the Riemann-Hilbert problem of $M^{(3)}(x,t,k)$.}
\begin{figure}[th]
\centering
\includegraphics{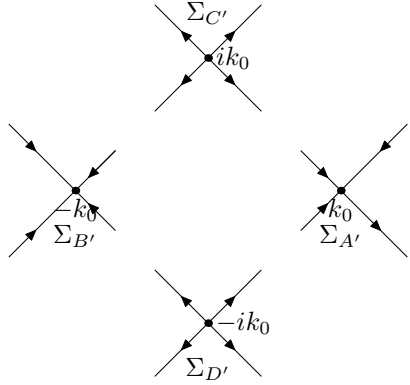}
\caption{The jump contour $\Sig^{(3)}$ for $M^{(3)}(x,t,k)$.}
\end{figure}
\par
Let $\om^e$ be a sum of three terms
\be
\om^e=\om^a+\om^b+\om^c+\om^d.
\ee
We then have the following:
\be
\ba{l}
\om^a=\om \mbox{ is supported on the $\R\cup i\R$ and consists of terms of type $h_{\Rmnum{1}}(k)$ and $\ol{ h_{\Rmnum{1}}(k)}$}.\\
\om^b=\om \mbox{ is supported on the $L\cup \bar L$ and consists of terms of type $h_{\Rmnum{2}}(k)$ and $\ol{ h_{\Rmnum{2}}(\bar k)}$}.\\
\om^c=\om \mbox{ is supported on the $L_{\eps}\cup \bar L_{\eps}$ and consists of terms of type $R(k)$ and $\ol{R(\bar k)}$}.\\
\om^d=\om \mbox{ is supported on the $L_0\cup \bar L_0$ }.
\ea
\ee
Set $\om'=\om-\om^e$. Then, $\om'=0$ on $\Sig^{(2)}\backslash \Sig^{(3)}$.
Thus, $\om'$ is supported on $\Sig^{(3)}$ with contribution to $\om$ from rational
terms $R$ and $\bar R$.
\begin{proposition}\label{omfanshu}
For $0<k_0<M$, we have
\begin{subequations}
\be\label{oma}
||\om^a||_{L^1(R\cup i\R)\cap L^2(\R\cup i\R)\cap L^{\infty}(\R\cup i\R)}\le \frac{c}{t^l},
\ee
\be\label{omb}
||\om^b||_{L^1(L\cup \bar L)\cap L^2(L\cup \bar L)\cap L^{\infty}(L\cup \bar L)}\le \frac{c}{t^l},
\ee
\be\label{omc}
||\om^c||_{L^1(L_{\eps}\cup \bar L_{\eps})\cap L^2(L_{\eps}\cup \bar L_{\eps})\cap L^{\infty}(L_{\eps}\cup \bar L_{\eps})}\le ce^{-\frac{\eps^2\alpha\beta^2}{k_0^2}t},
\ee
\be\label{omd}
||\om^d||_{L^1(L_{0}\cup \bar L_{0})\cap L^2(L_{0}\cup \bar L_{0})\cap L^{\infty}(L_{0}\cup \bar L_{0})}\le ce^{-\frac{\alpha\beta^2}{4k_0^2}t},
\ee
\end{subequations}
Moreover,
\be\label{om'}
||\om'||_{L^2(\Sig^{(3)})}\le \frac{c}{t^{\frac{1}{4}}},\qquad ||\om'||_{L^1(\Sig^{(3)})}\le \frac{c}{t^{\frac{1}{2}}}
\ee
\end{proposition}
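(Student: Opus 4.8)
The plan is to read off all five estimates from the pointwise bounds already established in the preceding Proposition, namely \eqref{case2h1}--\eqref{case2R} together with the accompanying Remark on conjugates, combined with the fact (stated after \eqref{dtadef}) that $|\dta|$ and $|\dta^{-1}|$ are bounded on all of $\C$. Indeed, by the factorizations \eqref{bpfac}--\eqref{bmfac} every nonzero entry of $\om^a,\om^b,\om^c,\om^d$ is of the form $(\text{bounded }\dta^{\pm 2})\times(\text{one of } h_{\Rmnum{1}},h_{\Rmnum{2}},R\text{ or its conjugate})\times e^{\mp 2it\tha}$, so that pointwise
\be
|\om^a(k)|\le C\,|e^{-2it\tha}h_{\Rmnum{1}}(k)|+C\,|e^{2it\tha}\ol{h_{\Rmnum{1}}(\bar k)}|,
\ee
and similarly for $\om^b$ (with $h_{\Rmnum{2}}$), $\om^c$ (with $R$) and $\om^d$. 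The decomposition \eqref{rhoanalic} guarantees these are the only terms present on each piece of the contour.

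For the first four bounds I would then simply integrate. On $\R\cup i\R$ estimate \eqref{case2h1} gives $|\om^a(k)|\le c(1+|k|^2)^{-1}t^{-l}$; taking the supremum yields the $L^\infty$ bound, while $\int_{\R\cup i\R}(1+|k|^2)^{-1}\,|dk|$ and $\int_{\R\cup i\R}(1+|k|^2)^{-2}\,|dk|$ both converge, giving the $L^1$ and $L^2$ bounds in \eqref{oma}. The same argument applied to \eqref{case2h2l} on $L\cup\bar L$ proves \eqref{omb}, since $L$ is a union of rays along which $(1+|k|^2)^{-1}$ stays integrable. The contours $L_\eps\cup\bar L_\eps$ and $L_0\cup\bar L_0$ have finite arclength, so the uniform exponential bounds \eqref{case2R} and \eqref{case2h2l0} pass directly to all three norms, up to the finite-length factor, yielding the exponential decay claimed in \eqref{omc} and \eqref{omd}.

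The substance of the Proposition is the estimate \eqref{om'} for $\om'$, which is supported on $\Sig^{(3)}$ and carries the rational terms $R,\ol{R(\bar k)}$ multiplied by $\dta^{\pm 2}e^{\mp 2it\tha}$. Since $R$ is rational with no poles on the contour and $\dta^{\pm2}$ is bounded, one has $|\om'(k)|\le C\,|e^{\mp 2it\tha(k)}|$ on the relevant branch. The key is a Gaussian bound: near each of the four stationary points $\pm k_0,\pm ik_0$, Taylor expanding $\tha$ and using the signature of $\re(i\tha)$ shows that, on the steepest-descent branches forming $\Sig^{(3)}$, $|e^{\mp 2it\tha(k)}|\le e^{-ct|k-k_*|^2}$. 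Parametrizing each branch by arclength $s$, so that $|k-k_*|\asymp|s|$, this reads $|\om'|\le Ce^{-cts^2}$, and then
\be
\|\om'\|_{L^2(\Sig^{(3)})}^2\le C\int_{-\infty}^{\infty}e^{-2cts^2}\,ds=\frac{C'}{\sqrt t},\qquad \|\om'\|_{L^1(\Sig^{(3)})}\le C\int_{-\infty}^{\infty}e^{-cts^2}\,ds=\frac{C''}{\sqrt t},
\ee
which are exactly the claimed $t^{-1/4}$ and $t^{-1/2}$ bounds.

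I expect the genuine obstacle to be precisely this last step: verifying the uniform quadratic estimate $|e^{\mp 2it\tha}|\le e^{-ct|k-k_*|^2}$ along $\Sig^{(3)}$ at all four stationary points simultaneously. One must check that the chosen contour angles of $L$ and $L_0$ in \eqref{case2L}--\eqref{case2L0} are genuine directions of descent for $\tha$, and that $\tha''(k_*)$ controls the Gaussian width with a constant that can be taken uniform for $0<k_0<M$; the presence of four stationary points, rather than the single one of the line-NLS problem, means this local analysis must be carried out and matched at each of $\pm k_0,\pm ik_0$. Once the Gaussian bound is in place, the loss of a factor $t^{-1/2}$ (respectively $t^{-1/4}$) is just the scaling of a Gaussian of width $\sim t^{-1/2}$ and requires no further work.
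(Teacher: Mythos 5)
Your proposal follows essentially the same route as the paper: the first four estimates are obtained by integrating the pointwise bounds of the preceding decomposition proposition (the paper's proof literally reads ``consequence of Proposition 3.2 and analogous calculations as in Lemma 2.13 of \cite{dz}''), and the bound \eqref{om'} comes from a Gaussian-type estimate for $e^{-2it\tha}$ along the rays of $\Sig^{(3)}$ combined with boundedness of $\dta^{\pm2}$ and decay of $R$. The one step you flag as the remaining obstacle --- the quadratic lower bound on $\re(i\tha)$ along the chosen rays, uniformly for $0<k_0<M$ --- is exactly what the paper carries out by explicit computation, obtaining $\re i\tha=\frac{\alpha\beta^2}{4k_0^2}\frac{(u^2-\sqrt{2}u)^2(u^2-\sqrt{2}u+2)}{(u^2-\sqrt{2}u+1)^2}\ge \frac{\alpha\beta^2}{4k_0^2}Ku^2$ near the stationary point and a linear-in-$u$ exponential bound on the unbounded tails of $L'$ (where the rational decay $|R(k)|\le C(k_0)(1+|k|^5)^{-1}$ keeps the integrals finite), after which the $t^{-1/4}$ and $t^{-1/2}$ rates follow from the Gaussian integral exactly as you describe.
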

\begin{proof}
Consequence of proposition 3.2, and analogous calculations as in lemma 2.13 of \cite{dz}. Let us show equation (\ref{om'}).
\par
From the appendix, we have
\be
|R(k)|\le C(k_0)(1+|k|^5)^{-1}
\ee
on the contour $k=\{k_0+uk_0e^{i\frac{3\pi}{4}},-\infty<u\le \eps\}$, $\eps\le \frac{1}{\sqrt{2}}$.
\par
Since
\be
\ba{rrl}
\re i\tha(k)&=&\frac{\alpha\beta^2}{4k_0^2}\frac{(u^2-\sqrt{2}u)^2(u^2-\sqrt{2}u+2)}{(u^2-\sqrt{2}u+1)^2}\\
&\ge &\frac{\alpha\beta^2}{4k_0^2}Ku^2
\ea
\ee
on the contour $k=\{k_0+uk_0e^{i\frac{3\pi}{4}},-\eps<u\le \eps\}$,
and
\be
\re i\tha(k)\ge -\frac{\alpha\beta^2}{4k_0^2}K'u
\ee
on the contour $k=\{k_0+uk_0e^{i\frac{3\pi}{4}},-\infty<u\le -\eps\}$,
where $K$ and $K'$ are positive constants.
\par
We have the similar estimates on the other parts of the contour $\Sig^{(3)}$.
\par
Moreover,
\be
\ba{rrl}
||\om'||^2_{L^2(\Sig^{(2)})}&=&||\om'||^2_{L^2(\Sig^{(3)})}\\
&\le &C_1(k_0)\int_{\Sig^{(3)}}\left(e^{-u^2K_1t\frac{\alpha\beta^2}{k_0^2}}+e^{-uK_2t\frac{\alpha\beta^2}{k_0^2}}(1+|k|^5)^{-2}|dk|\right)\\
&\le &C_2(k_0)\left(\int_{\R}e^{-u^2K_1t\frac{\alpha\beta^2}{k_0^2}}k_0du+\int_{\R}e^{-uK_2t\frac{\alpha\beta^2}{k_0^2}}k_0du\right)\\
&\le &C_3(k_0)\left(\frac{k^2_0}{\alpha \beta^2t}\right)^{\frac{1}{2}},
\ea
\ee
where $K_1,K_2$ are constants.
\end{proof}
\begin{proposition}
As $t\rightarrow \infty$ and $0<k_0<M$, $||(1-C_{\om})^{-1}||_{L^2(\Sig^{(2)})}\le C$ is equalilent to $||(1-C_{\om'})^{-1}||_{L^2(\Sig^{(2)})}\le C$.
\end{proposition}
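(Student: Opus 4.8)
The plan is to exploit that $\om$ and $\om'$ differ only by $\om^e=\om-\om'$, which Proposition \ref{omfanshu} shows is negligible as $t\to\infty$, and then to transfer the bounded invertibility of one resolvent to the other through a second--resolvent (Neumann series) argument. First I would note that the map $\om\mapsto C_\om$ is linear, so that on $L^2(\Sig^{(2)})$ we have the operator identity $C_\om=C_{\om'}+C_{\om^e}$. Since $C_\pm$ are bounded on $L^2(\Sig^{(2)})$ and $C_{\om^e}f=C_+(f\om^e_-)+C_-(f\om^e_+)$, the standard estimate yields $\|C_{\om^e}\|_{L^2\to L^2}\le c\,\|\om^e\|_{L^\infty(\Sig^{(2)})}$.

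Next I would invoke Proposition \ref{omfanshu}. Writing $\om^e$ as the sum of the pieces $\om^a,\om^b,\om^c,\om^d$ supported respectively on $\R\cup i\R$, $L\cup\bar L$, $L_\eps\cup\bar L_\eps$ and $L_0\cup\bar L_0$, the $L^\infty$ parts of the estimates (\ref{oma})--(\ref{omd}) give $\|\om^e\|_{L^\infty(\Sig^{(2)})}\le c/t^l$ (the $\om^c$ and $\om^d$ contributions being in fact exponentially small), uniformly for $0<k_0<M$. Hence $\|C_{\om^e}\|_{L^2\to L^2}\le c/t^l\to0$ as $t\to\infty$.

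Finally, assuming $\|(1-C_{\om'})^{-1}\|_{L^2(\Sig^{(2)})}\le C$, I would factor
\[
1-C_\om=(1-C_{\om'})\bigl(1-(1-C_{\om'})^{-1}C_{\om^e}\bigr).
\]
For $t$ so large that $c/t^l<1/(2C)$ we have $\|(1-C_{\om'})^{-1}C_{\om^e}\|<1/2$, so the right factor is invertible by Neumann series with norm $\le 2$; therefore $1-C_\om$ is invertible with $\|(1-C_\om)^{-1}\|\le 2C$. The converse follows verbatim after exchanging the roles of $\om$ and $\om'$, using $\om'=\om-\om^e$, which establishes the claimed equivalence.

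The main obstacle is the operator--norm smallness $\|C_{\om^e}\|_{L^2\to L^2}\to0$: it must be assembled from the four sub-contours of $\Sig^{(2)}$ and relies crucially on the uniformity in $k_0$ (for $0<k_0<M$) of the bounds in Proposition \ref{omfanshu}, together with the $L^2$-boundedness of the Cauchy projectors $C_\pm$. Once that smallness is in hand, the Neumann-series/second-resolvent step is entirely routine.
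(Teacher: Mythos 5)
Your proposal is correct and follows essentially the same route as the paper: the paper's proof likewise rests on the bound $\|C_{\om}-C_{\om'}\|_{L^2(\Sig^{(2)})}\le c\|\om^e\|$, the decay $\|\om^e\|\le c/t^l$ from Proposition \ref{omfanshu}, and the second resolvent identity (your Neumann-series factorization is just that identity written out). The only cosmetic difference is that the paper states the operator bound in terms of $\|\om^e\|_{L^2}$ while you use $\|\om^e\|_{L^\infty}$; both are controlled by Proposition \ref{omfanshu}, so nothing changes.
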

\begin{proof}
Consequence of the following inequality, $||C_{\om}-C_{\om'}||_{L^2(\Sig^{(2)})}\le c||\om^e||_{L^2(\Sig^{(2)})}$,
the fact that $||\om^e||_{L^2(\Sig^{(2)})}\le \frac{c}{t^l}$, and the second resolvent identity.
\end{proof}
\begin{proposition}
If $||(1-C_{\om'})^{-1}||_{L^2(\Sig^{(2)})}\le C$, then for arbitrary positive integer $l$,
as $t\rightarrow \infty$ such that $0<k_0<M$,
\be \label{m3sig2om'}
m(x,t)=-\frac{1}{2}([\sig_3,\int_{\Sig^{(2)}}((\id-C_{\om^{'}})^{-1}\id)(\xi)\om^{'}(x,t,\xi)]\frac{d\xi}{2\pi i})_{12}+O(\frac{c}{t^l}).
\ee
\end{proposition}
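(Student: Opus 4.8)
The plan is to view this as a resolvent perturbation of the exact representation (\ref{m2}), substituting the splitting $\om=\om'+\om^e$ with $\om^e=\om^a+\om^b+\om^c+\om^d$, and to show that every term carrying an $\om^e$ contributes only $O(c/t^l)$. The ingredients I would use are Proposition \ref{omfanshu}, which gives $\|\om^e\|_{L^1\cap L^2\cap L^\infty(\Sig^{(2)})}\le c/t^l$ (the exponentially small pieces $\om^c,\om^d$ being dominated by $t^{-l}$) together with $\|\om'\|_{L^2(\Sig^{(3)})}\le c/t^{1/4}$, the hypothesis $\|(\id-C_{\om'})^{-1}\|_{L^2}\le C$, and the preceding proposition, which upgrades this hypothesis to $\|(\id-C_\om)^{-1}\|_{L^2}\le C$. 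Throughout, the linear map $X\mapsto-\tfrac12([\sig_3,X])_{12}$ applied to $\int X\,\tfrac{d\xi}{2\pi i}$ passes harmlessly through every estimate, so it suffices to control $\int_{\Sig^{(2)}}(\mu\om-\mu'\om')\,d\xi$, where $\mu=(\id-C_\om)^{-1}\id$ and $\mu'=(\id-C_{\om'})^{-1}\id$.

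First I would peel off the explicit $\om$-factor: $\int\mu\om=\int\mu\om'+\int\mu\om^e$. To bound $\int\mu\om^e$ I write $\mu=\id+(\mu-\id)$; since $\mu-\id=(\id-C_\om)^{-1}C_\om\id$ obeys $\|\mu-\id\|_{L^2}\le\|(\id-C_\om)^{-1}\|_{L^2}\,\|C_\om\id\|_{L^2}\le Cc/t^{1/4}$ and is in particular bounded, the two pieces
\[
\Big|\int\om^e\Big|\le\|\om^e\|_{L^1},\qquad \Big|\int(\mu-\id)\om^e\Big|\le\|\mu-\id\|_{L^2}\,\|\om^e\|_{L^2}
\]
are each $O(c/t^l)$.

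Next I would replace $\mu$ by $\mu'$ inside $\int\mu\om'$. Because $\om\mapsto C_\om$ is linear we have $C_\om-C_{\om'}=C_{\om^e}$, so the second resolvent identity yields
\[
\mu-\mu'=(\id-C_\om)^{-1}C_{\om^e}\mu'.
\]
Splitting $\mu'=\id+(\mu'-\id)$ once more and using $\|C_{\om^e}\id\|_{L^2}\le c\|\om^e\|_{L^2}$ together with the operator bound $\|C_{\om^e}\|\le c\|\om^e\|_{L^\infty}$ gives $\|C_{\om^e}\mu'\|_{L^2}\le c/t^l$, whence $\|\mu-\mu'\|_{L^2}\le Cc/t^l$ and $|\int(\mu-\mu')\om'|\le\|\mu-\mu'\|_{L^2}\|\om'\|_{L^2}\le Cc/t^{l+1/4}$. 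Combining the two steps gives $\int\mu\om-\int\mu'\om'=O(c/t^l)$, and since $\om'$ is supported on $\Sig^{(3)}\subset\Sig^{(2)}$ the surviving integral is exactly the one appearing in (\ref{m3sig2om'}).

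The step I expect to be the main obstacle is the analytic bookkeeping forced by the fact that $\Sig^{(2)}$ is unbounded: neither $\id$ nor $\mu,\mu'$ lies in $L^2(\Sig^{(2)})$, so no estimate can be made directly on $\int\mu\om^e$ or on $\|\mu-\mu'\|_{L^2}$. The whole argument therefore hinges on systematically writing $\mu=\id+(\mu-\id)$ (and likewise for $\mu'$) and on having all three of the $L^1$, $L^2$ and $L^\infty$ bounds of Proposition \ref{omfanshu} available simultaneously: one pairs a bare $\id$ against $\om^e$ through the $L^1$ norm, the operator $C_{\om^e}$ through the $L^\infty$ norm, and any two genuine $L^2$ factors through the $L^2$ norm. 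Keeping these pairings straight, rather than any single hard inequality, is the crux.
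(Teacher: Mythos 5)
Your proposal is correct and follows essentially the same route as the paper: the paper invokes the four-term resolvent expansion (equation (2.27) of Deift--Zhou) obtained from the second resolvent identity and bounds each remainder term by pairing the $L^1$, $L^2$ and $L^{\infty}$ norms of $\om^e$ and $\om'$ from Proposition \ref{omfanshu} exactly as you do, using the equivalence of the boundedness of $(\id-C_{\om})^{-1}$ and $(\id-C_{\om'})^{-1}$. Your three-term grouping (peeling off $\int\mu\,\om^e$ and then perturbing $\mu\to\mu'$ via $\mu-\mu'=(\id-C_{\om})^{-1}C_{\om^e}\mu'$) is an equivalent rearrangement of the paper's terms $\Rmnum{1}$--$\Rmnum{4}$, and your identification of the $\mu=\id+(\mu-\id)$ split as the device that circumvents $\id\notin L^2(\Sig^{(2)})$ is precisely the point of that expansion.
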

\begin{proof}
From the second resolvent identity, one can derive the following expression (see equation (2.27) in \cite{dz}),
\be
\ba{rrl}
\int_{\Sig^{(2)}}((1-C_{\om})^{-1}\id)\om\frac{d\xi}{2\pi i}&=&\int_{\Sig^{(2)}}((1-C_{\om'})^{-1}\id)\om'\frac{d\xi}{2\pi i}+\int_{\Sig^{(2)}}\om^e\frac{d\xi}{2\pi i}\\
&&+\int_{\Sig^{(2)}}((1-C_{\om'})^{-1}(C_{\om^e}\id))\om\frac{d\xi}{2\pi i}\\
&&+\int_{\Sig^{(2)}}((1-C_{\om'})^{-1}(C_{\om'}\id))\om^e\frac{d\xi}{2\pi i}\\
&&+\int_{\Sig^{(2)}}((1-C_{\om'})^{-1}C_{\om^e}(1-C_{\om})^{-1})(C_{\om}\id)\om\frac{d\xi}{2\pi i}\\
&=&\int_{\Sig^{(2)}}((1-C_{\om'})^{-1}\id)\om'\frac{d\xi}{2\pi i}+\Rmnum{1}+\Rmnum{2}+\Rmnum{3}+\Rmnum{4}.
\ea
\ee
For $0<k_0<M$, from Proposition (\ref{omfanshu}) it follows that,
\be\label{rmnum1}
\ba{rrl}
|\Rmnum{1}|&\le& ||\om^a||_{L^1(\R\cup i\R)}+||\om^b||_{L^1(L\cup \bar L)}+||\om^c||_{L^1(L_{\eps}\cup \bar L_{\eps})}+||\om^d||_{L^1(L_0\cup \bar L_0)}\\
&\le & ct^{-l},
\ea
\ee
\be\label{rmnum2}
\ba{rrl}
|\Rmnum{2}|&\le& ||(1-C_{\om'})^{-1}||_{L^2(\Sig^{(2)})} ||(C_{\om^e}\id)||_{L^2(\Sig^{(2)})} ||\om||_{L^2(\Sig^{(2)})}\\
&\le & c||\om^e||_{L^2(\Sig^{(2)})} (||\om^e||_{L^2(\Sig^{(2)})}+||\om'||_{L^2(\Sig^{(2)})})\\
&\le & ct^{-l}(ct^{-l}+c)\le ct^{-l},
\ea
\ee
\be\label{rmnum3}
\ba{rrl}
|\Rmnum{3}|&\le & ||(1-C_{\om'})^{-1}||_{L^2(\Sig^{(2)})} ||(C_{\om'}\id)||_{L^2(\Sig^{(2)})} ||\om^e||_{L^2(\Sig^{(2)})}\\
&\le & ct^{-l}
\ea
\ee
\be\label{rmnum4}
\ba{lll}
|\Rmnum{4}|&\le & ||(1-C_{\om'})^{-1}C_{\om^e}(1-C_{\om})^{-1})(C_{\om}\id)||_{L^2(\Sig^{(2)})} ||\om||_{L^2(\Sig^{(2)})}\\
&\le & ||(1-C_{\om'})^{-1}||_{L^2(\Sig^{(2)})} ||C_{\om^e}||_{L^2(\Sig^{(2)})} ||(1-C_{\om})^{-1}||_{L^2(\Sig^{(2)})} ||(C_{\om}\id)||_{L^2(\Sig^{(2)})} ||\om||_{L^2(\Sig^{(2)})}\\
&\le & c ||C_{\om^e}||_{L^2(\Sig^{(2)})} ||(C_{\om}\id)||_{L^2(\Sig^{(2)})} ||\om||_{L^2(\Sig^{(2)})}\\
&\le & c ||\om^e||_{L^2(\Sig^{(2)})} ||\om||^2_{L^2(\Sig^{(2)})}\\
&\le & c t^{-l}.
\ea
\ee
Hence,
\be
|\Rmnum{1}+\Rmnum{2}+\Rmnum{3}+\Rmnum{4}|\le ct^{-l}.
\ee
\par
Applying these estimates to equation (\ref{m2}), we can obtain equation (\ref{m3sig2om'}).
\end{proof}
\par
Let us now show that, in the sense of appropriately defined operator norms, one may
always choose to delete (or add) a portion of a contour(s) on which the jump is $\id$,
without altering the Riemann-Hilbert problem in the operator sense.
\par
Suppose that $\Sig_{1}$ and $\Sig_{2}$ are two oriented skeletons in $\C$ with
\be
\mbox{card}(\Sig_{1}\cap \Sig_{2})<\infty;
\ee
let $u=u(\lam)=u_+(\lam)+u_-(\lam)$ be a $2\times 2$ matrix-valued function on
\be
\Sig_{12}=\Sig_{1}\cup \Sig_{2}
\ee
with entries in $L^2(\Sig_{12})\cap L^{\infty}(\Sig_{12})$ and suppose that
\be
u=0\qquad \qquad \mbox{on }\Sig_{2}.
\ee
Let
\be
R_{\Sig_{1}}\mbox{ denote the restriction map }L^2(\Sig_{12})\rightarrow L^2(\Sig_{1}),
\ee
\be
\id_{\Sig_{1}\rightarrow \Sig^{(12)}}\mbox{ denote the embedding }L^2(\Sig_{1})\rightarrow L^2(\Sig_{12}),
\ee
\be
C_{u}^{12}:L^2(\Sig_{12})\rightarrow L^2(\Sig_{12})\mbox{ denote the operator in (\ref{BCRHPCom}) with }u\leftrightarrow \om,
\ee
\be
C_u^1:L^2(\Sig_{1})\rightarrow L^2(\Sig_{1})\mbox{ denote the operator in (\ref{BCRHPCom}) with }u \uparrow \Sig_{1}\leftrightarrow \om,
\ee
\be
C_u^E:L^2(\Sig_{1})\rightarrow L^2(\Sig_{12})\mbox{ denote the restriction of $C_u^{12}$ to }L^2(\Sig_{1}).
\ee
And, finally, let
\be
\left\{
\ba{l}
\id_{\Sig_{1}}\mbox{ and }\id_{\Sig_{12}}\mbox{ denote the identity operators on}\\
L^2(\Sig_{1})\mbox{ and }L^2(\Sig_{12}),\mbox{ respectively}.
\ea
\right.
\ee
We then have the next lemma:
\begin{lemma} \label{opralema}
\be \label{ideopra1}
C_u^{12}C_u^E=C_u^EC_u^{12},
\ee
\be \label{ideopra2}
(\id_{\Sig_{1}}-C_u^1)^{-1}=R_{\Sig_{1}}(\id_{\Sig_{12}}-C_u^{12})^{-1}\id_{\Sig_{1}\rightarrow \Sig_{12}},
\ee
\be \label{ideopra3}
(\id_{\Sig_{12}}-C_u^{12})^{-1}=\id_{\Sig_{12}}+C_u^E(\id_{\Sig_{1}}-C_u^1)^{-1}R_{\Sig_{1}},
\ee
in the sense that if the right-hand side of (\ref{ideopra2}),resp. (\ref{ideopra3}), exists,
then the left-hand side exists and identity (\ref{ideopra2}),resp. (\ref{ideopra3}), holds true.
\end{lemma}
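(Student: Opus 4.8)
The plan is to reduce all three identities to one structural consequence of the hypothesis $u=0$ on $\Sig_2$: the fact that $C_u^{12}$ factors through $\Sig_1$. First I would record the trivial relation $R_{\Sig_1}\id_{\Sig_1\ra\Sig_{12}}=\id_{\Sig_1}$, and then observe that, because $u$ (hence each of $u_\pm$) vanishes on $\Sig_2$, the products $fu_-$ and $fu_+$ are supported in $\Sig_1$ for every $f\in L^2(\Sig_{12})$ and depend only on $R_{\Sig_1}f$. This gives at once
\be\label{keyfac}
C_u^{12}=C_u^E R_{\Sig_1},\qquad C_u^1=R_{\Sig_1}C_u^E,
\ee
the first relation holding on all of $\Sig_{12}$ (the image need not vanish on $\Sig_2$), the second on $\Sig_1$. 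Proving the second relation is the one genuinely analytic point: one must check that the $\pm$ boundary values that $C_u^E$ takes at a point $k\in\Sig_1$ relative to $\Sig_{12}$ coincide a.e.\ with the boundary values that the intrinsic $C_u^1$ takes relative to $\Sig_1$. This holds because, away from the finite set $\Sig_1\cap\Sig_2$, a neighbourhood of $k$ on $\Sig_{12}$ reduces to $\Sig_1$, so the two positive/negative sides agree; this is exactly where $\mathrm{card}(\Sig_1\cap\Sig_2)<\infty$ enters. Everything after (\ref{keyfac}) is algebra.

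Composing the two relations in (\ref{keyfac}) gives the intertwining identity (\ref{ideopra1}), namely $C_u^{12}C_u^E=C_u^E R_{\Sig_1}C_u^E=C_u^E C_u^1$, both sides acting $L^2(\Sig_1)\to L^2(\Sig_{12})$; the same relations give $R_{\Sig_1}C_u^{12}=C_u^1R_{\Sig_1}$. Next I would prove (\ref{ideopra3}) by direct verification: assuming $(\id_{\Sig_1}-C_u^1)^{-1}$ exists, set $G=\id_{\Sig_{12}}+C_u^E(\id_{\Sig_1}-C_u^1)^{-1}R_{\Sig_1}$ and compute both $(\id_{\Sig_{12}}-C_u^{12})G$ and $G(\id_{\Sig_{12}}-C_u^{12})$. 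Using (\ref{ideopra1}) to slide $C_u^{12}$ across $C_u^E$, the relation $R_{\Sig_1}C_u^{12}=C_u^1R_{\Sig_1}$ to slide it across $R_{\Sig_1}$, and the telescoping $C_u^1(\id_{\Sig_1}-C_u^1)^{-1}=(\id_{\Sig_1}-C_u^1)^{-1}-\id_{\Sig_1}$, each product collapses to $\id_{\Sig_{12}}$. Hence $G$ is a genuine two-sided inverse, which is (\ref{ideopra3}).

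For (\ref{ideopra2}) I would argue in the opposite direction: assuming $G_{12}:=(\id_{\Sig_{12}}-C_u^{12})^{-1}$ exists, put $B=R_{\Sig_1}G_{12}\id_{\Sig_1\ra\Sig_{12}}$ and show $B=(\id_{\Sig_1}-C_u^1)^{-1}$. One side is immediate from $(\id_{\Sig_1}-C_u^1)R_{\Sig_1}=R_{\Sig_1}(\id_{\Sig_{12}}-C_u^{12})$, which yields $(\id_{\Sig_1}-C_u^1)B=R_{\Sig_1}(\id_{\Sig_{12}}-C_u^{12})G_{12}\id_{\Sig_1\ra\Sig_{12}}=\id_{\Sig_1}$. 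For the other side I would first rewrite $B=\id_{\Sig_1}+R_{\Sig_1}G_{12}C_u^E$ using $G_{12}=\id_{\Sig_{12}}+G_{12}C_u^{12}$ together with $C_u^{12}\id_{\Sig_1\ra\Sig_{12}}=C_u^E$, then expand $B(\id_{\Sig_1}-C_u^1)$ and use the intertwining (\ref{ideopra1}) and $G_{12}C_u^{12}=G_{12}-\id_{\Sig_{12}}$ to cancel the two $C_u^E$ terms, leaving $\id_{\Sig_1}$. Thus $B$ is two-sided, giving (\ref{ideopra2}).

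The whole computation is the second-resolvent bookkeeping already used throughout Section 3; the only care needed is to keep every operator tagged by its domain and codomain ($\Sig_1$, $\Sig_{12}$, or an embedding/restriction) and never to invoke invertibility of the operator whose inverse is being built --- in each of (\ref{ideopra2}) and (\ref{ideopra3}) the inverse is produced explicitly from the one hypothesized, matching the conditional phrasing of the statement. Accordingly, the sole non-algebraic step, and the main obstacle, is the almost-everywhere boundary-value identification underlying the second relation in (\ref{keyfac}).
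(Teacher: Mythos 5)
Your argument is correct and is essentially the standard proof of this lemma: the paper itself gives no argument, deferring to Lemma 2.56 of Deift--Zhou, whose proof rests on exactly the factorization $C_u^{12}=C_u^E R_{\Sig_{1}}$, $C_u^1=R_{\Sig_{1}}C_u^E$ that you isolate (with the a.e.\ boundary-value identification on $\Sig_{1}$ away from the finite set $\Sig_{1}\cap\Sig_{2}$ as the only analytic input), followed by second-resolvent algebra. One remark: as printed, (\ref{ideopra1}) reads $C_u^{12}C_u^E=C_u^EC_u^{12}$, whose right-hand side does not typecheck since $C_u^{12}$ does not map into $L^2(\Sig_{1})$; your reading $C_u^{12}C_u^E=C_u^EC_u^{1}$ is the intended and correct form, and is the one actually needed in the verification of (\ref{ideopra2}) and (\ref{ideopra3}).
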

\begin{proof}
See Lemma 2.56 in \cite{dz}.
\end{proof}
\par
We apply this lemma to the case $u=\om'$, $\Sig_{12}=\Sig^{(2)}$ and $\Sig_{1}=\Sig^{(3)}$. From identity (\ref{ideopra2}),
we get the following proposition, which is the main result of this subsection.
\begin{proposition}\label{m3pro}
\be\label{m3}
m(x,t)=-\frac{1}{2}([\sig_3,\int_{\Sig^{(3)}}(\id-C_{\om'})^{-1}(\xi)\om'(x,t,\xi)]\frac{d\xi}{2\pi i})_{12}.
\ee
\end{proposition}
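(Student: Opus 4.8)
The plan is to derive (\ref{m3}) from the already-established representation (\ref{m3sig2om'}) by contracting the integration contour from $\Sig^{(2)}$ down to $\Sig^{(3)}$, and to argue via Lemma \ref{opralema} that this contraction is \emph{exact} at the level of the resolvent operator. The starting point (\ref{m3sig2om'}) gives $m(x,t)$, up to an error $O(t^{-l})$, as
\[
-\frac{1}{2}\Bigl(\bigl[\sig_3,\int_{\Sig^{(2)}}\bigl((\id-C_{\om'})^{-1}\id\bigr)(\xi)\,\om'(x,t,\xi)\,\frac{d\xi}{2\pi i}\bigr]\Bigr)_{12},
\]
where the resolvent $(\id-C_{\om'})^{-1}$ is formed on $L^2(\Sig^{(2)})$. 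The whole matter is to show that replacing this $\Sig^{(2)}$-resolvent by the intrinsic $\Sig^{(3)}$-resolvent changes nothing.

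First I would invoke the structural fact recorded above, that $\om'$ is supported on $\Sig^{(3)}$, i.e. $\om'=0$ on $\Sig^{(2)}\backslash\Sig^{(3)}$. Since the integrand carries the explicit factor $\om'(\xi)$, the integral over $\Sig^{(2)}$ collapses to one over $\Sig^{(3)}$, so that only the restriction $R_{\Sig^{(3)}}\bigl((\id-C_{\om'})^{-1}\id\bigr)$ of the density contributes. This reduces the task to identifying that restriction with the density $\bigl(\id_{\Sig^{(3)}}-C_{\om'}^{1}\bigr)^{-1}\id$ computed directly on $L^2(\Sig^{(3)})$.

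The main step is to apply Lemma \ref{opralema} with $u=\om'$, $\Sig_{12}=\Sig^{(2)}$ and $\Sig_{1}=\Sig^{(3)}$; the hypothesis $u=0$ on $\Sig_{2}=\Sig^{(2)}\backslash\Sig^{(3)}$ holds by the previous paragraph, and boundedness of $(\id-C_{\om'})^{-1}$ on $L^2(\Sig^{(2)})$, guaranteeing existence of the right-hand side of (\ref{ideopra2}), is supplied by the earlier uniform resolvent bound. Identity (\ref{ideopra2}) then reads
\[
\bigl(\id_{\Sig^{(3)}}-C_{\om'}^{1}\bigr)^{-1}=R_{\Sig^{(3)}}\bigl(\id_{\Sig^{(2)}}-C_{\om'}^{12}\bigr)^{-1}\id_{\Sig^{(3)}\to\Sig^{(2)}}.
\]
The one delicate point I anticipate is that the constant input in (\ref{m3sig2om'}) is the full matrix $\id$ on $\Sig^{(2)}$, whereas the lemma feeds in the embedded constant $\id_{\Sig^{(3)}\to\Sig^{(2)}}\id$, which vanishes on $\Sig_{2}$. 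Their difference is $\id\cdot\mathbf 1_{\Sig_{2}}$, and I would dispose of it by noting that $C_{\om'}^{12}$ annihilates any function supported on $\Sig_{2}$ (because $\om'_{\pm}$ vanish there), so $(\id_{\Sig^{(2)}}-C_{\om'}^{12})^{-1}$ fixes $\id\cdot\mathbf 1_{\Sig_{2}}$ and its restriction to $\Sig^{(3)}$ is zero. Hence $R_{\Sig^{(3)}}(\id-C_{\om'})^{-1}\id=(\id_{\Sig^{(3)}}-C_{\om'}^{1})^{-1}\id$ exactly.

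Substituting this identification back into the contracted integral, and absorbing the $O(t^{-l})$ remainder of (\ref{m3sig2om'}), which is negligible for arbitrary $l$ and is suppressed in the statement, yields (\ref{m3}). I expect the hard part to be not analytic but organizational: checking the support hypothesis of the lemma and carefully tracking the restriction/embedding maps $R_{\Sig^{(3)}}$ and $\id_{\Sig^{(3)}\to\Sig^{(2)}}$ so that the constant matrix $\id$ is treated consistently on the two contours; once those operator identities are verified, the conclusion is immediate.
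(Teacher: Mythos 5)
Your proposal is correct and follows essentially the same route as the paper, which simply invokes Lemma \ref{opralema} with $u=\om'$, $\Sig_{12}=\Sig^{(2)}$, $\Sig_{1}=\Sig^{(3)}$ and identity (\ref{ideopra2}) applied to the representation (\ref{m3sig2om'}). In fact you supply more detail than the paper does --- in particular the verification that $C_{\om'}^{12}$ annihilates functions supported on $\Sig^{(2)}\backslash\Sig^{(3)}$ so that the constant input $\id$ is handled consistently, and the explicit absorption of the $O(t^{-l})$ remainder --- all of which is sound.
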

\par
Set
$$L'=L\backslash L_{\eps}$$.
Then, $\Sig^{(3)}=L'\cup \bar L'$. On $\Sig^{(3)}$, set
$\mu^{'}=(1^{\Sig^{(3)}}-C^{\Sig^{(3)}}_{\om'})^{-1}\id$. Then,
\be
M^{(3)}(x,t,k)=\id+\int_{\Sig^{(3)}}\frac{\mu^{'}(\xi)\om'(\xi)}{\xi-k}\frac{d\xi}{2\pi i}
\ee
solves the Riemann-Hilbert problem
\be\label{M3RHP}
\left\{
\ba{ll}
M^{(3)}_+(x,t,k)=M^{(3)}_-(x,t,k)J^{(3)}(x,t,k),& k\in\Sig^{(3)},\\
M^{(3)}\rightarrow \id,&k\rightarrow \infty.
\ea
\right.
\ee
where
\bea\label{M3canshu}
&\om'=\om'_++\om'_-,&\\
&b'_{\pm}=\id \pm \om'_{\pm},&\\
&J^{(3)}(x,t,k)=(b'_-)^{-1}b'_+&
\eea


\subsubsection{The Scaling operators}
\par
In this subsection, we make a further simplification of the Riemann-Hilbert problem on the truncated contour $\Sig^{(3)}$
by reducing it to the one which is stated on the four disjoint crosses, $\Sig_{A'},\Sig_{B'},\Sig_{C'}$ and $\Sig_{D'}$,
and prove that the leading term of the asymptotic expansion for $m(x,t)$ (proposition \ref{m3pro}, (\ref{m3})) can be written as the sum of four terms
corresponding to the solutions of four auxiliary Riemann-Hilbert problems, each of which is set on one of the crosses;
moreover, the solution of the latter Riemann-Hilbert problem can be presented in terms of an exactly solvable model matrix Riemann-Hilbert
problem, which is studied in the next subsection.
\par
Let us prepare the notations which are needed for exact formulations. Write $\Sig^{(3)}$ as the disjoint union of the four crosses, $\Sig_{A'},\Sig_{B'},\Sig_{C'}$ and $\Sig_{D'}$, extend the contours $\Sig_{A'},\Sig_{B'},\Sig_{C'}$ and $\Sig_{D'}$ (with orientations unchanged) to the following ones,
\[
\ba{c}
\hat \Sig_{A'}=\{k=k_0+uk_0e^{\pm\frac{3i\pi}{4}},u\in \R\},\\
\hat \Sig_{B'}=\{k=-k_0+uk_0e^{\pm\frac{i\pi}{4}},u\in \R\},\\
\hat \Sig_{C'}=\{k=ik_0+uk_0e^{-\frac{i\pi}{4}},u\in \R\}\cup \{k=ik_0+uk_0e^{-\frac{3i\pi}{4}},u\in \R\},\\
\hat \Sig_{D'}=\{k=-ik_0+uk_0e^{\frac{i\pi}{4}},u\in \R\}\cup \{k=-ik_0+uk_0e^{\frac{3i\pi}{4}},u\in \R\}.
\ea
\]
and define by $\Sig_{A},\Sig_{B},\Sig_{C}$ and $\Sig_{D}$, respectively, the contours $\{k=uk_0e^{\pm\frac{i\pi}{4}},u\in \R\}$ oriented
inward as in $\Sig_{A'}$ and $\hat \Sig_{A'}$, inward as in $\Sig_{B'}$ and $\hat \Sig_{B'}$, outward as in $\Sig_{C'}$ and $\hat \Sig_{C'}$, and
outward as in $\Sig_{D'}$ and $\hat \Sig_{D'}$, respectively.
\par

We introduce the scaling operators:
\begin{subequations}
\be\label{scalA}
N_A: f(k)\rightarrow (N_Af)(k)=f(\frac{k_0^2}{2\sqrt{\alpha}\beta \sqrt{t}}k+k_0)
\ee
\be\label{scalB}
N_B: f(k)\rightarrow (N_Bf)(k)=f(\frac{k_0^2}{2\sqrt{\alpha}\beta \sqrt{t}}k-k_0)
\ee
\be\label{scalC}
N_C: f(k)\rightarrow (N_Cf)(k)=f(\frac{-k_0^2}{2\sqrt{\alpha}\beta \sqrt{t}}k+ik_0)
\ee
\be\label{scalD}
N_D: f(k)\rightarrow (N_Df)(k)=f(\frac{-k_0^2}{2\sqrt{\alpha}\beta \sqrt{t}}k-ik_0)
\ee
\end{subequations}
Considering the action of the operators $N_{k},k\in\{A,B,C,D\}$ on $\dta(k) e^{-it\tha(k)}$, we find that,
\be\label{Na}
(N_A\dta e^{-it\tha})(k)=\dta_A^0(k)\dta_A^1(k)
\ee
where
\begin{subequations}
\be\label{scalk0dta0}
\dta_A^0(k)=\frac{k_0^{i\nu-2i\tilde\nu}}{(\sqrt{\alpha t}\beta)^{i\nu}}2^{-i\tilde \nu}e^{i\alpha\beta t-i\frac{\alpha\beta^2}{2k_0^2}t}e^{\chi_{\pm}(k_0)}e^{\tilde \chi_{\pm}'(k_0)}
\ee
\be\label{scalk0dta1}
\ba{rl}
\dta_A^1(k)=&k^{i\nu}e^{-i\frac{k^2}{4}+i\frac{k^6_0 k^3}{\zeta^5\sqrt{t}}}\frac{k_0^{2i\tilde \nu+i\nu}}{2^{i\nu-i\tilde \nu}}\frac{(\frac{k^2_0}{2\sqrt{\alpha t}\beta}k+2k_0)^{i\nu}}{(\frac{k^2_0}{2\sqrt{\alpha t}\beta}k+k_0)^{2i\nu}}\\
&((\frac{k^2_0}{2\sqrt{\alpha t}\beta}k+k_0+ik_0)(\frac{k^2_0}{2\sqrt{\alpha t}\beta}k+k_0-ik_0))^{-i\tilde \nu}\\
&e^{\chi_{\pm}(\frac{k^2_0}{2\sqrt{\alpha t}\beta}k+k_0)-\chi_{\pm} (k_0)}e^{\tilde \chi_{\pm}'(\frac{k^2_0}{2\sqrt{\alpha t}\beta}k+k_0)-\tilde \chi_{\pm}'(k_0)}
\ea
\ee
\end{subequations}
with
\be\label{scaltildechipm}
\tilde \chi_{\pm}'(k)=e^{-\frac{1}{2\pi i}\int_{\pm k_0}^{0}\ln |k-ik'|d\ln(1+|r(ik')|^2)}
\ee
And
\be\label{Nb}
(N_B\dta e^{-it\tha})(k)=\dta_B^0(k)\dta_B^1(k)
\ee
where
\begin{subequations}
\be\label{scal-k0dta0}
\dta_B^0(k)=\frac{k_0^{i\nu-2i\tilde\nu}}{(\sqrt{\alpha t}\beta)^{i\nu}}2^{-i\tilde \nu}e^{i\alpha\beta t-i\frac{\alpha\beta^2}{2k_0^2}t}e^{\chi_{\pm}(-k_0)}e^{\tilde \chi_{\pm}'(-k_0)}
\ee
\be\label{scal-k0dta1}
\ba{rl}
\dta_B^1(k)=&(-k)^{i\nu}e^{-i\frac{k^2}{4}+i\frac{k^6_0 k^3}{\zeta^5\sqrt{t}}}\frac{(-k_0)^{2i\tilde \nu+i\nu}}{2^{i\nu-i\tilde \nu}}\frac{(\frac{k^2_0}{2\sqrt{\alpha t}\beta}k-2k_0)^{i\nu}}{(\frac{k^2_0}{2\sqrt{\alpha t}\beta}k-k_0)^{2i\nu}}\\
&((\frac{k^2_0}{2\sqrt{\alpha t}\beta}k-k_0+ik_0)(\frac{k^2_0}{2\sqrt{\alpha t}\beta}k-k_0-ik_0))^{-i\tilde \nu}\\
&e^{\chi_{\pm}(\frac{k^2_0}{2\sqrt{\alpha t}\beta}k-k_0)-\chi_{\pm} (-k_0)}e^{\tilde \chi_{\pm}'(\frac{k^2_0}{2\sqrt{\alpha t}\beta}k-k_0)-\tilde \chi_{\pm}'(-k_0)}
\ea
\ee
\end{subequations}
with $\tilde \chi_{\pm}'(k)$ defined by (\ref{scaltildechipm}).
\par
For $N_C$,
\be\label{Nc}
(N_C\dta e^{-it\tha})(k)=\dta_C^0(k)\dta_C^1(k)
\ee
where
\begin{subequations}
\be\label{scalik0dta0}
\dta_C^0(k)=\frac{k_0^{2i\nu-i\tilde\nu}}{(\sqrt{\alpha t}\beta)^{-i\tilde \nu}}2^{i\nu}e^{i\alpha\beta t+i\frac{\alpha\beta^2}{2k_0^2}t}e^{\chi'_{\pm}(ik_0)}e^{\tilde \chi_{\pm}(ik_0)}
\ee
\be\label{scalik0dta1}
\ba{rl}
\dta_C^1(k)=&(ik)^{-i\tilde \nu}e^{-i\frac{k^2}{4}+i\frac{k^6_0 k^3}{\zeta^5\sqrt{t}}}\frac{(ik_0)^{-i\tilde \nu}(k_0)^{-2i\nu}}{2^{-i\nu-i\tilde \nu}}\frac{(\frac{-k^2_0}{2\sqrt{\alpha t}\beta}k+ik_0)^{2i\tilde\nu}}{(\frac{-k^2_0}{2\sqrt{\alpha t}\beta}k+2ik_0)^{i\tilde \nu}}\\
&((\frac{-k^2_0}{2\sqrt{\alpha t}\beta}k+ik_0+k_0)(k_0-(\frac{k^2_0}{2\sqrt{\alpha t}\beta}k+ik_0)))^{i \nu}\\
&e^{\chi'_{\pm}(\frac{-k^2_0}{2\sqrt{\alpha t}\beta}k+ik_0)-\chi'_{\pm} (ik_0)}e^{\tilde \chi_{\pm}(\frac{-k^2_0}{2\sqrt{\alpha t}\beta}k+ik_0)-\tilde \chi_{\pm}(ik_0)}
\ea
\ee
\end{subequations}
with
\be\label{scalchipm}
\chi_{\pm}'(k)=e^{-\frac{1}{2\pi i}\int_{0}^{\pm k_0}\ln |k-k'|d\ln(1-|r(k')|^2)}
\ee
For $N_D$
\be\label{Nd}
(N_D\dta e^{-it\tha})(k)=\dta_D^0(k)\dta_D^1(k)
\ee
where

\begin{subequations}
\be\label{scal-ik0dta0}
\dta_D^0(k)=\frac{k_0^{2i\nu-i\tilde\nu}}{(\sqrt{\alpha t}\beta)^{-i\tilde \nu}}2^{i\nu}e^{i\alpha\beta t+i\frac{\alpha\beta^2}{2k_0^2}t}e^{\chi'_{\pm}(-ik_0)}e^{\tilde \chi_{\pm}(-ik_0)}
\ee

\be\label{scal-ik0dta1}
\ba{rl}
\dta_D^1(k)=&(-ik)^{-i\tilde \nu}e^{-i\frac{k^2}{4}+i\frac{k^6_0 k^3}{\zeta^5\sqrt{t}}}\frac{(-ik_0)^{-i\tilde \nu}(k_0)^{-2i\nu}}{2^{-i\nu-i\tilde \nu}}\frac{(\frac{-k^2_0}{2\sqrt{\alpha t}\beta}k-ik_0)^{2i\tilde\nu}}{(\frac{-k^2_0}{2\sqrt{\alpha t}\beta}k-2ik_0)^{i\tilde \nu}}\\
&((\frac{-k^2_0}{2\sqrt{\alpha t}\beta}k-ik_0+k_0)(k_0-(\frac{k^2_0}{2\sqrt{\alpha t}\beta}k-ik_0)))^{i \nu}\\
&e^{\chi'_{\pm}(\frac{-k^2_0}{2\sqrt{\alpha t}\beta}k-ik_0)-\chi'_{\pm} (ik_0)}e^{\tilde \chi_{\pm}(\frac{-k^2_0}{2\sqrt{\alpha t}\beta}k-ik_0)-\tilde \chi_{\pm}(ik_0)}
\ea
\ee
\end{subequations}
\par
Set
\be\label{lDta}
\Dta^0_{l}=(\dta_l^0(k))^{\sig_3},\quad l\in\{A,B,C,D\}
\ee
and let $\tilde \Dta^0_{l}$ denote right multiplication by $\Dta^0_{l}$,
\be\label{lDtarightmulty}
\tilde \Dta^0_{l}\phi=\phi\Dta^0_{l}.
\ee
Denote
\be\label{omk'}
\ba{lll}
\om^{l'}=\left\{\ba{ll}\om',&k\in \Sig_{l'}\\0,&k\in \Sig^{(3)}\backslash\Sig_{l'}\ea\right.&and&\hat \om^{l'}=\left\{\ba{ll}\om^{l'},&k\in \hat \Sig_{l'}\\0,&k\in \hat \Sig_{l'}\backslash \Sig_{l'}\ea\right.
\ea
\ee
According to this.
\be\label{om'fac}
\om'=\sum_{l\in\{A,B,C,D\}}\om^{l'},\quad C^{\Sig^{(3)}}_{\om'}=\sum_{l\in\{A,B,C,D\}}C^{\Sig^{(3)}}_{\om^{l'}}=\sum_{l\in\{A,B,C,D\}}C^{\Sig_{l'}}_{\om^{l'}}.
\ee
\begin{proposition}
For $l,\iota=\{A,B,C,D\}$, $l\ne \iota$ we have
\begin{subequations}
\be
||C^{\Sig^{(3)}}_{\om^{l'}}C^{\Sig^{(3)}}_{\om^{\iota'}}||_{L^{2}(\Sig^{(3)})}\le C(k_0)t^{-\frac{1}{2}},
\ee
\be
||C^{\Sig^{(3)}}_{\om^{l'}}C^{\Sig^{(3)}}_{\om^{\iota'}}||_{L^{\infty}(\Sig^{(3)})\rightarrow L^2(\Sig^{(3)})}\le C(k_0)t^{-\frac{3}{4}}.
\ee
\end{subequations}
\end{proposition}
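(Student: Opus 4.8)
The decisive observation is that, because $l\ne\iota$, the two truncated crosses $\Sig_{l'}$ and $\Sig_{\iota'}$ (centred at distinct points among $k_0,-k_0,ik_0,-ik_0$) are disjoint and separated by a distance $d=\mathrm{dist}(\Sig_{l'},\Sig_{\iota'})\ge c(k_0)>0$; this is built into the construction of $\Sig^{(3)}$, where the neighbourhood of the origin at which the crosses would otherwise meet has been cut away. The plan is to use this separation to convert the product of Cauchy operators into a smoothing operator whose kernel carries no Cauchy singularity, and then to feed in the $L^1$ and $L^2$ smallness of $\om'$ supplied by Proposition \ref{omfanshu}.

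First I would unwind the composition using the definition (\ref{BCRHPCom}). Since $\om^{\iota'}$ is supported on $\Sig_{\iota'}$, the function $g:=C^{\Sig^{(3)}}_{\om^{\iota'}}f=C_+(f\om^{\iota'}_-)+C_-(f\om^{\iota'}_+)$ is analytic in a neighbourhood of the disjoint contour $\Sig_{l'}$, so its two boundary values there coincide and
\[
g(k)=\int_{\Sig_{\iota'}}\frac{(f\om^{\iota'})(\xi)}{\xi-k}\,\frac{d\xi}{2\pi i},\qquad k\in\Sig_{l'}.
\]
The kernel obeys $|\xi-k|^{-1}\le d^{-1}$, which is exactly where the gain comes from.

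For the $L^2$ estimate I would bound the displayed integral by Cauchy--Schwarz, $||g||_{L^\infty(\Sig_{l'})}\le(2\pi d)^{-1}||f||_{L^2}\,||\om^{\iota'}||_{L^2}$, and then use the $L^2$-boundedness of $C_\pm$ together with the support of $\om^{l'}$ to get $||C^{\Sig^{(3)}}_{\om^{l'}}g||_{L^2}\le c\,||g||_{L^\infty(\Sig_{l'})}\,||\om^{l'}||_{L^2}$; composing the two yields
\[
||C^{\Sig^{(3)}}_{\om^{l'}}C^{\Sig^{(3)}}_{\om^{\iota'}}||_{L^2(\Sig^{(3)})}\le \frac{c}{d}\,||\om^{l'}||_{L^2}\,||\om^{\iota'}||_{L^2}.
\]
For the $L^\infty\to L^2$ estimate the only change is to bound the same integral by H\"older in the form $L^\infty\cdot L^1$, giving $||g||_{L^\infty(\Sig_{l'})}\le(2\pi d)^{-1}||f||_{L^\infty}\,||\om^{\iota'}||_{L^1}$ and hence
\[
||C^{\Sig^{(3)}}_{\om^{l'}}C^{\Sig^{(3)}}_{\om^{\iota'}}||_{L^\infty(\Sig^{(3)})\to L^2(\Sig^{(3)})}\le\frac{c}{d}\,||\om^{\iota'}||_{L^1}\,||\om^{l'}||_{L^2}.
\]

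It remains to insert the bounds from Proposition \ref{omfanshu}: restricting $||\om'||_{L^2(\Sig^{(3)})}\le ct^{-1/4}$ and $||\om'||_{L^1(\Sig^{(3)})}\le ct^{-1/2}$ to the individual crosses gives $||\om^{l'}||_{L^2},||\om^{\iota'}||_{L^2}\le ct^{-1/4}$ and $||\om^{\iota'}||_{L^1}\le ct^{-1/2}$. With $d\ge c(k_0)$ the first estimate becomes $C(k_0)t^{-1/2}$ and the second $C(k_0)\,t^{-1/4}t^{-1/2}=C(k_0)t^{-3/4}$, as claimed. The only genuinely nontrivial step is the first one --- recognising that $l\ne\iota$ forces the intermediate Cauchy operator to act across disjoint contours, so that the composition is smoothing; everything afterwards is H\"older bookkeeping to place $L^1$, $L^2$ or $L^\infty$ on the correct factor and thereby land on the stated powers of $t$.
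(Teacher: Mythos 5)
Your argument is correct and is precisely the content of the proof the paper invokes by citing Lemma 3.5 of Deift--Zhou: the disjointness of the truncated crosses for $l\ne\iota$ removes the Cauchy singularity from the composed kernel, and the stated rates then follow from $||\om'||_{L^2}\le ct^{-1/4}$ and $||\om'||_{L^1}\le ct^{-1/2}$ in Proposition \ref{omfanshu}. You have in effect supplied the details that the paper leaves to the reference, with the same decomposition and the same H\"older placement of the $L^1$, $L^2$, $L^\infty$ norms.
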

\begin{proof}
Analogous to lemma 3.5 in \cite{dz}.
\end{proof}
Let us prove some technical results concerning the operators $C^{\Sig_{l'}}_{\om^{l'}}$ and $C^{\hat\Sig_{l'}}_{\hat\om^{l'}}$
\begin{proposition}
For $l\in\{A,B,C,D\}$,
\be\label{ChatomrealCom}
C^{\hat\Sig_{l'}}_{\hat\om^{l'}}=(N_{l})^{-1}\tilde{(\Dta^0_l)}^{-1}C^{\Sig_{l'}}_{\om^{l'}}\tilde{(\Dta^0_l)}N_l,
\quad \om^l=(\Dta^0_l)^{-1}(N_l\hat \om^{l'})\Dta^0_l.
\ee
where
\begin{subequations}
\be
\left.C^{\Sig_{l'}}_{\om^{l'}}\right|_{\bar L_{l}}=-C_+(\cdot \left(\ba{cc}0&0\\\ol{R(\ol{(N_l k)})}(\dta_l^1)^{-2}&0\ea\right)),
\ee
\be
\left.C^{\Sig_{l'}}_{\om^{l'}}\right|_{L_{l}}=C_-(\cdot \left(\ba{cc}0&R((N_l k))(\dta^1_l)^2\\0&0\ea\right)).
\ee
\end{subequations}
here
\begin{subequations}
\be
L_{e}=\{k=\frac{2u\sqrt{\alpha t}\beta}{k_0}e^{-\frac{i\pi}{4}},-\eps<u<\infty\},\quad e=A,B,
\ee
\be
L_{n}=\{k=-\frac{2u\sqrt{\alpha t}\beta}{k_0}e^{\frac{i\pi}{4}},-\eps<u<\infty\},\quad n=C,D.
\ee
\end{subequations}
\end{proposition}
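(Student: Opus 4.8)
The plan is to deduce the identity (\ref{ChatomrealCom}) from the affine covariance of the Cauchy operators combined with the explicit factorizations $N_l(\dta e^{-it\tha})=\dta^0_l\dta^1_l$ recorded in (\ref{Na})--(\ref{Nd}). First I would note that each scaling operator $N_l$ is precomposition with a scalar affine map $k\mapsto c_lk+p_l$ (with $c_l=\pm k_0^2/(2\sqrt{\alpha}\beta\sqrt{t})$ and $p_l\in\{\pm k_0,\pm ik_0\}$) carrying the standard cross $\Sig_{l'}$ onto the shifted cross $\hat\Sig_{l'}$. Under such a change of variable the Cauchy kernel is invariant, since with $\xi=c_l\xi'+p_l$, $k=c_lk'+p_l$ one has $\frac{d\xi}{\xi-k_{\pm}}=\frac{d\xi'}{\xi'-k'_{\pm}}$ whenever $c_l>0$; hence $N_l$ intertwines the plain Cauchy operators, $N_l^{-1}C_{\pm}^{\hat\Sig_{l'}}N_l=C_{\pm}^{\Sig_{l'}}$, for $l=A,B$. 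For $l=C,D$ the factor $c_l$ is negative, so the map reverses orientation and would interchange $C_+\leftrightarrow C_-$; this is exactly cancelled by the fact that $\Sig_C,\Sig_D$ were oriented outward, opposite to $\Sig_A,\Sig_B$, so the same intertwining holds uniformly in $l$.

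Next I would insert the $\dta$-factors. Because $\hat\om^{l'}$ is assembled from the off-diagonal entries of $b'_{\pm}$, whose nonzero entries are of the form $R(k)\dta^2(k)e^{-2it\tha(k)}=R(k)(\dta e^{-it\tha})^2$ on $L_l$ and $\ol{R(\bar k)}(\dta e^{-it\tha})^{-2}$ on $\bar L_l$, applying $N_l$ turns $\dta e^{-it\tha}$ into $\dta^0_l\dta^1_l$. The decisive point, visible in (\ref{scalk0dta0}), (\ref{scal-k0dta0}), (\ref{scalik0dta0}) and (\ref{scal-ik0dta0}), is that $\dta^0_l$ is independent of $k$; thus $N_l\hat\om^{l'}$ carries the constant scalar $(\dta^0_l)^{\pm2}$ in its off-diagonal entries. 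Conjugating by the constant diagonal matrix $\Dta^0_l=(\dta^0_l)^{\sig_3}$ removes these scalars, a $(12)$-entry being multiplied by $(\dta^0_l)^2$ and a $(21)$-entry by $(\dta^0_l)^{-2}$, which is precisely the defining relation $\om^l=(\Dta^0_l)^{-1}(N_l\hat\om^{l'})\Dta^0_l$. Since $\Dta^0_l$ is constant, right multiplication $\tilde{(\Dta^0_l)}$ commutes with $C_{\pm}$ and with $N_l$; combining this with the intertwining of the first step and with the definition (\ref{BCRHPCom}) of $C_{\om}$ yields $C^{\hat\Sig_{l'}}_{\hat\om^{l'}}=(N_l)^{-1}\tilde{(\Dta^0_l)}^{-1}C^{\Sig_{l'}}_{\om^{l'}}\tilde{(\Dta^0_l)}N_l$.

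Finally, the explicit restriction formulas are obtained by unwinding $C_{\om^{l'}}f=C_+(f\om^{l'}_-)+C_-(f\om^{l'}_+)$ piece by piece. On $L_l$ only $\om^{l'}_+$ is supported: it is upper triangular with $(12)$-entry equal, after the removal of $\dta^0_l$, to $R(N_lk)(\dta^1_l)^2$, so the operator reduces to $C_-$ of the corresponding nilpotent matrix; on $\bar L_l$ only $\om^{l'}_-$ is supported, lower triangular with $(21)$-entry $\ol{R(\ol{N_lk})}(\dta^1_l)^{-2}$, giving $C_+$ of the corresponding matrix. The rescaled contours $L_e$ ($e=A,B$) and $L_n$ ($n=C,D$) are just the $N_l$-images of $L_l$. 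I expect the main obstacle to be the careful bookkeeping of orientation: assigning $C_+$ versus $C_-$ correctly and pinning down the overall minus sign on $\bar L_l$, which stem from the reversed orientation of the barred contours and, for $C,D$, from the negative scaling factor $c_l$. Everything else is a direct substitution once the $k$-independence of $\dta^0_l$ is used to pull it outside the Cauchy integral.
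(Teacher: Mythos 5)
Your argument is correct and takes essentially the same route as the paper: the paper's proof records that $|\dta^0_A|=1$ (so $\tilde \Dta^0_A$ is unitary), declares identity (\ref{ChatomrealCom}) ``a simple change-of-variables argument,'' and then writes out the conjugated $\om$ on $L_A$ and $\bar L_A$ exactly as you do in your final step. Your version merely makes explicit what the paper leaves implicit, namely the affine covariance of the Cauchy kernel under $N_l$ (with the orientation bookkeeping for $l=C,D$) and the $k$-independence of $\dta^0_l$ that lets the conjugation pass through the Cauchy integral.
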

\begin{proof}
We consider the case $l=A$, the cases $l=B,l=C$ and $l=D$ follow in an analogous manner. Since from (\ref{scalk0dta0}), $|\dta^0_A|=1$, it follows from the definition of the operator $\tilde \Dta^0_{A}$ in (\ref{lDta}) that $\tilde \Dta^0_{A}$ is a unitary operator. Then the equation (\ref{ChatomrealCom}) is a
simple change-of-variables argument.
\par
We note that
\be
((\Dta^0_A)^{-1}(N_A\hat \om^{A'})\Dta^0_1)(k)=\left(\ba{cc}0&R((N_A k))(\dta^A_l)^2\\0&0\ea\right)
\ee
on $L_{A}$, otherwise $((\Dta^0_A)^{-1}(N_A\hat \om^{l'})\Dta^0_A)(k)=0$.
Similarly,
\be
((\Dta^0_A)^{-1}(N_A\hat \om^{A'})\Dta^0_l)(k)=\left(\ba{cc}0&0\\\ol{R(\ol{(N_A k)})}(\dta_A^1)^{-2}&0\ea\right)
\ee
on $\bar L_{A}$, otherwise $((\Dta^0_A)^{-1}(N_A\hat \om^{A'})\Dta^0_l)(k)=0$.
\end{proof}
\par
From definitions of $R(k)$, we know that (for case $A$)
\begin{subequations}
\be
R(k_0+)=\lim_{\re k>k_0}R(k)=-\ol{r(k_0)},
\ee
\be
R(k_0-)=\lim_{\re k<k_0}R(k)=\frac{\ol{r(k_0)}}{1-|r(k_0)|^2}.
\ee
\end{subequations}
As $t\rightarrow \infty$,
\be
\bar R(\frac{k_0^2}{2\sqrt{\alpha}\beta \sqrt{t}}k+k_0)(\dta_A^1)^{-2}-\bar R(k_0\pm)k^{-2i\nu}e^{i\frac{k^2}{2}}\rightarrow 0.
\ee
We obtain the following estimate on the rate of convergence:
\begin{proposition}\label{Rinfty}
Let $\kappa$ be a fixed small number with $0<\kappa<\frac{1}{2}$. Then, for $k\in \bar L_A$,
\be
\left|\bar R\left(\frac{k_0^2}{2\sqrt{\alpha t}\beta}k+k_0\right)(\dta^1_A(k))^{-2}-\bar R(k_0\pm)k^{-2i\nu}e^{i\frac{k^2}{2}}\right|\le C(k_0)|e^{i\frac{\kappa}{2}k^2}|\left(\frac{\log t}{\sqrt{t}}\right)
\ee
\end{proposition}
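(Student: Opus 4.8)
The plan is to insert the intermediate quantity $\bar R(k_0\pm)(\dta_A^1(k))^{-2}$ and split the error by the triangle inequality into a part measuring the variation of $R$ and a part measuring the variation of $\dta_A^1$:
\[
\left|\bar R(\tau k+k_0)(\dta_A^1)^{-2}-\bar R(k_0\pm)k^{-2i\nu}e^{i\frac{k^2}{2}}\right|\le |\bar R(\tau k+k_0)-\bar R(k_0\pm)|\,|(\dta_A^1)^{-2}|+|\bar R(k_0\pm)|\,\left|(\dta_A^1)^{-2}-k^{-2i\nu}e^{i\frac{k^2}{2}}\right|,
\]
where I abbreviate $\tau=\frac{k_0^2}{2\sqrt{\al t}\beta}=O(t^{-1/2})$, so that the scaled argument $\tau k+k_0\to k_0$ at rate $t^{-1/2}$; the sign $\pm$ records which of the two rays of $\bar L_A$ one sits on, along which $\tau k+k_0$ approaches $k_0$ from $\re k>k_0$ or $\re k<k_0$. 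Throughout I would use that on $\bar L_A$ the factor $|e^{ik^2/2}|$ provides Gaussian decay, so that any fixed polynomial in $k$ times $|e^{ik^2/2}|$ is bounded by $C|e^{i\frac{\kappa}{2}k^2}|$, having spent the fraction $1-\kappa$ of the decay to absorb the polynomial; this is precisely the role of the retained factor $|e^{i\frac{\kappa}{2}k^2}|$ in the claimed bound.

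For the first term, I would recall from the appendix that $R$ is piecewise rational, hence smooth up to $k_0$ from the relevant side, which yields the Lipschitz estimate $|\bar R(\tau k+k_0)-\bar R(k_0\pm)|\le C(k_0)\tau|k|$. Since $\dta_A^1=k^{i\nu}e^{-ik^2/4}(1+o(1))$ with $\nu\in\R$ and $|k^{-2i\nu}|=e^{2\nu\arg k}$ bounded on the rays of $\bar L_A$, one also has $|(\dta_A^1)^{-2}|\le C(k_0)|e^{ik^2/2}|$. Thus the first term is $\le C(k_0)\tau|k|\,|e^{ik^2/2}|\le C(k_0)t^{-1/2}|e^{i\frac{\kappa}{2}k^2}|$, which is already absorbed into the asserted bound with room to spare.

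The second term is the main one. I would factor $(\dta_A^1)^{-2}=k^{-2i\nu}e^{ik^2/2}\,\Pi(k)$, where $\Pi$ collects the remaining factors of (\ref{scalk0dta1}): the cubic phase $e^{i k_0^6 k^3/(\zeta^5\sqrt t)}$, the algebraic prefactors $(\tau k+2k_0)^{i\nu}(\tau k+k_0)^{-2i\nu}$ together with the $\tilde\nu$-factor and the $t$-independent constant $k_0^{2i\tilde\nu+i\nu}2^{i\tilde\nu-i\nu}$, and the differences $e^{\chi_\pm(\tau k+k_0)-\chi_\pm(k_0)}$ and $e^{\tilde\chi_\pm'(\tau k+k_0)-\tilde\chi_\pm'(k_0)}$. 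A direct computation shows the $t$-independent constants combine exactly to $1$ (the powers of $k_0$ and of $2$ cancel), so $\Pi\to1$, and $|(\dta_A^1)^{-2}-k^{-2i\nu}e^{ik^2/2}|=|k^{-2i\nu}e^{ik^2/2}|\,|\Pi-1|\le C|e^{ik^2/2}|\sum_j|\Pi_j-1|$ factorwise. Writing each algebraic factor as $(1+\tau k/c)^{i\mu}=e^{i\mu\log(1+\tau k/c)}$ gives deviation $O(\tau|k|)=O(|k|/\sqrt t)$, and expanding the exponential gives the cubic phase deviation $O(|k|^3/\sqrt t)$.

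The hard part, and the source of the $\log t$, will be the modulus of continuity of $\chi_\pm$ at the endpoint $k_0$. From (\ref{chipm}) the density $\ln\!\big(\tfrac{1-|r(k')|^2}{1-|r(k_0)|^2}\big)$ vanishes as $k'\to k_0$, so integrating by parts shows $\chi_\pm'(k)$ has only a logarithmic singularity, $\chi_\pm'(k)=O(\log|k-k_0|)$ as $k\to k_0$. Integrating along the segment from $k_0$ to $\tau k+k_0$ then yields $|\chi_\pm(\tau k+k_0)-\chi_\pm(k_0)|\le C\tau|k|\,|\log(\tau|k|)|\le C\tau|k|\log t$ (the residual $\log|k|$ being harmless, since it is absorbed into the Gaussian), and likewise for $\tilde\chi_\pm'$; hence $|e^{\chi_\pm(\tau k+k_0)-\chi_\pm(k_0)}-1|\le C\frac{|k|\log t}{\sqrt t}$. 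Collecting all contributions gives $|\Pi-1|\le C(k_0)\frac{(1+|k|^3)\log t}{\sqrt t}$, so the second term is $\le C(k_0)(1+|k|^3)|e^{ik^2/2}|\frac{\log t}{\sqrt t}\le C(k_0)\frac{\log t}{\sqrt t}|e^{i\frac{\kappa}{2}k^2}|$ after absorbing $(1+|k|^3)$ into the $(1-\kappa)$-part of the Gaussian. Adding the two terms yields the assertion. The one delicate point I expect is keeping the logarithmic endpoint behavior of $\chi_\pm$ uniform in $k$ along the \emph{unbounded} contour $\bar L_A$, which is exactly what the retained decay $|e^{i\frac{\kappa}{2}k^2}|$ is designed to absorb.
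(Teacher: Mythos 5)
Your proposal is correct and follows essentially the same route as the paper's proof: the paper telescopes the same difference into six terms $\Rmnum{1}$--$\Rmnum{6}$ (your two-term triangle-inequality split followed by the factorwise bound on $|\Pi-1|$ is just a repackaging of that telescoping), estimates the $R$-variation and the algebraic/cubic-phase factors by Lipschitz and derivative bounds of order $t^{-1/2}$, extracts the $\log t$ from the Cauchy integrals $\chi_\pm,\tilde\chi_\pm'$ via integration by parts exactly as you do, and uses the surplus Gaussian factor $|e^{i(1-\kappa)k^2/2}|$ to absorb polynomial growth along the unbounded contour. Your check that the $t$-independent constants in $\dta_A^1$ multiply to $1$ is also consistent with the paper's normalization.
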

\begin{proposition}(see Proposition 6.2 in \cite{avkahv})
For general operators $C_{\om^{l'}}^{\Sig^{'}},l\in\{1,2,\dots,N\}$, if $(1'-C_{\om^{l'}}^{\Sig^{'}})^{-1}$ exist, then
\begin{subequations}
\be\label{generalop}
(1'-\sum_{1\le X\le N}C_{\om^{X'}}^{\Sig^{'}})(1'+\sum_{1\le Y\le N}C_{\om^{Y'}}^{\Sig^{'}}(1'-C_{\om^{X'}}^{\Sig^{'}})^{-1})=1'-\sum_{1\le Y\le N}\sum_{1\le X\le N}(1-\dta_{XY})C_{\om^{Y'}}^{\Sig^{'}}C_{\om^{X'}}^{\Sig^{'}}(1'-C_{\om^{X'}}^{\Sig^{'}})^{-1}
\ee
and
\be
(1'+\sum_{1\le Y\le N}C_{\om^{Y'}}^{\Sig^{'}}(1'-C_{\om^{X'}}^{\Sig^{'}})^{-1})(1'-\sum_{1\le X\le N}C_{\om^{X'}}^{\Sig^{'}})=1'-\sum_{1\le Y\le N}\sum_{1\le X\le N}(1-\dta_{XY})(1'-C_{\om^{Y'}}^{\Sig^{'}})^{-1}C_{\om^{Y'}}^{\Sig^{'}}C_{\om^{X'}}^{\Sig^{'}}
\ee
where $\dta_{XY}$ is the Kronecker delta.
\end{subequations}
\end{proposition}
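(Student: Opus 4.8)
The plan is to establish both formulas as purely algebraic operator identities: the only hypothesis used is the existence of the individual resolvents $(1'-C_{\om^{l'}}^{\Sig'})^{-1}$, $l=1,\dots,N$, and no analytic estimate enters. To lighten notation write $C_l:=C_{\om^{l'}}^{\Sig'}$, set $S:=\sum_X C_X$, and let $P:=1'+\sum_Y C_Y(1'-C_Y)^{-1}$ be the common middle factor of both displayed products (I read the resolvent index as the summation index $Y$). Conceptually, $P$ is the operator that would be the exact two-sided inverse of $1'-S$ if the cross terms $C_YC_X$ with $Y\ne X$ were absent, so the content of the proposition is simply that these cross terms constitute the sole obstruction. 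The single elementary fact used throughout is the resolvent identity
\be
C_l(1'-C_l)^{-1}=(1'-C_l)^{-1}-1'=(1'-C_l)^{-1}C_l,
\ee
which follows from $C_l=1'-(1'-C_l)$ and in particular shows that $C_l$ commutes with $(1'-C_l)^{-1}$.

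For the first identity I would multiply out $(1'-S)P$ into the four terms
\be
1'+\sum_Y C_Y(1'-C_Y)^{-1}-\sum_X C_X-\sum_X\sum_Y C_XC_Y(1'-C_Y)^{-1},
\ee
and split the double sum according to $X=Y$ and $X\ne Y$. On the diagonal I rewrite $C_X^2(1'-C_X)^{-1}=C_X(1'-C_X)^{-1}-C_X$ by the resolvent identity. The heart of the computation is then a pair of cancellations: the linear sum $\sum_Y C_Y(1'-C_Y)^{-1}$ is cancelled by the $\sum_X C_X(1'-C_X)^{-1}$ thrown off by the diagonal rewrite (the same sum in a different dummy index), and $-\sum_X C_X$ is cancelled by the $+\sum_X C_X$ produced from the same rewrite. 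What remains is $1'-\sum_{X\ne Y}C_XC_Y(1'-C_Y)^{-1}$, and relabelling $X\leftrightarrow Y$ gives exactly $1'-\sum_{X,Y}(1-\dta_{XY})C_YC_X(1'-C_X)^{-1}$.

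The second identity is obtained from the same expansion applied to $P(1'-S)$ instead of $(1'-S)P$. The diagonal extraction and the two cancellations proceed verbatim, leaving $1'-\sum_{Y\ne X}C_Y(1'-C_Y)^{-1}C_X$; invoking the commutativity $C_Y(1'-C_Y)^{-1}=(1'-C_Y)^{-1}C_Y$ from the resolvent identity rewrites this in the stated left-resolvent form $1'-\sum_{X,Y}(1-\dta_{XY})(1'-C_Y)^{-1}C_YC_X$.

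I do not anticipate a genuine obstacle, since the argument is a finite formal rearrangement valid in any unital algebra; the only point demanding care is the bookkeeping. One must attach each resolvent to its correct operator factor, track which index is being summed so that the diagonal extraction is applied to the right term, and perform the final relabelling consistently. The statement is given for general finite $N$; in the application of the present paper $N=4$, corresponding to the four crosses $\Sig_{A'},\Sig_{B'},\Sig_{C'},\Sig_{D'}$.
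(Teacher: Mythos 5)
Your proof is correct, and it fills in the computation that the paper only gestures at: the paper's own ``proof'' is a one-line appeal to the existence of the resolvents, \emph{induction on $N$}, and the second resolvent identity, whereas you give a direct, non-inductive expansion of $(1'-S)P$ and $P(1'-S)$ valid for all $N$ at once. The two routes rest on the same elementary fact, $C_l(1'-C_l)^{-1}=(1'-C_l)^{-1}-1'=(1'-C_l)^{-1}C_l$ (which is what the second resolvent identity reduces to here), but your version makes the mechanism transparent: the diagonal terms of the double sum are exactly what cancels both the linear resolvent sum and $-\sum_X C_X$, so the off-diagonal cross terms $C_YC_X$, $Y\ne X$, are the sole surviving obstruction to $P$ being a two-sided inverse of $1'-S$. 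That is precisely the structural point the proposition is used for later (the cross terms are then shown to be $O(t^{-1/2})$ in Proposition 3.9). You also correctly resolved the typo in the displayed formula, where the resolvent index inside the sum over $Y$ is printed as $X$ but must be read as the summation index $Y$ (as in Proposition 6.2 of Kitaev--Vartanian); your bookkeeping with the relabelling $X\leftrightarrow Y$ at the end reproduces the stated right-hand sides exactly. No gap.
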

\begin{proof}
Assumption the existence of general operators $(1'-C_{\om^{l'}}^{\Sig^{'}})^{-1},l\in\{1,2,\dots,N\}$, induction, and a straightforward application of the second resolvent identity.
\end{proof}
\begin{lemma}\label{fourtermsRH}
If, for $l\in\{A,B,C,D\}$, $(1_{\Sig_{l'}}-C_{\om^{l'}}^{{\Sig_{l'}}})^{-1}$ bounded, then as $t\rightarrow \infty$,
\be\label{fourtermsm}
m(x,t)=-\frac{1}{2}\sum_{l\in\{A,B,C,D\}}\left(\int_{\Sig_{l'}}[\sig_3,((1_{\Sig_{l'}}-C_{\om^{l'}}^{{\Sig_{l'}}})^{-1}\id)(\xi)\om^{l'}(\xi)]\frac{d\xi}{2\pi i}\right)_{12}+O(\frac{C}{t}).
\ee
\end{lemma}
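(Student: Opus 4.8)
The plan is to start from the representation \eqref{m3} and to decouple the single resolvent $(1_{\Sig^{(3)}}-C_{\om'})^{-1}$, which is tied to the whole truncated contour $\Sig^{(3)}$, into the four model resolvents $(1_{\Sig_{l'}}-C^{\Sig_{l'}}_{\om^{l'}})^{-1}$ attached to the separate crosses, while showing that every cross-coupling contributes only $O(t^{-1})$. The two tools are the splitting \eqref{om'fac}, namely $C_{\om'}=\sum_{l}C^{\Sig_{l'}}_{\om^{l'}}$ with $\om'=\sum_l\om^{l'}$ and $\om^{l'}$ supported on $\Sig_{l'}$, and the decoupling identity \eqref{generalop} applied with $N=4$ to $\{\om^{A'},\om^{B'},\om^{C'},\om^{D'}\}$.

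Writing $\mu^{l'}=(1_{\Sig_{l'}}-C^{\Sig_{l'}}_{\om^{l'}})^{-1}\id$, which exists by the hypothesis of the lemma, I set
\[
E=1_{\Sig^{(3)}}+\sum_{l}C^{\Sig_{l'}}_{\om^{l'}}(1_{\Sig_{l'}}-C^{\Sig_{l'}}_{\om^{l'}})^{-1},\qquad G=\sum_{l\ne\iota}C^{\Sig_{l'}}_{\om^{l'}}C^{\Sig_{\iota'}}_{\om^{\iota'}}(1_{\Sig_{\iota'}}-C^{\Sig_{\iota'}}_{\om^{\iota'}})^{-1}.
\]
Identity \eqref{generalop} then says $(1_{\Sig^{(3)}}-C_{\om'})E=1_{\Sig^{(3)}}-G$, hence $(1_{\Sig^{(3)}}-C_{\om'})^{-1}=E(1_{\Sig^{(3)}}-G)^{-1}$ and
\[
\mu'=E\id+EG(1_{\Sig^{(3)}}-G)^{-1}\id .
\]
By the cross-term estimate $\|C^{\Sig_{l'}}_{\om^{l'}}C^{\Sig_{\iota'}}_{\om^{\iota'}}\|_{L^2(\Sig^{(3)})}\le C(k_0)t^{-1/2}$ and the uniform boundedness of each $(1_{\Sig_{\iota'}}-C^{\Sig_{\iota'}}_{\om^{\iota'}})^{-1}$ one has $\|G\|_{L^2(\Sig^{(3)})}\le C(k_0)t^{-1/2}\to0$, so $(1_{\Sig^{(3)}}-G)^{-1}$ exists for large $t$ and is uniformly bounded, and $E$ is likewise uniformly bounded.

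Substituting $E\id=\id+\sum_l(\mu^{l'}-\id)$ into \eqref{m3} and using $\om'=\sum_\iota\om^{\iota'}$ with disjoint supports, the diagonal part collapses to $\sum_l\int_{\Sig_{l'}}\mu^{l'}\om^{l'}\,\frac{d\xi}{2\pi i}$, the asserted main term. What remains are two families of error terms: the off-diagonal pieces $\sum_{l\ne\iota}\int_{\Sig_{\iota'}}(\mu^{l'}-\id)\om^{\iota'}$, where $\mu^{l'}-\id=C^{\Sig_{l'}}_{\om^{l'}}\mu^{l'}$ is a Cauchy transform of a source on $\Sig_{l'}$ evaluated on the disjoint cross $\Sig_{\iota'}$, and the resolvent remainder $\int_{\Sig^{(3)}}(EG(1_{\Sig^{(3)}}-G)^{-1}\id)\om'$. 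I claim both are $O(t^{-1})$.

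The main obstacle is obtaining precisely the $O(t^{-1})$ rate for the remainder. By Cauchy-Schwarz and $\|\om'\|_{L^2(\Sig^{(3)})}\le ct^{-1/4}$ from \eqref{om'}, it suffices to prove $\|EG(1_{\Sig^{(3)}}-G)^{-1}\id\|_{L^2(\Sig^{(3)})}\le C(k_0)t^{-3/4}$. A bound using only the $L^2\to L^2$ estimate $t^{-1/2}$ fails, since $\id\notin L^2(\Sig^{(3)})$; this is exactly where the sharper estimate $\|C^{\Sig_{l'}}_{\om^{l'}}C^{\Sig_{\iota'}}_{\om^{\iota'}}\|_{L^\infty(\Sig^{(3)})\to L^2(\Sig^{(3)})}\le C(k_0)t^{-3/4}$ enters: feeding the bounded function $\mu^{\iota'}$ into the product operator in the $L^\infty\to L^2$ norm gives $\|G\id\|_{L^2(\Sig^{(3)})}\le C(k_0)t^{-3/4}$, and a short Neumann-series argument together with the boundedness of $E$ and $(1_{\Sig^{(3)}}-G)^{-1}$ propagates this to the full remainder. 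The off-diagonal terms are handled in the same spirit: nonsingularity of $(\xi-k)^{-1}$ for $\xi\in\Sig_{l'}$, $k\in\Sig_{\iota'}$ with $l\ne\iota$, together with the decay of $\om^{l'}$, yields an $L^2$ bound of order $t^{-3/4}$ that pairs with $\|\om^{\iota'}\|_{L^2}\le ct^{-1/4}$ to give $O(t^{-1})$. The bookkeeping follows Lemma 2.13 and Lemma 3.5 of \cite{dz}.
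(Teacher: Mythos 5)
Your argument is correct and is essentially the paper's own route: the paper proves this lemma only by citing Lemma 6.2 of \cite{avkahv}, which is precisely the decoupling you reconstruct from the second–resolvent identity \eqref{generalop}, the composition estimates of Proposition 3.9, and the norms \eqref{om'}, so you have in effect supplied the omitted details. One small correction to an intermediate step: for the off-diagonal terms $\int_{\Sig_{\iota'}}(\mu^{l'}-\id)\om^{\iota'}$ with $l\ne\iota$, the $L^2(\Sig_{\iota'})$ norm of $\mu^{l'}-\id=C^{\Sig_{l'}}_{\om^{l'}}\mu^{l'}$ is only $O(t^{-1/2})$ (the bounded Cauchy kernel gives $\|\mu^{l'}-\id\|_{L^\infty(\Sig_{\iota'})}\le Ck_0^{-1}\|\mu^{l'}\om^{l'}\|_{L^1}\le Ct^{-1/2}$, not $t^{-3/4}$), so the pairing must be $L^\infty$ against $\|\om^{\iota'}\|_{L^1}\le Ct^{-1/2}$ — which still yields the required $O(t^{-1})$ — rather than the $L^2$–$L^2$ pairing you state, which would only give $O(t^{-3/4})$.
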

\begin{proof}
Analogous to the proof of Lemma 6.2 in \cite{avkahv}.
\end{proof}
\begin{lemma}
For $l\in\{A,B,C,D\}$, $||(1_{\Sig_{l'}}-C_{\om^{l'}}^{{\Sig_{l'}}})^{-1}||_{L^2}\le C$
\end{lemma}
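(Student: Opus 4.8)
The plan is to transfer the operator on the finite cross $\Sig_{l'}$ to a fixed ($t$-independent) infinite cross by a rescaling, and there to perturb off the exactly solvable parabolic-cylinder model. I treat $l=A$; the cases $l=B,C,D$ are identical after relabelling. First I would pass to the extended cross. Since $\hat\om^{l'}$ vanishes on $\hat\Sig_{l'}\backslash\Sig_{l'}$, Lemma \ref{opralema} applied with $\Sig_{1}=\Sig_{l'}$, $\Sig_{12}=\hat\Sig_{l'}$ and $u=\hat\om^{l'}$ shows that $(1_{\Sig_{l'}}-C^{\Sig_{l'}}_{\om^{l'}})^{-1}$ is bounded on $L^2(\Sig_{l'})$ exactly when $(1_{\hat\Sig_{l'}}-C^{\hat\Sig_{l'}}_{\hat\om^{l'}})^{-1}$ is bounded on $L^2(\hat\Sig_{l'})$; the two inverses are related by the restriction and embedding maps of that lemma and hence have comparable norms. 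Passing to the full cross is precisely what makes the subsequent dilation land on a contour independent of $t$.

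Next I would rescale by $N_l$ and conjugate by $\tilde\Dta^0_l$. Identity (\ref{ChatomrealCom}) expresses $C^{\hat\Sig_{l'}}_{\hat\om^{l'}}$ as $N_l^{-1}(\tilde\Dta^0_l)^{-1}C^{\Sig_l}_{\om^l}\tilde\Dta^0_l N_l$, where $C^{\Sig_l}_{\om^l}$ acts on the fixed infinite cross $\Sig_l$ with jump $\om^l=(\Dta^0_l)^{-1}(N_l\hat\om^{l'})\Dta^0_l$ assembled from the scaled data $R(N_lk)(\dta^1_l)^{\pm2}$. Since $|\dta^0_l|=1$ by (\ref{scalk0dta0}), the operator $\tilde\Dta^0_l$ is unitary, and since $N_l$ is a pure dilation its norm and that of its inverse multiply to a constant independent of $t$. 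Therefore the resolvent norm is preserved up to a universal factor, and it suffices to bound $(1_{\Sig_l}-C^{\Sig_l}_{\om^l})^{-1}$ on $L^2(\Sig_l)$.

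The crux is to compare $C^{\Sig_l}_{\om^l}$ with the limiting model operator $C^{\Sig_l}_{\om^l_\infty}$ whose jump entries are $\ol{R(k_0\pm)}\,k^{-2i\nu}e^{ik^2/2}$ and their conjugates. Writing $k=uk_0e^{\pm i\pi/4}$ along the relevant ray one has $|e^{i\frac{\kappa}{2}k^2}|=e^{-\frac{\kappa}{2}u^2}$, so the estimate of Proposition \ref{Rinfty} decays like a Gaussian in the arclength parameter; this yields $||\om^l-\om^l_\infty||_{L^2(\Sig_l)\cap L^\infty(\Sig_l)}\le C(k_0)\frac{\log t}{\sqrt t}\to 0$. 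The $L^2$-boundedness of the Cauchy projections then gives $||C^{\Sig_l}_{\om^l}-C^{\Sig_l}_{\om^l_\infty}||_{L^2}\le c\,||\om^l-\om^l_\infty||_{L^\infty}\to 0$.

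It remains to bound $(1_{\Sig_l}-C^{\Sig_l}_{\om^l_\infty})^{-1}$, after which the second resolvent identity together with the vanishing of $||C^{\Sig_l}_{\om^l}-C^{\Sig_l}_{\om^l_\infty}||_{L^2}$ closes the argument for all large $t$. This is the main obstacle. The limiting jump $\om^l_\infty$ is exactly that of the model Riemann-Hilbert problem solved explicitly in the next subsection by parabolic-cylinder functions; its unique solvability follows from a vanishing-lemma argument resting on the symmetry $r(k)=-\ol{r(\bar k)}$ and the positivity $1-|r(k_0)|^2>0$, and by the Beals-Coifman correspondence underlying (\ref{BCRHPsol}) this unique solvability is equivalent to the invertibility of $1_{\Sig_l}-C^{\Sig_l}_{\om^l_\infty}$. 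The explicit, uniformly bounded model solution bounds the inverse, and since $\om^l_\infty$ depends on $k_0$ only through the continuous quantities $\nu$, $\tilde\nu$ and $R(k_0\pm)$, this bound is uniform for $0<k_0<M$. Tracing the equivalences back through the dilation and the contour extension yields $||(1_{\Sig_{l'}}-C^{\Sig_{l'}}_{\om^{l'}})^{-1}||_{L^2}\le C$.
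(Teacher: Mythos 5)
Your first three reductions coincide with the paper's: pass from $\Sig_{l'}$ to the full cross $\hat\Sig_{l'}$ via Lemma \ref{opralema}, conjugate by $N_l$ and the unitary $\tilde\Dta^0_l$ using (\ref{ChatomrealCom}), and then replace $\om^l$ by the limiting datum $\om^{l^0}$ using Proposition \ref{Rinfty} and the second resolvent identity. Where you genuinely diverge is the last, decisive step. The paper does \emph{not} appeal to the explicit parabolic-cylinder solution at this stage: it reorients $\Sig_A$ to $\Sig_{A,r}$, adjoins $\R$ to form $\Sig_e$ (again via Lemma \ref{opralema}), and conjugates by a piecewise-analytic $\phi$ that moves the whole jump onto $\R$, where it becomes the constant matrix $J^{(k_0)}$; the associated operator then satisfies $\|C_{\om^{e,\phi}|_{\R}}\|_{L^2(\R)}\le\|r\|_{L^\infty}<1$, so the Neumann series converges and yields the explicit uniform bound $(1-\|r\|_{L^\infty})^{-1}$. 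Your route instead bounds $(1_{\Sig_l}-C_{\om^{l}_\infty}^{\Sig_l})^{-1}$ by invoking the solvability of the model Riemann--Hilbert problem and asserting that, by the Beals--Coifman correspondence, a uniformly bounded solution controls the resolvent. That route is viable and standard, but as written it hides the one nontrivial point: mere unique solvability of the Riemann--Hilbert problem is not literally "equivalent" to invertibility of $1-C_{\om^l_\infty}$ with a \emph{uniform} norm bound; one needs either the explicit resolvent formula expressing $(1-C_{\om})^{-1}$ through the bounded boundary values $M^l_\pm$ (available here because $1-|r(k_0)|^2>0$ supplies both triangular factorizations of $J^{(k_0)}$), or a Fredholm-plus-vanishing-lemma argument. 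The paper's Neumann-series trick buys an elementary, quantitative constant with no reference to the model solution; your approach buys generality (it survives when the deformed jump is not a contraction) at the price of having to actually prove the resolvent formula you are implicitly using.
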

\begin{proof}
Consider the case $l=A$, the case $l=B,C$ and $l=D$ follow in an analogous manner. From Lemma \ref{opralema}, the boundedness of $(1_{\Sig_{A'}}-C_{\om^{A'}}^{{\Sig_{A'}}})^{-1}$ follows from the boundedness of $(1_{\hat \Sig_{A'}}-C_{\om^{A'}}^{\hat\Sig_{A'}})^{-1}$. From formula (\ref{ChatomrealCom}) we have
\be\label{inversehatC}
(1_{\hat \Sig_{A'}}-C_{\om^{A'}}^{\hat\Sig_{A'}})^{-1}=(N_{A})^{-1}\tilde{(\Dta^0_A)}^{-1}(1_{\Sig_{A'}}-C_{\om^{A'}}^{{\Sig_{A'}}})^{-1}\tilde{(\Dta^0_A)}N_A,
\ee
And then the boundedness of $(1_{\hat \Sig_{A'}}-C_{\om^{A'}}^{\hat\Sig_{A'}})^{-1}$ follows from the boundedness of $(1_{\Sig_{A}}-C_{\om^{A}}^{\Sig_{A}})^{-1}$.
\par
Set
\be
\om^{A}=(\Dta^0_A)^{-1}(N_A\hat \om^{A'})\Dta^0_A,
\ee
so that
\be
C^{\Sig_A}_{\om^A}=C_+(\cdot \om_-^A)+C_-(\cdot \om_+^A).
\ee
On $\Sig_A$, we have the diagram in Figure 9.
\begin{figure}[th]
\centering
\includegraphics{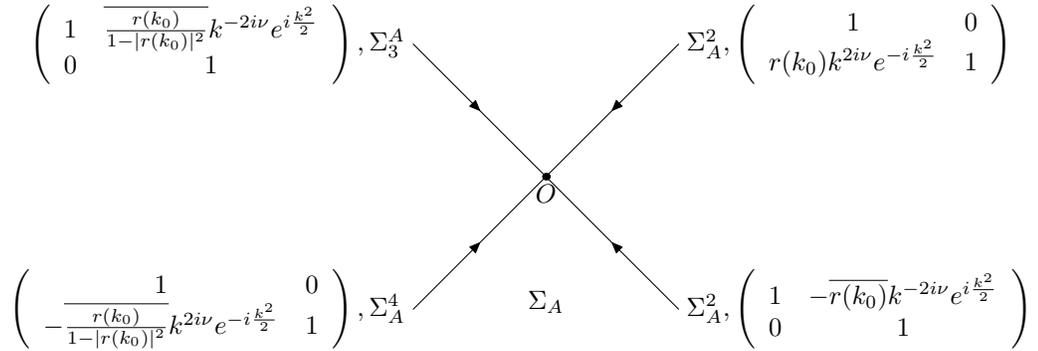}
\caption{The jump condition of cross $k_0$ by scaling.}
\end{figure}
Set $J^{A^0}=(b_-^{A^0})^{-1}b_+^{A^0}=(\id-\om_-^{A^0})^{-1}(\id+\om_+^{A^0})$. Defining as usual $\om^{A^0}=\om_+^{A^0}+\om_-^{A^0}$, and using Proposition \ref{Rinfty}, one finds that
\be
||\om^A-\om^{A^0}||_{L^{\infty}(\Sig_A)\cap L^1(\Sig_A)\cap L^2(\Sig_A)}\le C(k_0)t^{-\frac{1}{2}}.
\ee
Hence, as $t\rightarrow \infty$,
\be
||C^{\Sig_A}_{\om^A}-C^{\Sig_A}_{\om^{A^0}}||_{L^2(\Sig_A)}\le C(k_0)t^{-\frac{1}{2}},
\ee
and consequently, one sees that the boundedness of $(1_{ \Sig_{A}}-C_{\om^{A}}^{\Sig_{A}})^{-1}$ follows from the boundedness of $(1_{ \Sig_{A}}-C_{\om^{A^0}}^{\Sig_{A}})^{-1}$ as $t\rightarrow \infty$.
\par
Then reorient $\Sig_A$ to $\Sig_{A,r}$ as Figure 10.
\begin{figure}[th]
\centering
\includegraphics{LT--FL.10}
\caption{$\Sig_{A,r}$.}
\end{figure}
A simple computation shows that the jump matrix $J^{A,r}=(b_-^{A,r})^{-1}(b_+^{A,r})=(\id-\om_-^{A,r})^{-1}(\id+\om_+^{A,r})$ on $\Sig_{A,r}$ is determined by
\begin{subequations}
\be
\om_{\pm}^{A,r}(k)=-\om_{\mp}^{A^0}(k),\quad for \quad \re k>0,
\ee
and
\be
\om_{\pm}^{A,r}(k)=\om_{\pm}^{A^0}(k),\quad for \quad \re k<0.
\ee
\end{subequations}
\par
The third step is that extending $\Sig_{A,r}\rightarrow \Sig_e=\Sig_{A,r}\cup \R$ with the orientation on $\Sig_{A,r}$ as Figure 10 and the orientation on $\R$ from $-\infty$ to $\infty$. And the jump $J^e=(b_-^e)^{-1}b_=^e=(\id-\om^e_-)^{-1}(\id+\om^e_+)$ with
\begin{subequations}
\be
\om^e(k)=\om^{A,r}(k),\quad k\in \Sig_{A,r},
\ee
\be
\om^e(k)=0,\quad k\in \R.
\ee
\end{subequations}
Set $\C_{\om^e}$ on $\Sig_e$. Once again, by Lemma \ref{opralema}, it is sufficient to bound $(1_{\Sig_e}-C_{\om^e})^{-1}$ on $L^2(\Sig_e)$.
\par
\begin{figure}[th]
\centering
\includegraphics{LT--FL.11}
\caption{$\Sig_{e}$.}
\end{figure}
Then define a piecewise-analytic matrix function $\phi$ as follows:
\[
\tilde M^{(k_0)}=M^{(k_0)}\phi,
\]
where
\[
\phi=\left\{
\ba{ll}
k^{i\nu\sig_3},&k\in \Om^e_2,\Om^e_5,\\
k^{i\nu\sig_3}\left(\begin{array}{cc}1&0\\-r(k_0)e^{-i\frac{k^2}{2}}&1\end{array}\right),&k\in\Om^e_1,\\
k^{i\nu\sig_3}\left(\begin{array}{cc}1&-\overline{r(k_0)}e^{i\frac{k^2}{2}}\\0&1\end{array}\right),&k\in\Om^e_6,\\
k^{i\nu\sig_3}\left(\begin{array}{cc}1&\overline{\frac{r(k_0)}{1-|r(k_0)|^2}}e^{i\frac{k^2}{2}}\\0&1\end{array}\right),&k\in\Om^e_3,\\
k^{i\nu\sig_3}\left(\begin{array}{cc}1&0\\-\overline{\frac{r(k_0)}{1-|r(k_0)|^2}}e^{-i\frac{k^2}{2}}&1\end{array}\right),&k\in\Om^e_4.
\ea
\right.
\]

Thus, we can get the Riemann-Hilbert problem of $\tilde M^{(k_0)}$
\be\label{tildeMk0RHP}
\ba{ll}
\tilde M_+^{(k_0)}(x,t,k)=\tilde M_-^{(k_0)}(x,t,k)J^{e,\phi},\\
\\
\tilde M^{(k_0)}(x,t,k)=(\id+\frac{M^{A^{0}}_1}{k}+O(\frac{1}{k^2}))k^{i\nu\sig_3},&k\rightarrow \infty.
\ea
\ee
where
\be
J^{e,\phi}=\left\{\ba{ll}\left(\ba{cc}1-|r(k_0)|^2&\ol{r(k_0)}e^{-i\frac{k^2}{2}}\\-r(k_0)e^{i\frac{k^2}{2}}&1\ea\right),&k\in\R,\\
\id,&k\in \Sig_{A,r}.
\ea
\right.
\ee
On $\R$ we have
\be
J^{e,\phi}=(b_-^{e,\phi})^{-1}b_+^{e,\phi}=(\id-\om_-^{e,\phi})^{-1}(\id+\om_+^{e,\phi})=
\left(\ba{cc}1&e^{-\frac{ik^2}{2}}\bar r(k_0)\\0&1\ea\right)
\left(\ba{cc}1&0\\-e^{\frac{ik^2}{2}}r(k_0)&1\ea\right).
\ee
Set $C_{e,\phi}=C_{\om^{e,\phi}}=C_+(\cdot \om_-^{e,\phi})+C_-(\cdot \om_+^{e,\phi})$ as thr associated operator on $\Sig_{e}$, with $\om^{e,\phi}=\om_+^{e,\phi}+\om_-^{e,\phi}$. By Lemma \ref{opralema}, the boundedness of $C_{e,\phi}$ follows from the boundedness of the operator
$C_{\om^{e,\phi}|_{\R}}: L^2(\R)\rightarrow L^2(\R)$ associated with the restriction of $\om^{e,\phi}$ to $\R$. Howerover, $||C_{\om^{e,\phi}|_{\R}}||_{L^2(\R)}\le \sup_{k\in \R}{|e^{-\frac{ik^2}{2}}\bar r(k_0)|}\le ||r||_{L^{\infty}(\R)}<1$, and hence, $||(1_{\R}-C_{\om^{e,\phi}|_{\R}})^{-1}||_{L^2(\R)}\le (1-||r||_{L^{\infty}(\R)})^{-1}<\infty$ for all $k_0$, which in turn implies that $(1_{\Sig_e}-C_{e,\phi})^{-1}$ is bounded.
\end{proof}

\subsection{Model Riemann-Hilbert Problem}
In this subsection, we reduce the evaluation of the integrals in Lemma \ref{fourtermsRH} to four Riemann-Hilbert problems on $\R$ which can be solved explicitly.
\par
For $l\in\{A,B,C,D\}$, define
\be
M^l(k)=\id+\int_{\Sig_l}\frac{((1_{\Sig_l}-C_{\om^{l^0}}^{\Sig_l})^{-1}\id)(\xi)\om^{l^0}(\xi)}{\xi-k}\frac{d\xi}{2\pi i},\quad k\in\C\backslash \Sig_l.
\ee
Then, $M^l(k)$ solves the Riemann-Hilbert problem
\be
\left\{
\ba{ll}
M^l_+(k)=M_-^l(k)J^l(k)=M_-^l(k)(\id-\om^{l^0}_-)^{-1}(\id+\om_+^{l^0}),&k\in \Sig_l,\\
M^l(k)\rightarrow \id,&k\rightarrow \infty.
\ea
\right.
\ee
In particular we see that if
\be
M^l(k)=\id+\frac{M_1^l}{k}+O(k^{-2}),\quad k\rightarrow \infty,
\ee
then
\be
M_1^l=-\int_{\Sig_l}((1_{\Sig_l}-C_{\om^{l^0}}^{\Sig_l})^{-1}\id)(\xi)\om^{l^0}(\xi)\frac{d\xi}{2\pi i}.
\ee
Substituting into (\ref{fourtermsm}), we obtain
\be\label{lastm}
m(x,t)=\frac{k_0^2}{2\sqrt{\alpha t}\beta}((\dta^0_A)^{2}(M_1^{A^0})_{12}+(\dta_B^0)^{2}(M_1^{B^{0}})-(\dta_C^0)^2 (M_1^{C^0})-(\dta_D^0)^{2}(M_1^{D^0})_{12})+O(\frac{C}{t}).
\ee
\par
We consider in detail only case $A$.
Write
\be\label{modelk0Psi}
\Psi=\tilde M^{(k_0)}e^{-i\frac{k^2}{4}\sig_3}=\hat \Psi k^{i\nu\sig_3}e^{-i\frac{k^2}{4}\sig_3}
\ee
From formula (\ref{tildeMk0RHP}),
\be\label{modelk0RHP}
\Psi_+(k)=\Psi_-(k)\tilde J^{(k_0)},\quad k\in\R.
\ee
where
\[
J^{(k_0)}=\left(\ba{cc}1-|r(k_0)|^2&\ol{r(k_0)}\\-r(k_0)&1\ea\right)
\]
By differentiation with respect to $k$ and Liouville theorem we can get
\be\label{modelk0}
\frac{d\Psi}{dk}+\frac{1}{2}ik\sig_3\Psi=\beta\Psi,
\ee
where
\[
\beta=\frac{i}{2}[\sig_3,M^{A^{0}}_1]=\left(\ba{cc}o&\beta_{12}\\\beta_{21}&0\ea\right).
\]
Following \cite{dz}(P.350-352), we have
\be\label{k0beta12}
\beta_{12}=-\frac{e^{-\frac{\pi}{2}\nu}}{r(k_0)}\frac{\sqrt{2\pi}e^{i\frac{\pi}{4}}}{\Gam(-i\nu)}.
\ee
Hence,
\be
(M_1^{A^0})_{12}=-i\beta_{12}=i\frac{e^{-\frac{\pi}{2}\nu}}{r(k_0)}\frac{\sqrt{2\pi}e^{i\frac{\pi}{4}}}{\Gam(-i\nu)}.
\ee
From the symmetry reduction for $M(k)$,i.e., $M(-k)=\sig_3M(k)\sig_3$, we have that
\be
(M_1^{A^{0}})_{12}=(M_1^{B^{0}})_{12}.
\ee
For $C$,
\be
\beta_{12}=\frac{e^{\frac{\pi}{2}\tilde \nu}}{r(ik_0)}\frac{\sqrt{2\pi}e^{i\frac{\pi}{4}}}{\Gam(i\tilde \nu)}.
\ee
And similarly $(M_1^{C^{0}})_{12}=(M_1^{D^{0}})_{12}.$
\par
Thus, we have
\begin{theorem}
As $t\rightarrow \infty$, such that $k_0<M$,
\be
\ba{rrl}
m(x,t)&=&\frac{k_0^2}{\beta}\sqrt{\frac{|\nu|}{\alpha t}}e^{i(2\alpha \beta t+2\nu \ln{\frac{k_0}{\sqrt{\alpha t}\beta}}+\frac{\pi}{4}-\frac{\alpha \beta^2}{k_0^2}t-2\tilde \nu \ln{2k_0^2}-2i\chi_{\pm}(k_0)-2i\tilde\chi'_{\pm}(k_0)-\arg{r(k_0)}-\arg{\Gam(-i\nu)})}\\
&&{}-\frac{k_0^2}{\beta}\sqrt{\frac{|\tilde\nu|}{\alpha t}}e^{i(2\alpha \beta t-2\tilde \nu \ln{\frac{k_0}{\sqrt{\alpha t}\beta}}+\frac{\pi}{4}+\frac{\alpha \beta^2}{k_0^2}t+2\nu \ln{2k_0^2}-2i\chi'_{\pm}(ik_0)-2i\tilde\chi_{\pm}(ik_0)-\arg{r(ik_0)}-\arg{\Gam(i\tilde\nu)})}\\
&&{}+O(\frac{1}{t}).
\ea
\ee
\end{theorem}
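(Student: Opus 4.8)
The plan is to derive the stated formula by substituting the explicit data assembled in the previous subsections into the representation (\ref{lastm}) and then simplifying; the genuine analytic content is already contained in Lemma \ref{fourtermsRH} and in the solution of the model problem, so what remains is essentially substitution and careful bookkeeping. First I would record the four model contributions. From the parabolic-cylinder analysis of the model Riemann-Hilbert problem (\ref{modelk0RHP})--(\ref{modelk0}), one has
\[
(M_1^{A^0})_{12}=i\frac{e^{-\frac{\pi}{2}\nu}}{r(k_0)}\frac{\sqrt{2\pi}e^{i\frac{\pi}{4}}}{\Gam(-i\nu)},\qquad
(M_1^{C^0})_{12}=-i\frac{e^{\frac{\pi}{2}\tilde\nu}}{r(ik_0)}\frac{\sqrt{2\pi}e^{i\frac{\pi}{4}}}{\Gam(i\tilde\nu)},
\]
together with the reflection symmetries $(M_1^{A^0})_{12}=(M_1^{B^0})_{12}$ and $(M_1^{C^0})_{12}=(M_1^{D^0})_{12}$ coming from $M(-k)=\sig_3 M(k)\sig_3$.

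Next I would establish the companion identities $(\dta_A^0)^2=(\dta_B^0)^2$ and $(\dta_C^0)^2=(\dta_D^0)^2$, which reduce the four terms of (\ref{lastm}) to two and, crucially, supply the factor $2$ that cancels the prefactor $\tfrac12$. Granting these, (\ref{lastm}) collapses to
\[
m(x,t)=\frac{k_0^2}{\sqrt{\alpha t}\beta}\Big((\dta_A^0)^2(M_1^{A^0})_{12}-(\dta_C^0)^2(M_1^{C^0})_{12}\Big)+O(t^{-1}),
\]
and the whole problem is reduced to extracting the modulus and phase of each surviving product. For the modulus I would use the reflection formula $\Gam(i\nu)\Gam(-i\nu)=\pi/(\nu\sinh\pi\nu)$ together with the definition (\ref{vartheta}), which gives $e^{-2\pi\nu}=1-|r(k_0)|^2$; a short computation then collapses the real factor $e^{-\frac{\pi}{2}\nu}\sqrt{2\pi}/(|r(k_0)|\,|\Gam(-i\nu)|)$ to exactly $\sqrt{|\nu|}$. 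The analogous manipulation using (\ref{tildevartheta}) and $e^{-2\pi\tilde\nu}=1+|r(ik_0)|^2$ produces $\sqrt{|\tilde\nu|}$, so that the two amplitudes become $\frac{k_0^2}{\beta}\sqrt{|\nu|/(\alpha t)}$ and $\frac{k_0^2}{\beta}\sqrt{|\tilde\nu|/(\alpha t)}$, matching the theorem.

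For the phase I would read off the argument of $(\dta_A^0)^2$ from (\ref{scalk0dta0}): the factor $e^{i\alpha\beta t-i\frac{\alpha\beta^2}{2k_0^2}t}$ doubles to the phase $2\alpha\beta t-\frac{\alpha\beta^2}{k_0^2}t$; the powers $k_0^{i\nu-2i\tilde\nu}(\sqrt{\alpha t}\beta)^{-i\nu}2^{-i\tilde\nu}$ contribute the logarithmic phases $2\nu\ln\frac{k_0}{\sqrt{\alpha t}\beta}-2\tilde\nu\ln(2k_0^2)$; and $e^{\chi_{\pm}(k_0)}e^{\tilde\chi_{\pm}'(k_0)}$ contributes $-2i\chi_{\pm}(k_0)-2i\tilde\chi_{\pm}'(k_0)$. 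Combining this with the argument of $(M_1^{A^0})_{12}$, whose constant factors $i\,e^{i\frac{\pi}{4}}$ yield a fixed constant phase and whose factors $1/r(k_0)$, $1/\Gam(-i\nu)$ yield $-\arg r(k_0)-\arg\Gam(-i\nu)$, produces the exponent of the first displayed term; the second term is treated identically starting from (\ref{scalik0dta0}) and (\ref{tildevartheta}). I expect the main obstacle to be precisely this phase bookkeeping, namely tracking all logarithmic branch choices and constant phases through the squaring and the $A/B$, $C/D$ pairing so that they coalesce into the stated exponents, together with verifying the two $\dta^0$-symmetries; the remaining points are routine, and the accumulated errors from the successive reductions (Propositions and Lemma \ref{fourtermsRH}) combine into the single $O(t^{-1})$ remainder.
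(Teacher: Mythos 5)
Your proposal is correct and follows essentially the same route as the paper: substitute the model-problem data $(M_1^{l^0})_{12}$ and the factors $(\dta_l^0)^2$ into the representation (\ref{lastm}), use the symmetries $(M_1^{A^0})_{12}=(M_1^{B^0})_{12}$, $(M_1^{C^0})_{12}=(M_1^{D^0})_{12}$ (and the matching $\dta^0$-identities) to collapse four terms to two, and extract modulus and phase via the reflection formula for $\Gam$ together with (\ref{vartheta})--(\ref{tildevartheta}). The paper presents this step tersely ("Thus, we have"), so your explicit bookkeeping of the amplitude reduction to $\sqrt{|\nu|}$, $\sqrt{|\tilde\nu|}$ and of the phase contributions is a faithful, slightly more detailed rendering of the intended argument.
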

\begin{proposition}
\be
||2m||^2_{L^2(\R)}=\frac{2}{\pi}\left(\int_{0}^{+\infty}\frac{\log{(1+|r(i\mu)|^2)}}{\mu}d\mu-\int_{0}^{+\infty}\frac{\log{(1-|r(\mu)|^2)}}{\mu}d\mu\right)
\ee
\end{proposition}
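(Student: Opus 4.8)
The plan is to read off $\|2m\|_{L^2(\R)}^2=\int_{\R}|2m(x,t)|^2\,dx$ directly from the leading-order formula for $m(x,t)$ in the preceding theorem (note that by (\ref{solfromRHP}) this quantity is nothing but $\|u_x\|_{L^2(\R)}^2$, so the $t$-independence of the claimed right-hand side is exactly what one should expect) and to show that, after the change of variables $x\mapsto k_0$, the integral collapses to the two stated logarithmic integrals. First I would dispose of the region $\tfrac{x}{t}+\alpha<0$: there $m$ decays rapidly by (\ref{case1asy}), faster than any power of $t$, which beats the $O(t)$ length of the integration window, so this part contributes negligibly. Thus the whole $L^2$-mass is carried by Case 2, where the explicit asymptotics apply.

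On the Case 2 region I would write the leading term of the preceding theorem as $m=A\,e^{i\phi_1}-B\,e^{i\phi_2}+O(t^{-1})$, with
\[
A=\frac{k_0^2}{\beta}\sqrt{\frac{|\nu|}{\alpha t}},\qquad B=\frac{k_0^2}{\beta}\sqrt{\frac{|\tilde\nu|}{\alpha t}},
\]
where $\nu$ and $\tilde\nu$ are as in (\ref{vartheta})--(\ref{tildevartheta}), so that $|\nu|=-\frac{1}{2\pi}\log(1-|r(k_0)|^2)$ and $|\tilde\nu|=\frac{1}{2\pi}\log(1+|r(ik_0)|^2)$. Then $|2m|^2=4A^2+4B^2-8AB\cos(\phi_1-\phi_2)+O(t^{-3/2})$. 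Changing variables through $k_0=\big(\frac{\alpha\beta^2}{4(x/t+\alpha)}\big)^{1/4}$ gives $dx=t\alpha\beta^2 k_0^{-5}\,dk_0$, with $k_0$ sweeping $(0,\infty)$ as $x$ runs over the Case 2 region. The Jacobian exactly cancels the prefactors, $4A^2\,dx=\frac{4|\nu|}{k_0}\,dk_0$ and $4B^2\,dx=\frac{4|\tilde\nu|}{k_0}\,dk_0$, so that the diagonal terms assemble into
\[
\int_0^\infty\frac{4|\nu|+4|\tilde\nu|}{k_0}\,dk_0=\frac{2}{\pi}\int_0^\infty\frac{\log(1+|r(i\mu)|^2)}{\mu}\,d\mu-\frac{2}{\pi}\int_0^\infty\frac{\log(1-|r(\mu)|^2)}{\mu}\,d\mu,
\]
which is precisely the claimed value.

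It then remains to show the cross term is asymptotically negligible. Inspecting the phases of the preceding theorem, one finds $\phi_1-\phi_2=-\frac{2\alpha\beta^2}{k_0^2}t+(\text{terms at most }O(\log t)\text{ that vary slowly in }k_0)$; the $k_0$-derivative of the leading phase is $\frac{4\alpha\beta^2 t}{k_0^3}$, which is of order $t$ and nonvanishing on every compact subset of $(0,\infty)$. Hence a non-stationary-phase (integration-by-parts) argument, equivalently the Riemann--Lebesgue lemma, gives $\int AB\cos(\phi_1-\phi_2)\,\tfrac{dk_0}{k_0}\to 0$, so $-8\int AB\cos(\phi_1-\phi_2)\,dx=o(1)$.

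The hard part will be the uniformity and endpoint analysis. The expansion of $m$ is established only for $0<k_0<M$ with an $O(t^{-1})$ remainder whose constant may degenerate as $k_0\to 0$ ($x\to+\infty$) or $k_0\to\infty$ ($x\to-\alpha t^{+}$), whereas the $L^2$ integral runs over all of $(0,\infty)$ in $k_0$. To close the argument I would exploit the decay of $r$ at both ends, so that $|\nu|$ and $|\tilde\nu|$, and with them the integrands, are integrable near $0$ and $\infty$ and the remainder contributions $\int A\cdot O(t^{-1})\,dx$ stay $o(1)$; concretely, split $(0,\infty)$ into a fixed compact core where the uniform expansion and non-stationary phase apply, and two tails where smallness of $r$ makes both the true integrand and the remainder uniformly small. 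Verifying that these tail estimates survive the $O(t)$ size of the $x$-window uniformly in $t$ is the principal technical obstacle.
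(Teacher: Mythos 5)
Your proposal is correct and is essentially the argument the paper intends: the paper's proof is only the citation ``Analogous to Proposition 8.2 in \cite{avkahv}'', and that proposition is established in exactly this way --- substitute the leading-order asymptotics of $m$ into $\int_{\R}|2m|^2\,dx$, change variables from $x$ to $k_0$ so the Jacobian $t\alpha\beta^2k_0^{-5}$ cancels the $t^{-1/2}$ amplitudes and turns the diagonal terms into $\int_0^\infty(4|\nu|+4|\tilde\nu|)\,dk_0/k_0$, and kill the cross term by non-stationary phase in $k_0$. The tail/uniformity issues you flag near $k_0\to0$ and $k_0\to\infty$ are real and are handled in \cite{avkahv} by the decay of $r$, just as you propose.
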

\begin{proof}
Analogous to Proposition 8.2 in \cite{avkahv}.
\end{proof}
\begin{lemma}
As $t\rightarrow \infty$,
\be
e^{4i\int_{-\infty}^{x}|m(x;,t)|^2dx'}=e^{\frac{2i}{\pi}\left(\int_{k_0}^{+\infty}\frac{\ln(1+|r(ik')|^2)}{k'}dk'-\int_{k_0}^{+\infty}\frac{\ln(1-|r(k')|^2)}{k'}dk'
-\tilde \psi\right)}+O(\frac{C}{t^{\frac{1}{2}}}).
\ee
where
\be
\tilde \psi=\sqrt{\int_{0}^{k_0}\frac{\ln(1+|r(ik')|^2)}{k'}\frac{\ln(1-|r(k')|^2)}{k'}\cos{(\tilde\xi_1-\tilde\xi_2)dk'}}
\ee
with
$$\tilde\xi_1=2\alpha \beta t+2\nu \ln{\frac{k_0}{\sqrt{\alpha t}\beta}}+\frac{\pi}{4}-\frac{\alpha \beta^2}{k_0^2}t-2\tilde \nu \ln{2k_0^2}-2i\chi_{\pm}(k_0)-2i\tilde\chi'_{\pm}(k_0)-\arg{r(k_0)}-\arg{\Gam(-i\nu)}$$
and
\[
\tilde\xi_2=2\alpha \beta t-2\tilde \nu \ln{\frac{k_0}{\sqrt{\alpha t}\beta}}+\frac{\pi}{4}+\frac{\alpha \beta^2}{k_0^2}t+2\nu \ln{2k_0^2}-2i\chi'_{\pm}(ik_0)-2i\tilde\chi_{\pm}(ik_0)-\arg{r(ik_0)}-\arg{\Gam(i\tilde\nu)}
\]
\end{lemma}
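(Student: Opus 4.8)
The plan is to feed the leading-order expression for $m(x,t)$ supplied by the preceding theorem into the integral $\int_{-\infty}^{x}|m(x',t)|^2\,dx'$ and then carry out the integration by passing to the saddle variable. Writing that expression schematically as $m = A\,e^{i\tilde\xi_1}-B\,e^{i\tilde\xi_2}+O(t^{-1})$ with the real amplitudes
\be
A=\frac{k_0^2}{\beta}\sqrt{\frac{|\nu|}{\alpha t}},\qquad B=\frac{k_0^2}{\beta}\sqrt{\frac{|\tilde\nu|}{\alpha t}},
\ee
I would first form $|m|^2 = A^2+B^2-2AB\cos(\tilde\xi_1-\tilde\xi_2)+O(t^{-3/2})$, noting that the $O(t^{-3/2})$ error is produced by the cross terms between the $O(t^{-1/2})$ leading part and the $O(t^{-1})$ correction. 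Using the identifications $\nu=\vartheta$ and $\tilde\nu=\tilde\vartheta$ from (\ref{vartheta}) and (\ref{tildevartheta}), one has $|\nu|=-\frac{1}{2\pi}\ln(1-|r(k_0)|^2)$ and $|\tilde\nu|=\frac{1}{2\pi}\ln(1+|r(ik_0)|^2)$, which is what ties the amplitudes to the logarithmic densities appearing in the target.

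Next I would localize the integral. By the Case~1 theorem, $m$ decays rapidly whenever $x'/t+\alpha<0$, so up to a rapidly vanishing error the integral over $(-\infty,x]$ reduces to the Case~2 region $-\alpha t<x'\le x$. On that region the saddle $s:=k_0(x'/t)=\bigl(\tfrac{\alpha\beta^2}{4(x'/t+\alpha)}\bigr)^{1/4}$ is a legitimate change of variable: from $s^4=\tfrac{\alpha\beta^2}{4(x'/t+\alpha)}$ one obtains $dx'=-\tfrac{\alpha\beta^2 t}{s^5}\,ds$, and as $x'$ increases from $-\alpha t$ to $x$ the variable $s$ decreases from $+\infty$ to $k_0$. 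The crucial feature is that the explicit factor $t$ in $dx'$ exactly cancels the $1/t$ carried by $A^2$, $B^2$ and $2AB$, so the non-oscillatory part of the integral becomes $t$-independent. The diagonal contribution collapses to
\be
\int_{k_0}^{\infty}\frac{|\nu(s)|+|\tilde\nu(s)|}{s}\,ds,
\ee
and multiplication by $4i$ reproduces precisely the first two terms $\frac{2i}{\pi}\bigl(\int_{k_0}^{\infty}\frac{\ln(1+|r(ik')|^2)}{k'}dk'-\int_{k_0}^{\infty}\frac{\ln(1-|r(k')|^2)}{k'}dk'\bigr)$ of the claimed exponent.

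The remaining and genuinely delicate piece is the oscillatory cross term $-\int_{k_0}^{\infty}\frac{2\sqrt{|\nu(s)||\tilde\nu(s)|}}{s}\cos(\tilde\xi_1(s)-\tilde\xi_2(s))\,ds$, which is the source of $\tilde\psi$; this is where I expect the main obstacle to lie. The phase difference $\tilde\xi_1-\tilde\xi_2$ contains the rapidly varying term $-\tfrac{2\alpha\beta^2}{s^2}t$, whose $s$-derivative $\tfrac{4\alpha\beta^2}{s^3}t$ stays bounded away from zero on $[k_0,\infty)$; integrating by parts (non-stationary phase) against the smooth, integrable amplitude $\tfrac{2\sqrt{|\nu(s)||\tilde\nu(s)|}}{s}$ then controls this contribution, identifies its leading oscillatory part with $\tilde\psi$, and puts the rest into the error. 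The book-keeping of the secondary errors must be done at the same time: the $O(t^{-3/2})$ remainder in $|m|^2$, integrated over the region of length $O(t)$, yields an $O(t^{-1/2})$ term, which is exactly the accuracy asserted in the statement.

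Finally I would exponentiate. Writing $4i\int_{-\infty}^{x}|m|^2\,dx'=P(k_0)+O(t^{-1/2})$ with $P(k_0)$ the explicit exponent in the statement, and observing that $P(k_0)$ is bounded for $k_0<M$, the elementary estimate $e^{P(k_0)+O(t^{-1/2})}=e^{P(k_0)}\bigl(1+O(t^{-1/2})\bigr)=e^{P(k_0)}+O(t^{-1/2})$ delivers the asserted asymptotics of $e^{4i\int_{-\infty}^{x}|m|^2\,dx'}$. The only genuinely technical point throughout is the oscillatory estimate of the cross term; the diagonal part is an exact computation, and the exponentiation is routine.
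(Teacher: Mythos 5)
Your overall mechanism --- substitute the leading-order formula for $m$, pass to the saddle variable $s=k_0(x'/t)$ so that the Jacobian $dx'=-\frac{\alpha\beta^2t}{s^5}\,ds$ cancels the $1/t$ in $|m|^2$, and read off the diagonal terms as $\int\frac{|\nu(s)|}{s}ds$ and $\int\frac{|\tilde\nu(s)|}{s}ds$ --- is the right computation, and your identification of the amplitudes with the logarithmic densities via $\nu=\vartheta$, $\tilde\nu=\tilde\vartheta$ is correct. But there is a genuine gap in where you apply the asymptotic formula. You integrate it over the whole of $(-\infty,x]$, i.e.\ over the range $s\in[k_0,\infty)$. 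The asymptotics of the preceding theorem are only established for $k_0<M$, and the error term $O(1/t)$ there is not uniform as $k_0(x'/t)\to\infty$ (which happens as $x'\downarrow-\alpha t$); integrating a non-uniform $O(1/t)$ remainder over an interval of length $O(t)$ does not give $O(t^{-1/2})$, and the ``transition zone'' around $x'=-\alpha t$, where neither the Case~1 rapid decay nor the Case~2 formula applies, is left uncontrolled. The paper (following Lemma 8.1 of Kitaev--Vartanian, via the Proposition immediately preceding this Lemma) avoids this entirely by writing
\begin{equation*}
\int_{-\infty}^{x}|m|^2\,dx'=\int_{\R}|m|^2\,dx'-\int_{x}^{\infty}|m|^2\,dx',
\end{equation*}
computing the full-line integral \emph{exactly} from the conservation law $||2m||^2_{L^2(\R)}=\frac{2}{\pi}\bigl(\int_0^\infty\frac{\ln(1+|r(i\mu)|^2)}{\mu}d\mu-\int_0^\infty\frac{\ln(1-|r(\mu)|^2)}{\mu}d\mu\bigr)$, and applying the pointwise asymptotics only on $[x,\infty)$, where $k_0(x'/t)\le k_0(x/t)<M$ so the error is uniform. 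The subtraction $\int_0^\infty-\int_0^{k_0}=\int_{k_0}^\infty$ then produces the two non-oscillatory terms of the exponent, and the cross term naturally lives on $s\in(0,k_0)$ --- which is exactly why $\tilde\psi$ in the statement is an integral over $(0,k_0)$, not over $(k_0,\infty)$ as in your derivation. The mismatch between your cross term and the stated $\tilde\psi$ is a symptom of the wrong decomposition, not merely a typo.

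A secondary point: your plan to handle the cross term by non-stationary phase is the right instinct on a region where $k_0(x'/t)$ stays bounded (there the phase derivative $\sim\frac{4\alpha\beta^2t}{s^3}$ is large and one integration by parts already gives $O(1/t)$), but on your unbounded range $s\in[k_0,\infty)$ the phase derivative degenerates like $t/s^3$, so the argument is not uniform and you would again need the rapid decay of $r$ plus a separate treatment of $s\gtrsim t^{1/3}$. Reorganizing the proof around the conservation law removes both difficulties at once; the exponentiation step at the end is fine as you wrote it.
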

\begin{proof}
Analogous to Lemma 8.1 in \cite{avkahv}.
\end{proof}
\begin{theorem}\label{mainresult}
As $t\rightarrow \infty$,
\be
\ba{rrl}
u_x(x,t)&=&2\frac{k_0^2}{\beta}\sqrt{\frac{|\nu|}{\alpha t}}e^{i(2\alpha \beta t+2\nu \ln{\frac{k_0}{\sqrt{\alpha t}\beta}}+\frac{3\pi}{4}-\frac{\alpha \beta^2}{k_0^2}t-2\tilde \nu \ln{2k_0^2}-2i\chi_{\pm}(k_0)-2i\tilde\chi'_{\pm}(k_0)-\arg{r(k_0)}-\arg{\Gam(-i\nu)})+\tilde\phi}\\
&&{}-2\frac{k_0^2}{\beta}\sqrt{\frac{|\tilde\nu|}{\alpha t}}e^{i(2\alpha \beta t-2\tilde \nu \ln{\frac{k_0}{\sqrt{\alpha t}\beta}}+\frac{3\pi}{4}+\frac{\alpha \beta^2}{k_0^2}t+2\nu \ln{2k_0^2}-2i\chi'_{\pm}(ik_0)-2i\tilde\chi_{\pm}(ik_0)-\arg{r(ik_0)}-\arg{\Gam(i\tilde\nu)})+\tilde\phi}\\
&&{}+O(\frac{1}{t}).
\ea
\ee
where $\tilde\phi=\frac{2}{\pi}\left(\int_{k_0}^{+\infty}\frac{\ln(1+|r(ik')|^2)}{k'}dk'-\int_{k_0}^{+\infty}\frac{\ln(1-|r(k')|^2)}{k'}dk'
-\tilde \psi\right)$
\end{theorem}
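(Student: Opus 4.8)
The plan is to assemble the expansion for $u_x$ directly from the representation (\ref{solfromRHP}) together with the two asymptotic results established immediately above: the leading-order expansion of $m(x,t)$ furnished by the preceding theorem, and the expansion of the phase factor $e^{4i\int_{-\infty}^{x}|m(x',t)|^2\,dx'}$ furnished by the preceding lemma. First I would rewrite (\ref{solfromRHP}) as $u_x(x,t)=2i\,m(x,t)\,e^{4i\int_{-\infty}^{x}|m(x',t)|^2\,dx'}$, so that the problem reduces to multiplying two already-known expansions and tracking the resulting error. The boundedness hypotheses needed for those inputs (the uniform resolvent bounds on $\Sig^{(3)}$ and on the scaled crosses) are exactly what has been proved in the preceding lemmas, so both expansions are available in the sector $k_0<M$.

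The core computation is then algebraic. The preceding theorem gives $m(x,t)=\frac{k_0^2}{\beta}\sqrt{\frac{|\nu|}{\al t}}\,e^{i\tilde\xi_1}-\frac{k_0^2}{\beta}\sqrt{\frac{|\tilde\nu|}{\al t}}\,e^{i\tilde\xi_2}+O(t^{-1})$, a quantity of size $O(t^{-1/2})$, while the preceding lemma gives $e^{4i\int_{-\infty}^{x}|m|^2\,dx'}=e^{i\tilde\phi}+O(t^{-1/2})$, where the exponent is recognized as $\frac{2i}{\pi}(\cdots-\tilde\psi)=i\tilde\phi$. Substituting both into the product and writing the prefactor $2i=2e^{i\pi/2}$, I would observe that multiplication by $2i$ shifts each phase by $\tfrac{\pi}{2}$: this converts the constant $+\tfrac{\pi}{4}$ occurring inside $\tilde\xi_1$ and $\tilde\xi_2$ into $+\tfrac{3\pi}{4}$, reproducing exactly the phases displayed in the statement, while multiplication by $e^{i\tilde\phi}$ supplies the additive $\tilde\phi$ in the exponent. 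This accounts term-by-term for the two oscillatory contributions of size $O(t^{-1/2})$.

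The remaining and more delicate point is the error bookkeeping, which is where the $O(t^{-1})$ remainder is earned. Expanding $2i\,[\,m_{\mathrm{lead}}+O(t^{-1})\,]\,[\,e^{i\tilde\phi}+O(t^{-1/2})\,]$ yields four terms: the principal term $2i\,m_{\mathrm{lead}}\,e^{i\tilde\phi}$; a cross term $2i\,m_{\mathrm{lead}}\cdot O(t^{-1/2})$, which is $O(t^{-1/2})\cdot O(t^{-1/2})=O(t^{-1})$ precisely because $m_{\mathrm{lead}}=O(t^{-1/2})$; a term $2i\,O(t^{-1})\,e^{i\tilde\phi}=O(t^{-1})$; and a higher-order term of size $O(t^{-3/2})$. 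Collecting, every correction is $O(t^{-1})$, which yields the claimed formula. The hard part will be this last step: one must verify that the $O(t^{-1/2})$ error in the phase factor, when multiplied against the $O(t^{-1/2})$ leading part of $m$, remains uniformly $O(t^{-1})$ throughout the sector $k_0<M$, so that it is genuinely absorbed into the stated remainder rather than contaminating the leading order. Since all constants in the two input estimates are controlled uniformly for $k_0<M$, this uniformity follows, completing the proof.
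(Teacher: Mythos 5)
Your proposal is correct and follows the same route the paper (implicitly) takes: Theorem \ref{mainresult} is obtained by substituting the expansion of $m(x,t)$ from the preceding theorem and the expansion of the phase factor $e^{4i\int_{-\infty}^{x}|m|^2dx'}$ from the preceding lemma into (\ref{solfromRHP}), with $2i=2e^{i\pi/2}$ accounting for the shift from $\tfrac{\pi}{4}$ to $\tfrac{3\pi}{4}$ and the $O(t^{-1/2})\cdot O(t^{-1/2})=O(t^{-1})$ bookkeeping absorbing the error of the phase factor. The paper states the theorem without writing out this multiplication, so your argument simply makes the intended proof explicit.
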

Thus, the solution of the Fokas-Lenells equation $u(x,t)$ can be obtained by integration with respect to $x$. This implies that the leading order
asymptotic of the solution to the Fokas-Lenells equation has order $t^{-\frac{1}{2}}$.

\begin{remark}
Although, Fokas-Lenells equation (\ref{FLe1}) is an evolution equation in $u_x$ and that any solution $u(x,t)$ is undetermined up to $u(x,t)\ra u(x,t)+h(t)$ for an arbitrary function $h(t)$, the requirement that $u$ goes to zero as $|x|\ra \infty$ removes this non-uniqueness.
\end{remark}

\begin{remark}
It is not normal that we get the solution $u_x(x,t)$ in terms of the
solution of Rieman-Hilbert problem (\ref{solfromRHP}). And we find
that if we use the asymptotic behavior of the $M(x,t,k)$ as
$k\rightarrow 0$, we can get the solution of $u(x,t)$ from the
$t-$part of Lax pair (\ref{FLelax}). We will use this to deal with
general initial value problem case in another paper \cite{xf}.
\end{remark}
{\bf Acknowledgements}
The work of  Xu was partially supported by Excellent Doctor Research Funding Project of  Fudan University.
The work described in this paper
was supported by grants from the National Science
Foundation of China (Project No.10971031;11271079), Doctoral Programs Foundation of
the Ministry of Education of China, and the Shanghai Shuguang Tracking Project (project 08GG01).

\appendix
\section{\bf Prove Proposition 3.2 and 3.11.}

{\bf Prove Proposition 3.2.}
\par
For the convenience of reader, we show the details of the procedure of the analytic continuation.
\par
{\bf 1. $\frac{k_0}{2}<|k|<k_0,k\in \R$.}
\par
We just consider $\frac{k_0}{2}<k<k_0$, the case for $-k_0<k<-\frac{k_0}{2}$ is similarly.
\par
Set
\be\label{rholessk0}
\rho(k)=\frac{-r(k)}{1-r(k)\ol{r(\bar k)}}=\frac{-r(k)}{1-|r(k)|^2}.
\ee
We split $\rho(k)$ into even and odd parts, $\rho(k)=H_{e}(k^2)+kH_{o}(k^2)$, where $H_{e}(\cdot)$ and $H_{o}(\cdot)$ are of the Schwartz class.
\par
For any positive integer $m$,
\be\label{He}
H_{e}(k^2)=\mu_0^{e}+\mu_1^{e}(k^2-k_0^2)+\cdots+\mu_m^{e}(k^2-k_0^2)^m+\frac{1}{m!}\int_{k_0^2}^{k^2}H_{e}^{(m+1)}(\gam)(k^2-\gam)^md\gam
\ee
and
\be\label{Ho}
H_{o}(k^2)=\mu_0^{o}+\mu_1^{o}(k^2-k_0^2)+\cdots+\mu_m^{o}(k^2-k_0^2)^m+\frac{1}{m!}\int_{k_0^2}^{k^2}H_{o}^{(m+1)}(\gam)(k^2-\gam)^md\gam.
\ee
Set
\be\label{Rm}
R(k)=R_m(k)=\sum_{i=0}^{m}\mu_i^{e}(k^2-k_0^2)^i+k\sum_{i=0}^{m}\mu_i^{o}(k^2-k_0^2)^i.
\ee
Assume $m=4q+1$, where $q$ is a positive integer. Write
\be
\rho(k)=h(k)+R(k),\quad \frac{k_0}{2}<k<k_0,k\in \R.
\ee
Then
\be
\left.\frac{d^j h(k)}{dk^j}\right|_{\pm k_0}=0,\quad 0\le j\le m.
\ee
And we have
\be
h(k)=\frac{(k^2-k_0^2)^{m+1}}{m!}g(k,k_0)
\ee
where
\small{
\be
g(k,k_0)=\left(\int_{0}^{1}H_{e}^{(m+1)}(k_0^2+u(k^2-k_0^2))(1-u)^{m}du
+k\int_{0}^{1}H_{o}^{(m+1)}(k_0^2+u(k^2-k_0^2))(1-u)^{m}du\right)
\ee
}
and
\be
\left|\frac{d^j g(k,k_0)}{dk^j}\right|\le C,\quad \frac{k_0}{2}\le k\le k_0.
\ee
We will split $h$ as $h(k)=h_{\Rmnum{1}}(k)+h_{\Rmnum{2}}(k)$, where $h_{\Rmnum{1}}$ is small and $h_{\Rmnum{2}}$ has an alytic continuation to $\im k>0$. Thus
\be
\rho=h_{\Rmnum{1}}+(h_{\Rmnum{2}}+R).
\ee
\par
Set $p(k)=(k^2-k_0^2)^q$. Recall
\be
\ba{rrl}
\tha(k)&=&k^2(\frac{x}{t}+\alpha)+\frac{\alpha \beta^2}{4k^2}-\alpha\beta\\
&=&\frac{\alpha\beta^2}{4k_0^4}k^2+\frac{\alpha \beta^2}{4k^2}-\alpha\beta.
\ea
\ee

\par
We define
\be
\left\{
\ba{rrll}
\frac{h}{p}(\tha)&=&\frac{h(k(\tha))}{p(k(\tha))},& \tha(k_0)<\tha<\tha(\frac{k_0}{2}),\\
&=&0,& \tha \le \tha(k_0) \quad or \quad \tha \ge \tha(\frac{k_0}{2}).
\ea
\right.
\ee
As $|\tha|\rightarrow \tha(k_0)=\frac{\alpha\beta^2}{2k_0^2}-\alpha\beta$ and $|\tha|>\frac{\alpha\beta^2}{2k_0^2}-\alpha\beta$, we have $\frac{h}{p}(\tha)=O((k^2(\tha)-k_0^2)^{m+1-q})$ and
\be
\frac{d\tha}{dk}=\frac{\alpha\beta^2(k^4-k_0^4)}{2k_0^4k^3}.
\ee
We claim that $\frac{h}{p}\in H^j(-\infty<\tha<\infty)$ for $0\le j\le \frac{3q+2}{2}$. As by Fourier inversion,
\be
\frac{h}{p}(k)=\int_{-\infty}^{\infty}e^{is\tha(k)}\widehat{(\frac{h}{p})}(s)\bar d s,\quad \frac{k_0}{2}<k<k_0,
\ee
where
\be
\widehat{(\frac{h}{p})}(s)=\int_{\tha(k_0)}^{\tha(\frac{k_0}{2})}e^{-is\tha(k)}\frac{h}{p}(\tha(k))\bar d \tha(k),\quad s\in \R.
\ee
where $\bar d s=\frac{ds}{\sqrt{2\pi}}$ and $\bar d \tha(k)=\frac{d\tha(k)}{\sqrt{2\pi}}$.
\par
Thus,
\be
\ba{l}
\int_{\tha(k_0)}^{\tha(\frac{k_0}{2})}\left|\left(\frac{d}{d\tha}\right)^j \frac{h}{p}(\tha(k))\right|^2|\bar d\tha(k)|\\
=\int_{\frac{k_0}{2}}^{k_0}\left|\left(\frac{2k_0^4k^3}{\alpha\beta^2(k^4-k_0^4)}\frac{d}{dk}\right)^j \frac{h}{p}(k)\right|^2|\frac{\alpha\beta^2(k^4-k_0^4)}{2k_0^4k^3}|\bar dk\le C<\infty,
\ea
\ee
for $0<k_0<M,0\le j\le \frac{3q+2}{2}$. Hence,
\be
\int_{-\infty}^{\infty}(1+s^2)^j|\widehat{(h/p)}(s)|^2ds \le C<\infty
\ee
for $0<k_0<M,0\le j\le \frac{3q+2}{2}$.
\par
Split
\be
\ba{rrl}
h(k)&=&p(k)\int_{t}^{\infty}e^{is\tha(k)}\widehat{(h/p)}(s)\bar ds+p(k)\int_{-\infty}^{t}e^{is\tha(k)}\widehat{(h/p)}(s)\bar ds\\
&=&h_{\Rmnum{1}}(k)+h_{\Rmnum{2}}(k).
\ea
\ee
Thus, for $\frac{k_0}{2}<k<k_0\le M$ and any positive integer $n\le \frac{3q+2}{2}$.
\be
\ba{rrl}
|e^{-2it\tha(k)}h_{\Rmnum{1}}(k)|&\le&|p(k)|\int_{t}^{\infty}|\widehat{(h/p)}(s)|\bar ds\\
&\le&|p(k)|(\int_{t}^{\infty}(1+s^2)^{-n}\bar ds)^{\frac{1}{2}}(\int_{t}^{\infty}(1+s^2)^n|\widehat{(h/p)}(s)|^2\bar ds)^{\frac{1}{2}}\\
&\le&\frac{c}{t^{n-\frac{1}{2}}}.
\ea
\ee
Consider the contour $l_1:k(u)=k_0+uk_0e^{i\frac{3\pi}{4}},0\le u\le \frac{1}{\sqrt{2}}$. Since $\re i\tha(k)$ is positive on this contour, $h_{\Rmnum{2}}(k)$ has an analytic continuation to contours $l_1$.
\par
On the contour $l_1$,
\be
\ba{rrl}
|e^{-2it\tha(k)}h_{\Rmnum{2}}(k)|&\le&|k+k_0|^q(k_0u)^qe^{-t\re i\tha(k)}\int_{-\infty}^{t}e^{(s-t)\re i\tha(k)}\widehat{(h/p)}(s)\bar ds\\
&\le &ck_0^{2q}u^{q}e^{-t\re i\tha(k)}(\int_{-\infty}^{t}(1+s^2)^{-1}\bar ds)^{\frac{1}{2}}(\int_{-\infty}^{t}(1+s^2)|\widehat{(h/p)}(s)|^2\bar ds)^{\frac{1}{2}}\\
&\le&ck_0^{2q}u^{q}e^{-t\re i\tha(k)}.
\ea
\ee
Recall $\tha(k)=\frac{\alpha\beta^2}{4k_0^4}k^2+\frac{\alpha \beta^2}{4k^2}-\alpha\beta$, and set $k=k_1+ik_2$, thus
\be
\ba{rrl}
\re i\tha(k)&=&-2\alpha\beta^2 k_1k_2\frac{(k_1^2+k_2^2)^2-k_0^4}{4k_0^4(k_1^2+k_2^2)^2}\\
&=& \frac{\alpha\beta^2}{4k_0^2}\frac{(u^2-\sqrt{2}u)^2(u^2-\sqrt{2}u+2)}{(u^2-\sqrt{2}u+1)^2}\\
&\ge&\frac{\alpha\beta^2 u^2}{2k_0^2},
\ea
\ee
for $0\le u\le \frac{1}{\sqrt{2}}$.
\par
Thus, on the contour $l_1$
\be
\ba{rrl}
|e^{-2it\tha(k)}h_{\Rmnum{2}}(k)|&\le &ck_0^{2q}u^{q}e^{-t\frac{\alpha\beta^2 u^2}{2k_0^2}}\le ck_0^{2q}u^{q}e^{-t\frac{\alpha\beta^2 u^2}{2M^2}}\\
&\le& \frac{c_1}{t^{\frac{q}{2}}},
\ea
\ee
for $k_0<M$.
\par
Fix $\eps$, $0<\eps<\frac{1}{\sqrt{2}}$. If $k(u)$ is on the contour $l_1$ , $\eps<u<\frac{1}{\sqrt{2}}$, then we obtain
\be
|e^{-2it\tha(k)}R(k)|\le ce^{-\frac{\alpha\beta^2 u^2}{k_0^2}t}\le ce^{-\frac{\eps^2\alpha\beta^2}{M^2}t}
\ee

\par
{\bf 2.$0<|k|<\frac{k_0}{2},k\in \R$.}
\par
We consider $0<k<\frac{k_0}{2}$, the case for $-\frac{k_0}{2}<k<0$ is similarly.
\par
Define
\be
\left\{
\ba{rrll}
\rho(\tha)&=&\rho(k(\tha)),&\tha>\tha(\frac{k_0}{2}),\\
&=&0,&\tha\le \tha(\frac{k_0}{2}).
\ea
\right.
\ee
We claim that $\rho(\tha)\in H^j(-\infty<\tha<\infty)$ for any nonnegative integer $j$.
\par
By Fourier inversion,
\be
\rho(\tha(k))=\int_{-\infty}^{\infty}e^{is\tha(k)}\hat \rho(s)\bar ds,\quad 0<k<\frac{k_0}{2},
\ee
where
\be
\hat \rho(s)=\int_{\tha(\frac{k_0}{2})}^{\infty}e^{-is\tha(k)}\rho(\tha(k))\bar d\tha(k).
\ee
Then,
\be
\ba{l}
\int_{\tha(\frac{k_0}{2})}^{\infty}\left|\left(\frac{d}{d\tha}\right)^j \rho(\tha(k))\right|^2|\bar d\tha(k)|\\
=\int_{0}^{\frac{k_0}{2}}\left|\left(\frac{2k_0^4k^3}{\alpha\beta^2(k^4-k_0^4)}\frac{d}{dk}\right)^j \rho(k)\right|^2|\frac{\alpha\beta^2(k^4-k_0^4)}{2k_0^4k^3}|\bar dk\le C<\infty,
\ea
\ee
for any nonnegative integer $j$, $0<k_0<M$, since $r(k)\rightarrow 0$ rapidly, as $k\rightarrow 0$.
\par
Hence
\be
\int_{-\infty}^{\infty}(1+s^2)^j |\hat \rho(s)|^2\bar ds\le C,
\ee
for any nonnegative integer $j$.
\par
Split
\be
\ba{rrl}
\rho(k)&=&\int_{t}^{\infty}e^{is\tha(k)}\hat \rho(s)\bar ds+\int_{-\infty}^{t}e^{is\tha(k)}\hat \rho(s)\bar ds\\
&=&h_{\Rmnum{1}}(k)+h_{\Rmnum{2}}(k).
\ea
\ee
Then, for $0<k<\frac{k_0}{2}$ and any positive integer $j$, we obtain,
\be
\ba{rrl}
|e^{-2it\tha(k)}h_{\Rmnum{1}}(k)|&\le& \int_{t}^{\infty}|\hat \rho|\bar ds\\
&\le& (\int_{t}^{\infty}(1+s^2)^{-j}\bar ds)^{\frac{1}{2}}(\int_{t}^{\infty}(1+s^2)^j|\hat \rho(s)|^2\bar ds)^{\frac{1}{2}}\\
&\le& \frac{c}{t^{j-\frac{1}{2}}}.
\ea
\ee
Consider the contour $l_2: k(u)=uk_0e^{i\frac{\pi}{4}},0<u<\frac{1}{\sqrt{2}}$. Since $\re i\tha(k)$ is positive on this contour, $h_{\Rmnum{2}}$ has an analytic
continuation to contour $l_2$.
\par
On the contour $l_2$,
\be
\ba{rrl}
|e^{-2it\tha(k)}h_{\Rmnum{2}}(k)|&\le&e^{-t\re i\tha(k)}\int_{-\infty}^{t}e^{(s-t)\re i\tha(k)}|\hat \rho(k)|\bar ds\\
&\le&e^{-t\re i\tha(k)}(\int_{-\infty}^{t}(1+s^2)^{-1}\bar ds)^{\frac{1}{2}}(\int_{-\infty}^{t}(1+s^2)|\hat \rho(k)|^2\bar ds)^{\frac{1}{2}},
\ea
\ee
where
\be
\ba{rrl}
\re i\tha(k)&=&-2\alpha\beta^2 k_1k_2\frac{(k_1^2+k_2^2)^2-k_0^4}{4k_0^4(k_1^2+k_2^2)^2}\\
&=&-\frac{\alpha\beta^2}{4k_0^2}\frac{u^4-1}{u^2}\\
&\ge&\frac{\alpha\beta^2}{4k_0^2}
\ea
\ee
for $0<u\le \frac{1}{\sqrt{2}}$.
\par
Thus, we obtain,
\be
|e^{-2it\tha(k)}h_{\Rmnum{2}}(k)|\le ce^{-t\frac{\alpha\beta^2}{4k_0^2}}.
\ee

\par
{\bf 3. $|k|>k_0,k\in \R$}
\par
We consider $k>k_0$, the case for $k<-k_0$ is similarly.
\par
Set
\be
\rho(k)=r(k).
\ee
We write
\be
(k-i)^{m+5}\rho(k)=\mu_0+\mu_1(k-k_0)+\cdots+\mu_m(k-k_0)^m+\frac{1}{m!}\int_{k_0}^{k}((\cdot-i)^{m+5}\rho(\cdot))^{(m+1)}(\gam)(k-\gam)^md\gam.
\ee
Define
\be
R(k)=\frac{\sum_{i=0}^{m}\mu_i(k-k_0)^i}{(k-i)^{m+5}}
\ee
and write $\rho(k)=h(k)+R(k)$. We have
\be
\left.\frac{d^jh(k)}{dk^j}\right|_{k_0}=0,\quad 0\le j\le m.
\ee
For $0<k_0<M$, set
\be
v(k)=\frac{(k-k_0)^q}{(k-i)^{q+2}}.
\ee
Let
\be
\left\{
\ba{rrll}
\frac{h}{v}(\tha)&=&\frac{h}{v}(k(\tha)),&\tha>\tha(k_0),\\
&=&0,&\tha\le \tha(k_0).
\ea
\right.
\ee
Then
\be
\frac{h}{v}(\tha(k))=\int_{-\infty}^{\infty}e^{is\tha(k)}\widehat{(\frac{h}{v})}(s)\bar ds,\quad k\ge k_0,
\ee
where
\be
\widehat{(\frac{h}{v})}(s)=\int_{\tha(k_0)}^{\infty}e^{-is\tha(k)}\frac{h}{v}(\tha(k))\bar d\tha(k).
\ee
Moreover, we have
\be
\frac{h}{v}(\tha(k))=\frac{(k-k_0)^{3q+2}}{(k-i)^{3q+4}}g(k,k_0),
\ee
where
\be
g(k,k_0)=\frac{1}{m!}\int_{0}^{1}((\cdot-i)^{m+5}\rho(\cdot))^{(m+1)}(k_0+u(k-k_0))(1-u)^k du
\ee
and
\be
\left|\frac{d^j g(k,k_0)}{dk^j}\right|\le C,\quad k\ge k_0.
\ee
Using the identity $\left|\frac{k-k_0}{k+k_0}\right|\le 1$ for $k\ge k_0$, we have
\be
\ba{rrl}
\int_{\tha(k_0)}^{\infty}\left|\left(\frac{d}{d\tha}\right)^j\left(\frac{h}{v}(\tha(k))\right)\right|^2\bar d\tha(k)&=&
\int_{k_0}^{\infty}\left|\left(\frac{2k_0^4}{k\alpha\beta^2(1-\frac{k_0^4}{k^4})}\frac{d}{dk}\right)^j\frac{h}{v}(k)\right|^2\frac{k\alpha\beta^2(1-\frac{k_0^4}{k^4})}{2k_0^4}|\bar dk\\
&\le &c\int_{k_0}^{\infty}\left|\frac{(k-k_0)^{3q+2-3j}}{(k-i)^{3q+4}}\right|^2k^{6j-3}(k^4-k_0^4)\bar dk \le C_1, \quad 0\le j\le \frac{3q+2}{3}.
\ea
\ee
Thus,
\be
\int_{-\infty}^{\infty}(1+s^2)^j \left|\widehat{(\frac{h}{v})}(s)\right|^2\bar ds\le C<\infty.
\ee
We write
\be
\ba{rrl}
h(k)&=&v(k)\int_{t}^{\infty}e^{is\tha(k)}\widehat{(h/v)}(s)\bar ds+v(k)\int_{-\infty}^{t}e^{is\tha(k)}\widehat{(h/v)}(s)\bar ds\\
&=&h_{\Rmnum{1}}(k)+h_{\Rmnum{2}}(k).
\ea
\ee
For $k\ge k_0,0<k_0<M$, and any positive integer $e\le \frac{3q+2}{3}$, we obtain,
\be
\ba{rrl}
|e^{-2it\tha(k)}h_{\Rmnum{1}}(k)|&\le& \frac{|k-k_0|^q}{|k-i|^{q+2}}\int_{t}^{\infty}|\widehat{(h/v)}(s)|\bar ds\\
&\le & \frac{|k-k_0|^q}{|k-i|^{q+2}}(\int_{t}^{\infty}(1+s^2)^{-e}\bar ds)^{\frac{1}{2}}(\int_{t}^{\infty}(1+s^2)^e|\widehat{(h/v)}(s)|^2\bar ds)^{\frac{1}{2}}\\
&\le & c\frac{1}{(1+|k|^2)t^{e-\frac{1}{2}}}.
\ea
\ee
And $h_{\Rmnum{2}}(k)$ has an analytic continuation to the lower half-plane, where $\re i\tha(k)$ is positive. We estimate $e^{-2it\tha(k)}h_{\Rmnum{2}}(k)$ on the contour $k(u)=k_0+uk_0e^{-i\frac{\pi}{4}},u\ge 0$.
\par
If $0<u\le M_1$,
\be
|e^{-2it\tha(k)}h_{\Rmnum{2}}(k)|\le c\frac{k_0^qu^qe^{-t\re i\tha(k)}}{|k-i|^{q+2}},
\ee
where
\be
\ba{rrl}
\re i\tha(k)&=&-2\alpha\beta^2 k_1k_2\frac{(k_1^2+k_2^2)^2-k_0^4}{4k_0^4(k_1^2+k_2^2)^2}\\
&=& \frac{\alpha\beta^2}{4k_0^2}\frac{(u^2+\sqrt{2}u)^2(u^2+\sqrt{2}u+2)}{(u^2+\sqrt{2}u+1)^2}\\
&\ge&\frac{\alpha\beta^2 }{4k_0^2}\frac{4u^2}{(u^2+\sqrt{2}u+1)^2}.
\ea
\ee
Then
\be
\ba{rrl}
|e^{-2it\tha(k)}h_{\Rmnum{2}}(k)|&\le &c\frac{k_0^qu^qe^{-t\re i\tha(k)}}{|k-i|^{q+2}}\le c_1\frac{k_0^qu^q}{|k-i|^{q+2}}e^{-t\frac{\alpha\beta^2 }{4k_0^2}\frac{4u^2}{(u^2+\sqrt{2}u+1)^2}}\\
&\le &\frac{c_2}{(1+|k|^2)^{q+2}t^{\frac{q}{2}}}\le \frac{c_2}{(1+|k|^2)t^{\frac{q}{2}}}
\ea
\ee
If $u>M_1$, then
\be
\re i\tha(k)\ge \frac{\alpha\beta^2 }{4k_0^2}\frac{u^6}{(u^2+\sqrt{2}u+1)^2}
\ee
and
\be
\ba{rrl}
|e^{-2it\tha(k)}h_{\Rmnum{2}}(k)|&\le &c\frac{k_0^qu^qe^{-t\re i\tha(k)}}{|k-i|^{q+2}}\le c_3\frac{k_0^qu^q}{|k-i|^{q+2}}e^{-t\frac{\alpha\beta^2 }{4k_0^2}\frac{u^6}{(u^2+\sqrt{2}u+1)^2}}\\
&\le &\frac{c_4}{(1+|k|^2)t^{q}}
\ea
\ee
Hence, for $u>0$, we obtain
\be
|e^{-2it\tha(k)}h_{\Rmnum{2}}(k)|\le \frac{c_5}{(1+|k|^2)t^{\frac{q}{2}}}.
\ee

\par
{\bf 4.$\frac{k_0}{2}<|k|<k_0,k\in i\R$}.
\par
We just consider $\frac{k_0}{2}<\im k<k_0$, the case for $-k_0<\im k<-\frac{k_0}{2}$ is similarly.
\par
Set
\be
\rho(k)=\frac{-r(k)}{1+|r(k)|^2}, \quad \frac{k_0}{2}<\im k<k_0,k\in i\R.
\ee
The following process is similar as the case $\frac{k_0}{2}<k<k_0$. That is,

We split $\rho(k)$ into even and odd parts, $\rho(k)=H_{e}(k^2)+kH_{o}(k^2)$, where $H_{e}(\cdot)$ and $H_{o}(\cdot)$ are of the Schwartz class.
\par
For any positive integer $m$,
\be\label{He}
H_{e}(k^2)=\mu_0^{e}+\mu_1^{e}(k^2+k_0^2)+\cdots+\mu_m^{e}(k^2+k_0^2)^m+\frac{1}{m!}\int_{k_0^2}^{k^2}H_{e}^{(m+1)}(\gam)(k^2-\gam)^md\gam
\ee
and
\be\label{Ho}
H_{o}(k^2)=\mu_0^{o}+\mu_1^{o}(k^2+k_0^2)+\cdots+\mu_m^{o}(k^2+k_0^2)^m+\frac{1}{m!}\int_{k_0^2}^{k^2}H_{o}^{(m+1)}(\gam)(k^2-\gam)^md\gam.
\ee
Set
\be\label{Rm}
R(k)=R_m(k)=\sum_{i=0}^{m}\mu_i^{e}(k^2+k_0^2)^i+k\sum_{i=0}^{m}\mu_i^{o}(k^2+k_0^2)^i.
\ee
Assume $m=4q+1$, where $q$ is a positive integer. Write
\be
\rho(k)=h(k)+R(k),\quad \frac{k_0}{2}<\im k<k_0,k\in i\R.
\ee
Then
\be
\left.\frac{d^j h(k)}{dk^j}\right|_{\pm ik_0}=0,\quad 0\le j\le m.
\ee
And we have
\be
h(k)=\frac{(k^2+k_0^2)^{m+1}}{m!}g(k,k_0)
\ee
where
\small{
\be
g(k,ik_0)=\left(\int_{0}^{1}H_{e}^{(m+1)}(-k_0^2+u(k^2+k_0^2))(1-u)^{m}du
+k\int_{0}^{1}H_{o}^{(m+1)}(-k_0^2+u(k^2+k_0^2))(1-u)^{m}du\right)
\ee
}
and
\be
\left|\frac{d^j g(k,ik_0)}{dk^j}\right|\le C,\quad \frac{k_0}{2}\le \im k\le k_0.
\ee
We will split $h$ as $h(k)=h_{\Rmnum{1}}(k)+h_{\Rmnum{2}}(k)$, where $h_{\Rmnum{1}}$ is small and $h_{\Rmnum{2}}$ has an analytic continuation to $\re k>0$. Thus
\be
\rho=h_{\Rmnum{1}}+(h_{\Rmnum{2}}+R).
\ee
\par
Set $p(k)=(k^2+k_0^2)^q$. Recall
\be
\ba{rrl}
\tha(k)&=&k^2(\frac{x}{t}+\alpha)+\frac{\alpha \beta^2}{4k^2}-\alpha\beta\\
&=&\frac{\alpha\beta^2}{4k_0^4}k^2+\frac{\alpha \beta^2}{4k^2}-\alpha\beta.
\ea
\ee

\par
We define
\be
\left\{
\ba{rrll}
\frac{h}{p}(\tha)&=&\frac{h(k(\tha))}{p(k(\tha))},& \tha(ik_0)<\tha<\tha(\frac{ik_0}{2}),\\
&=&0,& \tha \le \tha(ik_0) \quad or \quad \tha \ge \tha(\frac{ik_0}{2}).
\ea
\right.
\ee
As $|\tha|\rightarrow \tha(ik_0)=-\frac{\alpha\beta^2}{2k_0^2}-\alpha\beta$ and $|\tha|>-\frac{\alpha\beta^2}{2k_0^2}-\alpha\beta$, we have $\frac{h}{p}(\tha)=O((k^2(\tha)+k_0^2)^{m+1-q})$ and
\be
\frac{d\tha}{dk}=\frac{\alpha\beta^2(k^4-k_0^4)}{2k_0^4k^3}.
\ee
We claim that $\frac{h}{p}\in H^j(-\infty<\tha<\infty)$ for $0\le j\le \frac{3q+2}{2}$. As by Fourier inversion,
\be
\frac{h}{p}(k)=\int_{-\infty}^{\infty}e^{is\tha(k)}\widehat{(\frac{h}{p})}(s)\bar d s,\quad \frac{k_0}{2}<\im k<k_0,
\ee
where
\be
\widehat{(\frac{h}{p})}(s)=\int_{\tha(ik_0)}^{\tha(\frac{ik_0}{2})}e^{-is\tha(k)}\frac{h}{p}(\tha(k))\bar d \tha(k),\quad s\in \R.
\ee
where $\bar d s=\frac{ds}{\sqrt{2\pi}}$ and $\bar d \tha(k)=\frac{d\tha(k)}{\sqrt{2\pi}}$.
\par
Thus,
\be
\ba{l}
\int_{\tha(ik_0)}^{\tha(i\frac{k_0}{2})}\left|\left(\frac{d}{d\tha}\right)^j \frac{h}{p}(\tha(k))\right|^2|\bar d\tha(k)|\\
=\int_{i\frac{k_0}{2}}^{ik_0}\left|\left(\frac{2k_0^4k^3}{\alpha\beta^2(k^4-k_0^4)}\frac{d}{dk}\right)^j \frac{h}{p}(k)\right|^2|\frac{\alpha\beta^2(k^4-k_0^4)}{2k_0^4k^3}|\bar dk\le C<\infty,
\ea
\ee
for $0<k_0<M,0\le j\le \frac{3q+2}{2}$. Hence,
\be
\int_{-\infty}^{\infty}(1+s^2)^j|\widehat{(h/p)}(s)|^2ds \le C<\infty
\ee
for $0<k_0<M,0\le j\le \frac{3q+2}{2}$.
\par
Split
\be
\ba{rrl}
h(k)&=&p(k)\int_{t}^{\infty}e^{is\tha(k)}\widehat{(h/p)}(s)\bar ds+p(k)\int_{-\infty}^{t}e^{is\tha(k)}\widehat{(h/p)}(s)\bar ds\\
&=&h_{\Rmnum{1}}(k)+h_{\Rmnum{2}}(k).
\ea
\ee
Thus, for $\frac{k_0}{2}<\im k<k_0\le M$ and any positive integer $n\le \frac{3q+2}{2}$.
\be
\ba{rrl}
|e^{-2it\tha(k)}h_{\Rmnum{1}}(k)|&\le&|p(k)|\int_{t}^{\infty}|\widehat{(h/p)}(s)|\bar ds\\
&\le&|p(k)|(\int_{t}^{\infty}(1+s^2)^{-n}\bar ds)^{\frac{1}{2}}(\int_{t}^{\infty}(1+s^2)^n|\widehat{(h/p)}(s)|^2\bar ds)^{\frac{1}{2}}\\
&\le&\frac{c}{t^{n-\frac{1}{2}}}.
\ea
\ee
Consider the contour $l'_1:k(u)=ik_0+uk_0e^{-i\frac{\pi}{4}},0\le u\le \frac{1}{\sqrt{2}}$. Since $\re i\tha(k)$ is positive on this contour, $h_{\Rmnum{2}}(k)$ has an analytic continuation to contours $l'_1$.
\par
On the contour $l'_1$,
\be
\ba{rrl}
|e^{-2it\tha(k)}h_{\Rmnum{2}}(k)|&\le&|k+ik_0|^q(k_0u)^qe^{-t\re i\tha(k)}\int_{-\infty}^{t}e^{(s-t)\re i\tha(k)}\widehat{(h/p)}(s)\bar ds\\
&\le &ck_0^{2q}u^{q}e^{-t\re i\tha(k)}(\int_{-\infty}^{t}(1+s^2)^{-1}\bar ds)^{\frac{1}{2}}(\int_{-\infty}^{t}(1+s^2)|\widehat{(h/p)}(s)|^2\bar ds)^{\frac{1}{2}}\\
&\le&ck_0^{2q}u^{q}e^{-t\re i\tha(k)}.
\ea
\ee
Recall $\tha(k)=\frac{\alpha\beta^2}{4k_0^4}k^2+\frac{\alpha \beta^2}{4k^2}-\alpha\beta$, and set $k=k_1+ik_2$, thus
\be
\ba{rrl}
\re i\tha(k)&=&-2\alpha\beta^2 k_1k_2\frac{(k_1^2+k_2^2)^2-k_0^4}{4k_0^4(k_1^2+k_2^2)^2}\\
&=& \frac{\alpha\beta^2}{4k_0^2}\frac{(u^2-\sqrt{2}u)^2(u^2-\sqrt{2}u+2)}{(u^2-\sqrt{2}u+1)^2}\\
&\ge&\frac{\alpha\beta^2 u^2}{2k_0^2},
\ea
\ee
for $0\le u\le \frac{1}{\sqrt{2}}$.
\par
Thus, on the contour $l'_1$
\be
\ba{rrl}
|e^{-2it\tha(k)}h_{\Rmnum{2}}(k)|&\le &ck_0^{2q}u^{q}e^{-t\frac{\alpha\beta^2 u^2}{2k_0^2}}\le ck_0^{2q}u^{q}e^{-t\frac{\alpha\beta^2 u^2}{2M^2}}\\
&\le& \frac{c_1}{t^{\frac{q}{2}}},
\ea
\ee
for $k_0<M$.
\par
Fix $\eps$, $0<\eps<\frac{1}{\sqrt{2}}$. If $k(u)$ is on the contour $l'_1$ , $\eps<u<\frac{1}{\sqrt{2}}$, then we obtain
\be
|e^{-2it\tha(k)}R(k)|\le ce^{-\frac{\alpha\beta^2 u^2}{k_0^2}t}\le ce^{-\frac{\eps^2}{M^2}t}
\ee

\par

{\bf 5. $0<|k|<\frac{k_0}{2},k\in i\R$}.
\par
We consider $0<\im k<\frac{k_0}{2},k\in i\R$, the case for $-\frac{k_0}{2}<\im k<0$ is similarly.
\par
Define
\be
\left\{
\ba{rrll}
\rho(\tha)&=&\rho(k(\tha)),&\tha>\tha(i\frac{k_0}{2}),\\
&=&0,&\tha\le \tha(i\frac{k_0}{2}).
\ea
\right.
\ee
We claim that $\rho(\tha)\in H^j(-\infty<\tha<\infty)$ for any nonnegative integer $j$.
\par
By Fourier inversion,
\be
\rho(\tha(k))=\int_{-\infty}^{\infty}e^{is\tha(k)}\hat \rho(s)\bar ds,\quad 0<\im k<\frac{k_0}{2},
\ee
where
\be
\hat \rho(s)=\int_{\tha(i\frac{k_0}{2})}^{\infty}e^{-is\tha(k)}\rho(\tha(k))\bar d\tha(k).
\ee
Then,
\be
\ba{l}
\int_{\tha(i\frac{k_0}{2})}^{\infty}\left|\left(\frac{d}{d\tha}\right)^j \rho(\tha(k))\right|^2|\bar d\tha(k)|\\
=\int_{0}^{i\frac{k_0}{2}}\left|\left(\frac{2k_0^4k^3}{\alpha\beta^2(k^4-k_0^4)}\frac{d}{dk}\right)^j \rho(k)\right|^2|\frac{\alpha\beta^2(k^4-k_0^4)}{2k_0^4k^3}|\bar dk\le C<\infty,
\ea
\ee
for any nonnegative integer $j$, $0<k_0<M$, since $r(k)\rightarrow 0$ rapidly, as $k\rightarrow 0$.
\par
Hence
\be
\int_{-\infty}^{\infty}(1+s^2)^j |\hat \rho(s)|^2\bar ds\le C,
\ee
for any nonnegative integer $j$.
\par
Split
\be
\ba{rrl}
\rho(k)&=&\int_{t}^{\infty}e^{is\tha(k)}\hat \rho(s)\bar ds+\int_{-\infty}^{t}e^{is\tha(k)}\hat \rho(s)\bar ds\\
&=&h_{\Rmnum{1}}(k)+h_{\Rmnum{2}}(k).
\ea
\ee
Then, for $0<\im k<i\frac{k_0}{2}$ and any positive integer $j$, we obtain,
\be
\ba{rrl}
|e^{-2it\tha(k)}h_{\Rmnum{1}}(k)|&\le& \int_{t}^{\infty}|\hat \rho|\bar ds\\
&\le& (\int_{t}^{\infty}(1+s^2)^{-j}\bar ds)^{\frac{1}{2}}(\int_{t}^{\infty}(1+s^2)^j|\hat \rho(s)|^2\bar ds)^{\frac{1}{2}}\\
&\le& \frac{c}{t^{j-\frac{1}{2}}}.
\ea
\ee
Consider the contour $l'_2: k(u)=uk_0e^{i\frac{\pi}{4}},0<u<\frac{1}{\sqrt{2}}$. Since $\re i\tha(k)$ is positive on this contour, $h_{\Rmnum{2}}$ has an analytic continuation to contour $l'_2$.
\par
On the contour $l'_2$,
\be
\ba{rrl}
|e^{-2it\tha(k)}h_{\Rmnum{2}}(k)|&\le&e^{-t\re i\tha(k)}\int_{-\infty}^{t}e^{(s-t)\re i\tha(k)}|\hat \rho(k)|\bar ds\\
&\le&e^{-t\re i\tha(k)}(\int_{-\infty}^{t}(1+s^2)^{-1}\bar ds)^{\frac{1}{2}}(\int_{-\infty}^{t}(1+s^2)|\hat \rho(k)|^2\bar ds)^{\frac{1}{2}},
\ea
\ee
where
\be
\ba{rrl}
\re i\tha(k)&=&-2\alpha\beta^2 k_1k_2\frac{(k_1^2+k_2^2)^2-k_0^4}{4k_0^4(k_1^2+k_2^2)^2}\\
&=&-\frac{\alpha\beta^2}{4k_0^2}\frac{u^4-1}{u^2}\\
&\ge&\frac{\alpha\beta^2}{4k_0^2}
\ea
\ee
for $0<u\le \frac{1}{\sqrt{2}}$.
\par
Thus, we obtain,
\be
|e^{-2it\tha(k)}h_{\Rmnum{2}}(k)|\le ce^{-t\frac{\alpha\beta^2}{4k_0^2}}.
\ee

\par
{\bf 6. $|k|>k_0,k\in i\R$}.
\par
We consider $\im k>k_0,k\in \R$, the case for $\im k<-k_0$ is similarly.
\par
Set
\be
\rho(k)=r(k).
\ee
We write
\be
(k+1)^{m+5}\rho(k)=\mu_0+\mu_1(k-ik_0)+\cdots+\mu_m(k-ik_0)^m+\frac{1}{m!}\int_{k_0}^{k}((\cdot+1)^{m+5}\rho(\cdot))^{(m+1)}(\gam)(k-\gam)^md\gam.
\ee
Define
\be
R(k)=\frac{\sum_{i=0}^{m}\mu_i(k-ik_0)^i}{(k+1)^{m+5}}
\ee
and write $\rho(k)=h(k)+R(k)$. We have
\be
\left.\frac{d^jh(k)}{dk^j}\right|_{ik_0}=0,\quad 0\le j\le m.
\ee
For $0<k_0<M$, set
\be
v(k)=\frac{(k-ik_0)^q}{(k+1)^{q+2}}.
\ee
Let
\be
\left\{
\ba{rrll}
\frac{h}{v}(\tha)&=&\frac{h}{v}(k(\tha)),&\tha>\tha(ik_0),\\
&=&0,&\tha\le \tha(ik_0).
\ea
\right.
\ee
Then
\be
\frac{h}{v}(\tha(k))=\int_{-\infty}^{\infty}e^{is\tha(k)}\widehat{(\frac{h}{v})}(s)\bar ds,\quad k\ge k_0,
\ee
where
\be
\widehat{(\frac{h}{v})}(s)=\int_{\tha(ik_0)}^{\infty}e^{-is\tha(k)}\frac{h}{v}(\tha(k))\bar d\tha(k).
\ee
Moreover, we have
\be
\frac{h}{v}(\tha(k))=\frac{(k-ik_0)^{3q+2}}{(k+1)^{3q+4}}g(k,ik_0),
\ee
where
\be
g(k,ik_0)=\frac{1}{m!}\int_{0}^{1}((\cdot-i)^{m+5}\rho(\cdot))^{(m+1)}(ik_0+u(k-ik_0))(1-u)^k du
\ee
and
\be
\left|\frac{d^j g(k,ik_0)}{dk^j}\right|\le C,\quad \im k\ge k_0.
\ee
Using the identity $\left|\frac{k-ik_0}{k+ik_0}\right|\le 1$ for $\im k\ge k_0$, we have
\be
\ba{rrl}
\int_{\tha(ik_0)}^{\infty}\left|\left(\frac{d}{d\tha}\right)^j\left(\frac{h}{v}(\tha(k))\right)\right|^2\bar d\tha(k)&=&
\int_{ik_0}^{\infty}\left|\left(\frac{2k_0^4}{k\alpha\beta^2(1-\frac{k_0^4}{k^4})}\frac{d}{dk}\right)^j\frac{h}{v}(k)\right|^2\frac{k\alpha\beta^2(1-\frac{k_0^4}{k^4})}{2k_0^4}|\bar dk\\
&\le &c\int_{ik_0}^{\infty}\left|\frac{(k-ik_0)^{3q+2-3j}}{(k+1)^{3q+4}}\right|^2k^{6j-3}(k^4-k_0^4)\bar dk \le C_1, \quad 0\le j\le \frac{3q+2}{3}.
\ea
\ee
Thus,
\be
\int_{-\infty}^{\infty}(1+s^2)^j \left|\widehat{(\frac{h}{v})}(s)\right|^2\bar ds\le C<\infty.
\ee
We write
\be
\ba{rrl}
h(k)&=&v(k)\int_{t}^{\infty}e^{is\tha(k)}\widehat{(h/v)}(s)\bar ds+v(k)\int_{-\infty}^{t}e^{is\tha(k)}\widehat{(h/v)}(s)\bar ds\\
&=&h_{\Rmnum{1}}(k)+h_{\Rmnum{2}}(k).
\ea
\ee
For $\im k\ge k_0,k\in i\R,0<k_0<M$, and any positive integer $e\le \frac{3q+2}{3}$, we obtain,
\be
\ba{rrl}
|e^{-2it\tha(k)}h_{\Rmnum{1}}(k)|&\le& \frac{|k-ik_0|^q}{|k+1|^{q+2}}\int_{t}^{\infty}|\widehat{(h/v)}(s)|\bar ds\\
&\le & \frac{|k-ik_0|^q}{|k+1|^{q+2}}(\int_{t}^{\infty}(1+s^2)^{-e}\bar ds)^{\frac{1}{2}}(\int_{t}^{\infty}(1+s^2)^e|\widehat{(h/v)}(s)|^2\bar ds)^{\frac{1}{2}}\\
&\le & c\frac{1}{(1+|k|^2)t^{e-\frac{1}{2}}}.
\ea
\ee
And $h_{\Rmnum{2}}(k)$ has an analytic continuation to the left half-plane, where $\re i\tha(k)$ is positive. We estimate $e^{-2it\tha(k)}h_{\Rmnum{2}}(k)$ on the contour $k(u)=ik_0+uk_0e^{i\frac{3\pi}{4}},u\ge 0$.
\par
If $0<u\le M_1$,
\be
|e^{-2it\tha(k)}h_{\Rmnum{2}}(k)|\le c\frac{k_0^qu^qe^{-t\re i\tha(k)}}{|k-i|^{q+2}},
\ee
where
\be
\ba{rrl}
\re i\tha(k)&=&-2\alpha\beta^2 k_1k_2\frac{(k_1^2+k_2^2)^2-k_0^4}{4k_0^4(k_1^2+k_2^2)^2}\\
&=& \frac{\alpha\beta^2}{4k_0^2}\frac{(u^2+\sqrt{2}u)^2(u^2+\sqrt{2}u+2)}{(u^2+\sqrt{2}u+1)^2}\\
&\ge&\frac{\alpha\beta^2 }{4k_0^2}\frac{4u^2}{(u^2+\sqrt{2}u+1)^2}.
\ea
\ee
Then
\be
\ba{rrl}
|e^{-2it\tha(k)}h_{\Rmnum{2}}(k)|&\le &c\frac{k_0^qu^qe^{-t\re i\tha(k)}}{|k+1|^{q+2}}\le c_1\frac{k_0^qu^q}{|k+1|^{q+2}}e^{-t\frac{\alpha\beta^2 }{4k_0^2}\frac{4u^2}{(u^2+\sqrt{2}u+1)^2}}\\
&\le &\frac{c_2}{(1+|k|^2)^{q+2}t^{\frac{q}{2}}}\le \frac{c_2}{(1+|k|^2)t^{\frac{q}{2}}}
\ea
\ee
If $u>M_1$, then
\be
\re i\tha(k)\ge \frac{\alpha\beta^2 }{4k_0^2}\frac{u^6}{(u^2+\sqrt{2}u+1)^2}
\ee
and
\be
\ba{rrl}
|e^{-2it\tha(k)}h_{\Rmnum{2}}(k)|&\le &c\frac{k_0^qu^qe^{-t\re i\tha(k)}}{|k+1|^{q+2}}\le c_3\frac{k_0^qu^q}{|k-i|^{q+2}}e^{-t\frac{\alpha\beta^2 }{4k_0^2}\frac{u^6}{(u^2+\sqrt{2}u+1)^2}}\\
&\le &\frac{c_4}{(1+|k|^2)t^{q}}
\ea
\ee
Hence, for $u>0$, we obtain
\be
|e^{-2it\tha(k)}h_{\Rmnum{2}}(k)|\le \frac{c_5}{(1+|k|^2)t^{\frac{q}{2}}}.
\ee

\par
\par
Note that if $l$ is an arbitrary positive integer, we can choose $m$ large enough such that $\frac{3q+2}{2}-\frac{1}{2}>q-\frac{1}{2}>\frac{q}{2}>l$ and Proposition 3.2 holds.

\par
{\bf Prove Proposition 3.11.}
\begin{proof}
Write
\be
 \ba{l}
\bar R\left(\frac{k_0^2}{2\sqrt{\alpha
t}\beta}k+k_0\right)(\dta^1_A(k))^{-2}-\bar
R(k_0\pm)k^{-2i\nu}e^{i\frac{k^2}{2}}
=\\
e^{i\frac{\kappa}{2}k^2}e^{i\frac{\kappa}{2}k^2}\bar
R\left(\frac{k_0^2}{2\sqrt{\alpha
t}\beta}k+k_0\right)k^{-2i\nu}e^{i(1-2\kappa)\frac{k^2}{2}(1-\frac{4k_0^6k}{(1-2\kappa)2\sqrt{\alpha
t}\beta \eta^5})}\\
{}\frac{k_0^{-4i\tilde \nu-2i\nu}}{2^{-2i\nu+2i\tilde
\nu}}\frac{(\frac{k^2_0}{2\sqrt{\alpha
t}\beta}k+2k_0)^{-2i\nu}}{(\frac{k^2_0}{2\sqrt{\alpha
t}\beta}k+k_0)^{-4i\nu}}
 ((\frac{k^2_0}{2\sqrt{\alpha
t}\beta}k+k_0+ik_0)(\frac{k^2_0}{2\sqrt{\alpha
t}\beta}k+k_0-ik_0))^{2i\tilde
\nu}\\
{} e^{-2\left(\chi_{\pm}(\frac{k^2_0}{2\sqrt{\alpha
t}\beta}k+k_0)-\chi_{\pm} (k_0)\right)}e^{-2\left(\tilde
\chi_{\pm}'(\frac{k^2_0}{2\sqrt{\alpha t}\beta}k+k_0)
-\tilde\chi_{\pm}'(k_0)\right)}\\
{}-e^{i\frac{\kappa}{2}k^2}e^{i\frac{\kappa}{2}k^2}\bar
R(k_0\pm)k^{-2i\nu}e^{i(1-2\kappa)\frac{k^2}{2}}
 \ea
\ee
 and also divided it into six terms
 \be
   \bar R\left(\frac{k_0^2}{2\sqrt{\alpha t}\beta}k+k_0\right)(\dta^1_A(k))^{-2}-\bar R(k_0\pm)k^{-2i\nu}e^{i\frac{k^2}{2}}
   =e^{i\kappa\frac{k^2}{2}}(\Rmnum{1}+\Rmnum{2}+\Rmnum{3}+\Rmnum{4}+\Rmnum{5}+\Rmnum{6})
 \ee
where
\[
\ba{l}
\Rmnum{1}=e^{i\kappa\frac{k^2}{2}}k^{-2i\nu}[\bar R(\frac{k_0^2}{2\sqrt{\alpha t}\beta}k+k_0)-\bar R(k_0\pm)]\\
\Rmnum{2}=e^{i\kappa\frac{k^2}{2}}k^{-2i\nu}\bar R(\frac{k_0^2}{2\sqrt{\alpha t}\beta}k+k_0)
   \left(e^{i(1-2\kappa)\frac{k^2}{2}(1-\frac{4k_0^6k}{(1-2\kappa)2\sqrt{\alpha t}\beta\eta^5})}-e^{i(1-2\kappa)\frac{k^2}{2}}\right)\\
\Rmnum{3}=e^{i\kappa\frac{k^2}{2}}k^{-2i\nu}\bar R(\frac{k_0^2}{2\sqrt{\alpha t}\beta}k+k_0)
   e^{i(1-2\kappa)\frac{k^2}{2}(1-\frac{4k_0^6k}{(1-2\kappa)2\sqrt{\alpha t}\beta \eta^5})}
   \left(\frac{2^{2i\nu}}{k_0^{2i\nu}}\frac{(\frac{k_0^2}{2\sqrt{\alpha t}\beta}k+k_0)^{4i\nu}}{(\frac{k_0^2}{2\sqrt{\alpha t}\beta}k+2k_0)^{2i\nu}}-1\right)\\
\Rmnum{4}=e^{i\kappa\frac{k^2}{2}}k^{-2i\nu}\bar R(\frac{k_0^2}{2\sqrt{\alpha t}\beta}k+k_0)
   e^{i(1-2\kappa)\frac{k^2}{2}(1-\frac{4k_0^6k}{(1-2\kappa)2\sqrt{\alpha t}\beta \eta^5})}\\
   {}\frac{2^{2i\nu}}{k_0^{2i\nu}}\frac{(\frac{k_0^2}{2\sqrt{\alpha t}\beta}k+k_0)^{4i\nu}}{(\frac{k_0^2}{2\sqrt{\alpha t}\beta}k+2k_0)^{2i\nu}}
   \left(\frac{((\frac{k_0^2}{2\sqrt{\alpha t}\beta}k+k_0)^2+k_0^2)^{2i\tilde \nu}}{2^{2i\tilde\nu}k_0^{4i\tilde\nu}}-1\right)\\
\Rmnum{5}=e^{i\kappa\frac{k^2}{2}}k^{-2i\nu}\bar R(\frac{k_0^2}{2\sqrt{\alpha t}\beta}k+k_0)
   e^{i(1-2\kappa)\frac{k^2}{2}(1-\frac{4k_0^6k}{(1-2\kappa)2\sqrt{\alpha t}\beta\eta^5})}
   \frac{2^{2i\nu}}{k_0^{2i\nu}}\frac{(\frac{k_0^2}{2\sqrt{\alpha t}\beta}k+k_0)^{4i\nu}}{(\frac{k_0^2}{2\sqrt{\alpha t}\beta}k+2k_0)^{2i\nu}}\\
   {}\frac{((\frac{k_0^2}{2\sqrt{\alpha t}\beta}k+k_0)^2+k_0^2)^{2i\tilde \nu}}{2^{2i\tilde\nu}k_0^{4i\tilde\nu}}
   \left(e^{-2\left(\chi_{\pm}(\frac{k^2_0}{2\sqrt{\alpha t}\beta}k+k_0)-\chi_{\pm} (k_0)\right)}-1\right)\\
\Rmnum{6}=e^{i\kappa\frac{k^2}{2}}k^{-2i\nu}\bar R(\frac{k_0^2}{2\sqrt{\alpha t}\beta}k+k_0)
   e^{i(1-2\kappa)\frac{k^2}{2}(1-\frac{4k_0^6k}{(1-2\kappa)2\sqrt{\alpha t}\beta \eta^5})}
   \frac{2^{2i\nu}}{k_0^{2i\nu}}\frac{(\frac{k_0^2}{2\sqrt{\alpha t}\beta}k+k_0)^{4i\nu}}{(\frac{k_0^2}{2\sqrt{\alpha t}\beta}k+2k_0)^{2i\nu}}\\
   {}\frac{((\frac{k_0^2}{2\sqrt{\alpha t}\beta}k+k_0)^2+k_0^2)^{2i\tilde \nu}}{2^{2i\tilde\nu}k_0^{4i\tilde\nu}}
   e^{-2\left(\chi_{\pm}(\frac{k^2_0}{2\sqrt{\alpha t}\beta}k+k_0)-\chi_{\pm} (k_0)\right)}
   \left(e^{-2\left(\tilde\chi_{\pm}'(\frac{k^2_0}{2\sqrt{\alpha t}\beta}k+k_0)-\tilde\chi_{\pm}'(k_0)\right)}-1\right)\\
\ea
\]
Note that $|e^{i\kappa\frac{k^2}{2}}|=e^{-\kappa u^2\frac{2\alpha t \beta^2}{k_0^2}}$. The terms $\Rmnum{1},\Rmnum{2},\Rmnum{3},\Rmnum{4},\Rmnum{5}$ and $\Rmnum{6}$ can be estimated as follows.
\[
\ba{rrl}
|\Rmnum{1}|&\le &|k^{-2i\nu}||e^{i\kappa\frac{k^2}{2}}||\frac{k_0^2}{2\sqrt{\alpha t}\beta}k|||\partial_{k}\bar R(k)||_{L^{\infty}(\bar L_A)}\\
&\le & \frac{C}{\sqrt{t}},
\ea
\]
where $C$ is independent of $k$.
\[
\ba{rrl}
|\Rmnum{2}|&\le &|k^{-2i\nu}||e^{i\kappa\frac{k^2}{2}}|||\bar R||_{L^{\infty}(\bar L_A)}|\frac{d}{ds}e^{i(1-2\kappa)\frac{k^2}{2}(1-s\frac{4k_0^6k}{(1-2\kappa)2\sqrt{\alpha t}\beta\eta^5})}|,\quad 0<s<1\\
&\le &\frac{C}{\sqrt{t}}
\ea
\]
To estimate $\Rmnum{3}$, we write
\[
\ba{rrl}
|\Rmnum{3}|&\le & |k^{-2i\nu}||e^{i\kappa\frac{k^2}{2}}|||\bar R||_{L^{\infty}(\bar L_A)}|e^{i(1-2\kappa)\frac{k^2}{2}(1-\frac{4k_0^6k}{(1-2\kappa)2\sqrt{\alpha t}\beta \eta^5})}|(\Rmnum{3}_1+\Rmnum{3}_2)\\
\ea
\]
where
\[
\ba{l}
\Rmnum{3}_1=\frac{2^{2i\nu}}{k_0^{2i\nu}}\frac{(\frac{k_0^2}{2\sqrt{\alpha t}\beta}k+k_0)^{4i\nu}}{(\frac{k_0^2}{2\sqrt{\alpha t}\beta}k+2k_0)^{2i\nu}}\left[1-\frac{(\frac{k_0^2}{2\sqrt{\alpha t}\beta}k+2k_0)^{2i\nu}}{(2k_0)^{2i\nu}}\right]\\
\Rmnum{3}_2=\left(\frac{\frac{k_0^2}{2\sqrt{\alpha t}\beta}k+k_0}{k_0}\right)^{4i\nu}-1
\ea
\]
The estimate of $\Rmnum{3}_2$ is as follows,
\[
\ba{rrl}
|\Rmnum{3}_2|&=& |\int_{1}^{1+\frac{k_0}{2\sqrt{\alpha t}\beta}k}4i\nu \xi^{4i\nu-1}d\xi|\\
&\le & \frac{C}{\sqrt{t}},
\ea
\]
as $|\xi^{4i\nu-1}|\le c e^{-4\nu arg \xi}\le c$ for $\xi=1+sk\frac{k_0}{2\sqrt{\alpha t}\beta}=1+sue^{i\frac{\pi}{4}},0\le s\le 1,-\eps <u<\infty$.
Since
the first term on the right-hand side of the equation for $\Rmnum{3}_1$ is bounded, namely,
\[
\left|\frac{2^{2i\nu}}{k_0^{2i\nu}}\frac{(\frac{k_0^2}{2\sqrt{\alpha t}\beta}k+k_0)^{4i\nu}}{(\frac{k_0^2}{2\sqrt{\alpha t}\beta}k+2k_0)^{2i\nu}}\right|\le e^{\frac{\pi \nu}{2}}
\]
one obtains an analogous estimate for $\Rmnum{3}_1$. And the estimate for $\Rmnum{4}$ is similar as $\Rmnum{3}$.
\[
\ba{rrl}
|\Rmnum{5}|&\le & C\sup_{0\le s\le 1}|e^{-2se^{-2\left(\chi_{\pm}(\frac{k^2_0}{2\sqrt{\alpha t}\beta}k+k_0)-\chi_{\pm} (k_0)\right)}}|\left|2e^{i\kappa\frac{k^2}{2}}(\chi_{\pm}(\frac{k^2_0}{2\sqrt{\alpha t}\beta}k+k_0)-\chi_{\pm}(k_0))\right|
\ea
\]
using the Lipschitz property of the function $\log \left(\frac{1-|r(\xi)|^2}{1-|r(k_0)|^2}\right),|\xi|\le k_0$, integrating by parts shows that
\[
\ba{rrl}
\left|2e^{i\kappa\frac{k^2}{2}}(\chi_{\pm}(\frac{k^2_0}{2\sqrt{\alpha t}\beta}k+k_0)-\chi_{\pm}(k_0))\right|&\le &C\frac{\log t}{\sqrt{t}},
\ea
\]
The analogous estimates for $\Rmnum{6}$ can be also obtained.
\end{proof}

\end{document}